\newtheorem{theorem}{Theorem}
\newtheorem{corollary}{Corollary}
\newtheorem{proposition}{Proposition}
\newtheorem{lemma}{Lemma}
\newtheorem{example}{Example}
\newcommand{\argmin}{\mathop{\rm argmin}}
\newcommand{\argmax}{\mathop{\rm argmax}}
\newcommand{\less}{\leqslant}
\newcommand{\gre}{\geqslant}
\newcommand{\what}{\widehat}
\newcommand{\wtilde}{\widetilde}
\newcommand{\floor}[1]{\lfloor #1 \rfloor}
\newcommand{\defn}{\ensuremath{\!:=}}
\newcommand{\real}{\ensuremath{\mathbb{R}}}
\newcommand{\Exs}{\ensuremath{\mathbb{E}}}
\newcommand{\xone}{\ensuremath{\widehat{x}^{(1)}}}
\newcommand{\xzero}{\ensuremath{\widehat{x}^{(0)}}}
\newcommand{\wone}{\ensuremath{\widehat{w}^{(1)}}}
\newcommand{\wzero}{\ensuremath{\widehat{w}^{(0)}}}
\DeclarePairedDelimiter\ceil{\lceil}{\rceil}
\begin{document}

\title{Adaptive and Oblivious Randomized Subspace Methods for High-Dimensional Optimization: Sharp Analysis and Lower Bounds}

\author{Jonathan~Lacotte, Mert~Pilanci\footnote{A preliminary version of this work was presented in part~\cite{lacottehighdimopt} at the Advances in Neural Information Processing Systems 32 (NeurIPS 2019).}
\thanks{J. Lacotte and M. Pilanci are with the Electrical Engineering Department at Stanford University.}
}

\maketitle

\begin{abstract}
    We propose novel randomized optimization methods for high-dimensional convex problems based on restrictions of variables to random subspaces. We consider oblivious and data-adaptive subspaces and study their approximation properties via convex duality and Fenchel conjugates. A suitable adaptive subspace can be generated by sampling a correlated random matrix whose second order statistics mirror the input data. We illustrate that the adaptive strategy can significantly outperform the standard oblivious sampling method, which is widely used in the recent literature. We show that the relative error of the randomized approximations can be tightly characterized in terms of the spectrum of the data matrix and Gaussian width of the dual tangent cone at optimum. We develop lower bounds for both optimization and statistical error measures based on concentration of measure and Fano's inequality. We then present the consequences of our theory with data matrices of varying spectral decay profiles. Experimental results show that the proposed approach enables significant speed ups in a wide variety of machine learning and optimization problems including logistic regression, kernel classification with random convolution layers and shallow neural networks with rectified linear units.
\end{abstract}

\begin{IEEEkeywords}
Convex optimization, Dimension reduction, Random subspaces, Randomized singular value decomposition, Kernel methods.
\end{IEEEkeywords}

\ifCLASSOPTIONpeerreview
\begin{center} \bfseries EDICS Category: 3-BBND \end{center}
\fi
\IEEEpeerreviewmaketitle

\section{Introduction}
\IEEEPARstart{H}{igh}-dimensional optimization problems are becoming ever more common in applications such as computer vision, natural language processing, robotics, medicine, genomics, seismology or weather forecasting, where the volume of the data keeps increasing at a rapid rate. It is also standard practice to use high-dimensional representations of data measurements such as random Fourier features~\cite{rahimi2008random} or pre-trained neural networks' features~\cite{yosinski2014transferable, ghorbani2020neural}. In this work, we are interested in solving a convex optimization problem of the form
\begin{align}
\label{eqprimal}
    x^* \defn \argmin_{x \in \real^d} \left\{ F(x) \defn f(Ax) + \frac{\lambda}{2} \|x\|_2^2 \right\}\,,
\end{align}
where $A \in \real^{n \times d}$ is a data matrix and $f$ is a convex function. Such convex optimization problems are typically formulated to fit a linear prediction model, or, they may occur as the subroutine of an optimization method, e.g., proximal optimization~\cite{parikh2014proximal}; prox-linear method for convex-smooth composite objectives~\cite{drusvyatskiy2019efficiency}. Moreover, several standard non-convex neural network training problems can be equivalently stated as convex optimization problem in higher dimensions \cite{pilanci2020neural,ergen2020training,ergen2020convex,ergen2020convexdeep}. In the large-scale setting $d \gg 1$, random projections are an effective way of performing dimensionality reduction~\cite{vempala2005random, mahoney2011randomized, drineas2016randnla, avron2010blendenpik, meng2014lsrn, pilanci2016fast}, and the common practice is to employ oblivious sampling or sketching matrices, which are typically randomized and fixed ahead of time. Furthermore, sketches can be iteratively applied  \cite{pilanci2016iterative,lacotte2019faster,ozaslan2019iterative,pilanci2017newton} or averaged in independent trials \cite{bartan2020distributed,bartan2020distributed2,derezinski2020debiasing} to reduce the approximation error.  However, it is not clear whether one can do better by adapting the sketching matrices to data. In fact, \emph{we will show that adaptive sketching matrices can significantly improve the approximation quality of the optimal solution of a convex smooth optimization problem, and we will characterize the recovery error in terms of the smoothness of the objective function and the spectral decay of the data matrix. Furthermore, we establish lower-bounds on the performance of the oblivious sketch that exhibit its fundamental limitations.}. 

Although the oracle complexity of first-order optimization methods has the property of being dimension-free~\cite{bubeck2015convex}, the cost of forming gradients and manipulating data matrices may be computationally prohibitive in large-scale settings, let alone second-order methods involving hessian computations. As a result, many sketching-based algorithms~\cite{pilanci2017newton, qu2016sdna, doikov2018randomized, luo2016efficient, gower2019rsn} have been specifically designed to address the computational issues of the Newton method by reducing the cost of solving the linear Newton system. The Newton sketch algorithm~\cite{pilanci2017newton} addresses the case where $n \gg d$ and $f$ is separable, and solves an approximate Newton system based on a sketch of the data matrix. Stochastic Dual Newton Ascent~\cite{qu2016sdna} requires knowledge of a global upper bound on the Hessian and then solves an approximate Newton system using random principal sub-matrices of that upper bound. The Randomized Subspace Newton (RSN) method~\cite{gower2019rsn} uses, at each iteration, an approximate descent direction $S (S^\top H S)^\dagger S^\top g$ where $S^\top $ is a $m \times d$ random embedding with $m \ll d$ and $H$ and $g$ are respectively the Hessian and gradient at the current iterate. The Randomized Block Cubic Newton method~\cite{doikov2018randomized} addresses block-separable convex optimization problems and combines the ideas of randomized coordinate descent~\cite{nesterov2012efficiency} and cubic regularization~\cite{nesterov2006cubic}. 

In this work, we take a perspective which is agnostic to the optimization algorithm. Our goal is to formulate a low-dimensional optimization problem of the form
\begin{align}
\label{eqsketchedprimal}
    \alpha^* \in \argmin_{\alpha \in \real^m} \left\{F(S\alpha) = f(AS \alpha) + \frac{\lambda}{2} \|S\alpha\|_2^2 \right\}\,,
\end{align}
where $S$ is a $d \times m$ random embedding, and then to construct a (potentially nonlinear) recovery map $\varphi : \real^m \to \real^d$ such that $\what x \defn \varphi(\alpha^*)$ is a close approximation of $x^*$, as measured by the relative recovery error
\begin{align}
    \frac{\|\what x - x^*\|_2}{\|x^*\|_2}\,.
\end{align}
The low-dimensional formulation~\eqref{eqsketchedprimal} draws connections with the aforementioned randomized Newton methods: using the linear reconstruction map $\varphi(\alpha) = S \alpha$, the Newton method applied to~\eqref{eqsketchedprimal} yields the update $\alpha_{t+1} = \alpha_t - \eta_t (S^\top \nabla^2 F(S\alpha_t) S)^\dagger S^\top \nabla F(S\alpha_t)$, i.e., $S\alpha_{t+1} = S\alpha_t - \eta_t S(S^\top \nabla^2 F(S\alpha_t) S)^\dagger S^\top \nabla F(S\alpha_t)$ which is the update of the RSN method. However, our perspective is different, as one is free to use any optimization method to solve for $\alpha^*$ and any reconstruction map $\varphi$.

Our approach falls within the scope of \emph{random subspace optimization methods}, where we restrict the range of the optimization variable to a random lower-dimensional subspace. In the context of convex smooth optimization, the authors of~\cite{zhang2013recoveringCOLT, zhang2014recoveringIEEE} propose to project the $d$-dimensional features of the data matrix $A$ using an \emph{oblivious} random embedding $S \in \real^{d \times m}$, chosen independently of the data. Based on the solution $\alpha^*_\dagger$ of a low-dimensional optimization problem similar to~\eqref{eqsketchedprimal}, they compute the dual solution $y^* = \nabla f(AS\alpha^*_\dagger)$ and then set $\what x \defn -\lambda^{-1}A^\top y^*$. Interestingly, the resulting recovery error is smaller than the error of the linear estimator $S\alpha^*_\dagger$, although these guarantees hold under some arguably restrictive assumptions (e.g., low-rank data matrix $A$, or, $x^*$ approximately lies in the span of the top singular vectors of $A$). On the other hand, such a linear estimator has been considered in~\cite{xu2017efficient} in the context of generalization bounds for empirical risk minimization of convex Lipschitz loss functions. They use an \emph{adaptive} (data-dependent) random embedding $S$ of the form $S = A^\top \wtilde S$, where $\wtilde S$ is itself oblivious. The authors study the generalization error of the approximate solution, and relate it to the norm of the tail singular values of $A$. Our approach draws connections with these two works: similarly to~\cite{xu2017efficient}, we will consider adaptive random embeddings of the form $S = A^\top \wtilde S$; similarly to~\cite{zhang2013recoveringCOLT, zhang2014recoveringIEEE}, we will consider a non-linear dual mapping $\varphi$ and study the recovery error of $\varphi(\alpha^*)$. As for the intriguing connection between the RSN method and the linear recovery map $\varphi(\alpha) = S\alpha$, the dual recovery map has been shown~\cite{wang2017sketching} to be equivalent to the Hessian sketch~\cite{pilanci2015randomized} applied to the Fenchel dual program.

The dual recovery map has also been analyzed in the specific context of sparse recovery. The authors of~\cite{zhang2013recoveringCOLT, zhang2014recoveringIEEE} establish that, with an oblivious random embedding $S$, accurate recovery is guaranteed for a sketch size scaling at least as the sparsity level of $x^*$ (i.e., its number of non-zero entries), under the assumption that the support set of $x^*$ includes the most important coordinates of the data matrix (see Theorem~4 in~\cite{zhang2014recoveringIEEE}). In the context of support vector machines (SVM) classification where the dual solution may be sparse (as measured by the number of support vectors), the authors of~\cite{yang2015theory, yao2018high} propose to add a sparsity-promoting regularization term to the dual of the low-dimensional problem~\eqref{eqsketchedprimal}, and provide recovery guarantees in terms of the sparsity of the dual solution.

Besides accurate recovery in convex smooth optimization through a low-dimensional formulation, random subspace optimization with oblivious embeddings has been used and analyzed for SVM-classification and the preservation of margins~\cite{blum2005random, shi2012margin, paul2013random}, for large-scale trust region problems with oblivious embeddings~\cite{vu2017random}, for scaling up linear systems solvers~\cite{gower2015randomized} and for statistically optimal prediction through kernel ridge regression~\cite{yang2017randomized}. In the latter work, the authors approximate an empirical kernel matrix $K$ by sketching its columns with an oblivious embedding $\wtilde S$. For a kernel based on a finite-dimensional feature space, i.e., $K = \Phi \Phi^\top$, the sketch $K\wtilde S$ satisfies $K\wtilde S = \Phi S$ where $S = \Phi^\top \wtilde S$. Hence, this corresponds to a sketch of the data matrix $\Phi$ with the adaptive embedding $S$.

Adaptive embeddings of the form $S = A^\top \wtilde S$ (where $\wtilde S$ is itself oblivious) are reminiscent of randomized low-rank approximations methods~\cite{halko2011, boutsidis2013, witten2015randomized, derezinski2020precise}. More precisely, let us denote $P_S \defn S(S^\top S)^\dagger S^\top$ the linear projector onto the range of $S$ and $P_S^\perp \defn I-P_S$ the projector onto its orthogonal complement. Then, for commonly used embeddings $\wtilde S \in \real^{n \times m}$ such as Gaussian or the subsampled randomized Hadamard transform (SRHT), it holds with high probability that 
\begin{align}
    \|P_S^\perp A^\top\|_2 \lesssim \|A-A_m\|_2 + \frac{1}{\sqrt{m}}\|A-A_m\|_F\,,
\end{align} 
where $A_m$ is the best rank-$m$ approximation of $A$. Unsurprisingly, the aforementioned generalization error guarantees based on adaptive sketching proposed in~\cite{xu2017efficient} depend on $S$ through the critical quantity $\|P_S^\perp A^\top\|_2$, and so do the guarantees we develop in this manuscript. In other words, our randomized adaptive subspace optimization method is an approximate way of restricting the optimization variable to the span of the top $m$ singular vectors of $A$. Naturally, using the deterministic embedding $S=V_m$, where $V_m$ is the matrix of top $m$ right singular vectors of $A$, could yield stronger guarantees than our adaptive sketch method. However, there are strong computational benefits in using a randomized embedding~\cite{halko2011} instead of an exact singular value decomposition (SVD) algorithm, and the right combination of this randomized method along with the dual recovery map is not yet understood in the case of convex smooth optimization.

Our work relates to the considerable amount of literature on randomized approximations of high-dimensional kernel matrices $K$. A popular approach consists of building a low-rank factorization of the matrix $K$, using a random subset of its columns~\cite{williams2001using, smola2004tutorial, drineas2005nystrom, kumar2012sampling}. The so-called \emph{Nystrom method} has proven to be effective empirically~\cite{yang2012nystrom}, and many research efforts have been devoted to improving and analyzing the performance of its many variants (e.g., uniform column sub-sampling, leverage-score based sampling), especially in the context of regularized regression~\cite{bach2013sharp, derezinski2018reverse, derezinski2020determinantal}. For conciseness of this manuscript, we will not consider explicitly column subsampling matrices, although most of our results may extend to them.

\subsection{Notations and assumptions}

We work with a data matrix $A \in \real^{n \times d}$ and we refer to $n$ as the sample size and $d$ the features' dimension. We denote by $\rho \less \min\{n,d\}$ the rank of $A$, and by $\sigma_1 \gre \sigma_2 \gre \dots \gre \sigma_\rho > 0$ its non-zero singular values. Given an embedding $S \in \real^{d \times m}$, we define the linear projector $P_S$ onto the range of $S$ and the linear projector $P_S^\perp$ onto the orthogonal complement of the range of $S$ as
\begin{align}
    P_S \defn S (S^\top S)^\dagger S^\top\,,\qquad P_S^\perp \defn I_d - P_S\,,
\end{align}
where the superscript $^\dagger$ denotes the pseudo-inverse. For a matrix $M \in \real^{p \times q}$ with arbitrary dimensions $p$, $q \gre 1$, we denote by $\|M\|_2$ its operator norm (i.e., its largest singular value), and by $\|M\|_F$ its Frobenius norm. For a real number $\eta \gre 1$, we denote the $L_\eta$-norm of a vector $w \in \real^p$ as $\|w\|_\eta = \left( \sum_{i=1}^p |w_i|^\eta\right)^{\frac{1}{\eta}}$, and the corresponding unit $L_\eta$-ball as $\mathcal{B}_\eta^p \defn \{w \in \real^p \mid \|w\|_\eta \less 1\}$. We introduce several measures of the size of a set $T \subset \real^p$: its radius is $\text{rad}(T) \defn \sup_{t \in T} \|t\|_2$, its diameter is $\text{diam}(T) \defn \sup_{t, t^\prime \in T}\|t - t^\prime\|_2$, and its Gaussian width is 
\begin{align}
\label{eqndefngaussianwidth}
    \omega(T) \defn \mathbb{E}_g\!\left\{\sup_{t \in T}\,\langle t, g\rangle\right\}\,,
\end{align}
where $g \sim \mathcal{N}(0,I_p)$. 

We work with a real-valued objective function $f$ which is defined over $\real^n$. Unless stated otherwise, we will assume the function $f$ to be convex, differentiable and $\mu$-strongly smooth for some $\mu > 0$, i.e., 
\begin{align}
    \|\nabla f(y) - \nabla f(x)\|_2 \less \mu \cdot \|y-x\|_2
\end{align}
for any $x, y \in \real^n$. We introduce the convex (or Fenchel) conjugate of $f$, defined as 
\begin{align}
    f^*(z) \defn \sup_{w \in \real^n} \left\{\langle w, z \rangle - f(w)\right\}\,.
\end{align}
We recall a few results from convex analysis (we refer the reader to the books~\cite{boyd2004convex, rockafellar2015convex} for more background and details). The domain of $f^*$ is defined as $\text{dom}\,f^* \defn \{z \in \real^n \mid f^*(z) < +\infty\}$ and is a closed convex set. The function $f^*$ is convex. Smoothness of $f$ implies that $f^*$ is $(1/\mu)$-strongly convex, i.e., $\langle y-z, g_y - g_z \rangle \gre \frac{1}{\mu} \cdot \|y-z\|_2^2$ for any $y$, $z \in \text{dom}\, f^*$ and for any $g_y \in \partial f^*(y)$ and $g_z \in \partial f^*(z)$, where $\partial f^*(\cdot)$ denotes the subgradient operator of $f^*$. We denote by $\text{int dom}\,f^*$ the interior of the domain of $f^*$.

We work with an arbitrary regularization parameter $\lambda > 0$. For comparing our guarantees with other methods, we will typically assume that $\frac{\lambda}{\mu}$ is within the range of the eigenvalues of the matrix $AA^\top$, i.e., $\frac{\lambda}{\mu} \approx \sigma_K^2$ for some threshold rank $1 \less K \less \rho$. This corresponds to a common (or desirable) choice of $\lambda$ in practice, when the goal is either to improve the condition number of the primal program~\eqref{eqprimal} for numerical stability purposes (a.k.a.~Tikhonov regularization), or, to discard the effect of the small singular values of $A$ (i.e., noise) for statistical estimation (a.k.a.~$\ell_2$-shrinkage).

For arbitrary dimensions $p \less q$, we say that $U \in \real^{p \times q}$ is a (partial) Haar matrix in $\real^{q}$ if $U U^\top = I_p$ and the range of $U$ is uniformly distributed among the $p$-dimensional subspaces of $\real^q$.

\subsection{Randomized sketches}

Given a dimension $p \in \{n,d\}$ and a \emph{sketch size} $m \less p$, we work with two oblivious classes of $m$-dimensional random embeddings $S \in \real^{p \times m}$, namely, Gaussian embeddings with independent and identically distributed (i.i.d.)~Gaussian entries $S_{ij} \sim \mathcal{N}(0,1/m)$, and, the SRHT~\cite{ailon2006approximate, sarlos2006improved, dobriban2019asymptotics, lacotte2020srht}, defined as $S = \sqrt{\frac{p}{m}} \cdot D \cdot H\cdot R$ where $R \in \real^{p \times m}$ is a column subsampling matrix, $D \in \real^{p \times p}$ is a diagonal matrix with independent entries uniformly sampled from $\{\pm 1\}$ and $H \in \real^{p \times p}$ is the Walsh-Hadamard transform. The $k$-th Walsh-Hadamard transform $H \equiv H_k$ is obtained by the recursion $H_0 = [1]$ and $H_{k+1} = \begin{bmatrix} H_k & H_k \\ H_k & -H_k \end{bmatrix}$, so that the dimension of $H_k$ is $2^k \times 2^k$. This requires the dimension $p$ to be a power of $2$. If not, a standard practice for sketching a matrix $M$ with $p$ columns is to form the matrix $\wtilde M = [M, 0]$ which has an additional number of $\wtilde p - p$ columns filled with zeros, where $\wtilde p$ is the smallest power of $2$ greater than $p$ (note that $\wtilde p \less 2p$), and to use the sketch $\wtilde M S$ with a SRHT $S \in \real^{\wtilde p \times m}$. For conciseness, in our formal statements involving a SRHT, we will implicitly assume that the relevant dimension $p$ is a power of $2$, although this does not restrict the applicability of our results thanks to the aforementioned zero-padding trick. In contrast to a Gaussian embedding, the SRHT verifies the orthogonality property $S^\top S = \frac{p}{m}\cdot I_m$. Furthermore, the recursive structure of the Walsh-Hadamard transform enables fast computation of a sketch $MS$, in time $\mathcal{O}(\text{nnz}(M) \log m)$ where $\text{nnz}(M)$ is the number of non-zero entries of $M$ (see, for instance,~\cite{boutsidis2013} for details), as opposed to a Gaussian embedding which requires time $\mathcal{O}(\text{nnz}(M)m)$ (using classical matrix multiplication).

\subsection{Oblivious vs Adaptive Sketches}

Data adaptive sketches of the form $S = A^\top \wtilde S$, where $\wtilde S$ is an oblivious sketching matrix, aim to mirror the correlation structure of the original data matrix $A$. For instance, for a Gaussian embedding $\wtilde S$, we observe that the columns of $S=A^\top \wtilde S$ are distributed as $S_i\sim \mathcal{N}(0,A^TA)$. Furthermore, in many statistical applications, assuming that the rows of $A$ are independent (and normalized) data sample vectors in $\real^d$, the matrix $A^T A$ corresponds to the $d\times d$ empirical covariance matrix. Therefore, one can expect the adaptive sketch to provide a more faithful summary of the data. However, this increased accuracy comes with the price of an additional pass over the dataset to compute the product $A^\top \wtilde S$, which is typically negligible.

\section{An overview of our contributions}
\label{sectionpreliminaries}

We introduce the Fenchel dual of the primal program~\eqref{eqprimal},
\begin{align}
\label{eqdual}
    z^* \in \argmin_{z \in \real^n} \left\{ f^*(z) + \frac{1}{2\lambda} \|A^\top z\|_2^2\right\}\,,
\end{align}
and the Fenchel dual of the sketched primal program~\eqref{eqsketchedprimal},
\begin{align}
\label{eqsketcheddual}
    y^* \in \argmin_{y \in \real^n} \left\{f^*(y) + \frac{1}{2\lambda}\|P_S A^\top y\|_2^2\right\}\,.
\end{align}
\begin{proposition}[Strong Fenchel duality]
\label{propositionfenchelduality}
There exist a unique \emph{primal} solution $x^* \in \real^d$ to~\eqref{eqprimal} and a unique \emph{dual} solution $z^* \in \textrm{dom}\,f^*$ to~\eqref{eqdual}, and these solutions are related through the Karush-Kuhn-Tucker (KKT) conditions $x^* = -\frac{A^\top z^*}{\lambda}$ and $z^* = \nabla f(Ax^*)$. If the function $f$ is strictly convex, then $z^* \in \text{int dom}\,f^*$.
\end{proposition}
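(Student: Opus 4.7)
The plan is to proceed in four stages: existence and uniqueness of $x^*$, construction of the dual via Fenchel--Rockafellar, uniqueness of $z^*$, and finally the interior statement under strict convexity.

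First I would handle the primal. The map $x \mapsto \frac{\lambda}{2}\|x\|_2^2$ is $\lambda$-strongly convex and $f \circ A$ is convex, so $F$ is $\lambda$-strongly convex, giving at most one minimizer. For existence, convexity of $f$ yields $f(Ax) \gre f(0) + \langle \nabla f(0), Ax\rangle$, so $F(x) \gre \frac{\lambda}{2}\|x\|_2^2 - \|A^\top \nabla f(0)\|_2 \|x\|_2 + f(0)$, which is coercive; combined with lower semicontinuity, the minimum is attained.

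Next I would derive the dual by the standard Fenchel--Rockafellar splitting. Introduce an auxiliary variable $u = Ax$ and write the Lagrangian $L(x,u,z) = f(u) + \frac{\lambda}{2}\|x\|_2^2 + \langle z, Ax - u\rangle$. Minimizing over $u$ produces $-f^*(z)$, while minimizing the quadratic in $x$ produces $x = -A^\top z/\lambda$ with value $-\frac{1}{2\lambda}\|A^\top z\|_2^2$. Hence the dual objective is $-f^*(z) - \frac{1}{2\lambda}\|A^\top z\|_2^2$, i.e.\ program~\eqref{eqdual}. Strong duality holds by the standard qualification: $\mathrm{dom}\,f = \real^n$ and $\mathrm{dom}(\tfrac{\lambda}{2}\|\cdot\|_2^2) = \real^d$, so the relative interior condition is trivially met. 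Reading off the first-order conditions gives $x^* = -A^\top z^*/\lambda$ from the minimization in $x$, and $z^* \in \partial f(Ax^*) = \{\nabla f(Ax^*)\}$ from the minimization in $u$ (differentiability of $f$ ensures the subdifferential is a singleton). This delivers the KKT relations.

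For uniqueness of $z^*$, I would invoke the classical conjugate-duality fact that $\mu$-strong smoothness of $f$ is equivalent to $(1/\mu)$-strong convexity of $f^*$ on $\mathrm{dom}\,f^*$. Adding the convex term $\frac{1}{2\lambda}\|A^\top z\|_2^2$ preserves strong convexity, and the dual objective is closed proper convex with $z^*$ attaining the finite optimum, hence $z^* \in \mathrm{dom}\,f^*$ and is the unique minimizer.

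The last statement is the one piece that requires more care. When $f$ is strictly convex and differentiable on all of $\real^n$, $f$ is essentially strictly convex, and by Rockafellar's duality between essential strict convexity and essential smoothness (see \cite{rockafellar2015convex}, Theorems~26.1 and~26.3), $f^*$ is essentially smooth. Essential smoothness implies in particular that $\partial f^*(z) = \emptyset$ for every $z \notin \mathrm{int\,dom}\,f^*$. Applying the conjugacy identity to $z^* = \nabla f(Ax^*)$ gives $Ax^* \in \partial f^*(z^*)$, so $\partial f^*(z^*)$ is nonempty and therefore $z^* \in \mathrm{int\,dom}\,f^*$. The main obstacle is precisely this last appeal to Rockafellar's Legendre-type correspondence; everything else is either a strong-convexity / coercivity argument or a direct Lagrangian computation.
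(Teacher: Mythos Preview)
Your proof is correct and follows essentially the same route as the paper: strong convexity for primal uniqueness, Fenchel--Rockafellar duality with the trivial qualification for existence of $z^*$ and the KKT relations, and Rockafellar's Legendre-type correspondence for the interior claim (the paper cites Theorem~26.5 in~\cite{rockafellar2015convex} directly, which packages the same content as your appeal to Theorems~26.1 and~26.3). The one minor difference is in the uniqueness of $z^*$: you argue via $\mu$-smoothness of $f$ implying $(1/\mu)$-strong convexity of $f^*$, whereas the paper simply observes that the KKT relation $z^* = \nabla f(Ax^*)$ together with uniqueness of $x^*$ already pins down $z^*$ --- a slightly more elementary step that does not require the smoothness constant.
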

\begin{proposition}[Strong Fenchel duality on sketched program]
\label{propositionsketchedfenchelduality}
There exist a sketched primal solution $\alpha^* \in \real^m$ to~\eqref{eqsketchedprimal} and a sketched dual solution $y^* \in \textrm{dom}\,f^*$ to~\eqref{eqsketcheddual}, and these solutions are related through the KKT conditions $S^\top S\alpha^* = -\frac{S^\top A^\top y^*}{\lambda}$ and $y^* = \nabla f(AS\alpha^*)$. If the function $f$ is strictly convex, then $y^* \in \text{int dom}\,f^*$.
\end{proposition}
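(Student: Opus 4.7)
The plan is to mirror the standard Fenchel-Rockafellar derivation used for Proposition~\ref{propositionfenchelduality}, with extra care given to the fact that the sketched primal objective need not be strictly convex in $\alpha$ whenever $S^\top S$ is singular. First I would establish existence of a primal minimizer by reparametrizing the sketched problem over $\beta \defn S\alpha$ constrained to $\text{range}(S)$. On this closed linear subspace, the objective $f(A\beta) + \frac{\lambda}{2}\|\beta\|_2^2$ is coercive and $\lambda$-strongly convex, so it admits a unique minimizer $\beta^*$, and any preimage $\alpha^*$ with $S\alpha^* = \beta^*$ solves~\eqref{eqsketchedprimal}.

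Next I would derive the dual program by applying Fenchel-Rockafellar duality to the split formulation $\min_{\alpha,u}\,\{f(u) + \frac{\lambda}{2}\|S\alpha\|_2^2 : u = AS\alpha\}$. Introducing a dual variable $y \in \real^n$ and minimizing the Lagrangian first in $u$ produces $-f^*(y)$, while the remaining quadratic $\frac{\lambda}{2}\alpha^\top S^\top S\alpha + y^\top AS\alpha$ in $\alpha$ has finite minimum value $-\frac{1}{2\lambda}\|P_S A^\top y\|_2^2$, because $S^\top A^\top y$ always lies in $\text{range}(S^\top S)$ and $S(S^\top S)^\dagger S^\top = P_S$. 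This reproduces~\eqref{eqsketcheddual}. Since $f$ is $\mu$-strongly smooth, $f^*$ is $(1/\mu)$-strongly convex on $\text{dom}\,f^*$, so this dual objective admits a unique minimizer $y^* \in \text{dom}\,f^*$.

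I would then verify the KKT pairing directly. First-order optimality of the primal objective in $\alpha$ at $\alpha^*$ gives $\lambda S^\top S\alpha^* + S^\top A^\top \nabla f(AS\alpha^*) = 0$, so setting $\what y \defn \nabla f(AS\alpha^*)$ yields the first KKT identity $S^\top S\alpha^* = -S^\top A^\top \what y/\lambda$. Using the algebraic identity $S(S^\top S)^\dagger S^\top S = S$, this equation implies $P_S A^\top \what y = -\lambda S\alpha^*$, and hence $\frac{1}{\lambda} A P_S A^\top \what y = -AS\alpha^*$. Combined with the conjugacy relation $AS\alpha^* \in \partial f^*(\what y)$ that follows from $\what y = \nabla f(AS\alpha^*)$, this is exactly the subdifferential optimality condition $0 \in \partial f^*(\what y) + \frac{1}{\lambda} A P_S A^\top \what y$ for~\eqref{eqsketcheddual}, so $\what y = y^*$ by uniqueness of the dual minimizer.

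Finally, if $f$ is strictly convex in addition to being smooth on all of $\real^n$, then standard Legendre-Fenchel conjugacy results guarantee that $\nabla f$ is a bijection from $\real^n$ onto $\text{int dom}\,f^*$, so $y^* = \nabla f(AS\alpha^*) \in \text{int dom}\,f^*$. The main subtlety in the argument is the correct handling of the pseudo-inverse of $S^\top S$: the identity $S(S^\top S)^\dagger S^\top S = S$ is the key algebraic ingredient that lets the sketched KKT system close up, since without it the primal gradient condition and the projector $P_S$ appearing in the dual would not match consistently.
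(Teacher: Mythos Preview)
Your proof is correct and more explicit than the paper's. The paper's own argument is a three-line invocation of results from Rockafellar's \emph{Convex Analysis}: Corollary~31.2.1 for existence of primal/dual optima and strong duality, Theorem~31.3 for the KKT relations, and Theorem~26.5 for the Legendre bijection $\nabla f:\real^n \to \text{int dom}\,f^*$ in the strictly convex case. You instead work everything out by hand: the reparametrization $\beta = S\alpha$ to force coercivity, the Lagrangian derivation of the dual (with explicit handling of the pseudo-inverse to produce the projector $P_S$), and the direct verification that $\what y = \nabla f(AS\alpha^*)$ solves the dual first-order condition. The net effect is the same, but your route is self-contained and makes transparent the point you flag at the end---that the identity $S(S^\top S)^\dagger S^\top S = S$ is what reconciles the primal stationarity condition with the projector in the dual---whereas the paper absorbs this into the cited theorems. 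One minor remark: your sentence ``this dual objective admits a unique minimizer'' from strong convexity of $f^*$ alone really only gives uniqueness; existence is in fact established constructively in your next paragraph when you exhibit $\what y$ as a critical point, so the argument is fine but the ordering slightly understates where existence actually comes from.
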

Strong Fenchel duality is critical to understand the influence of the right-sketch $AS$ on the high-dimensional problem \eqref{eqprimal}. By duality, the right-sketch is identical to employing a left-sketch $P_S A^T$ for the dual problem \eqref{eqsketcheddual} where $P_S=S(S^TS)^\dagger S^T$ is the range space projector of $S$. As it will be shown, the sketch is significantly more accurate when the adaptive embedding $S=A^\top \wtilde S$ is employed. 

\subsection{Low-dimensional estimators}

Based on a low-dimensional solution $\alpha^*$ to~\eqref{eqsketchedprimal}, we focus on two candidate estimators of $x^*$. The most natural candidate is given by the linear mapping
\begin{align}
    \xzero \defn S\alpha^*\,,
\end{align}
which we refer to as the \emph{zero-order} estimator. On the other hand, the first-order optimality conditions $x^* = G_\lambda(x^*)$ where $G_\lambda(x) \defn -\frac{1}{\lambda} A^\top \nabla f(A x)$ suggest the estimator
\begin{align}
    \xone \defn G_\lambda(S\alpha^*)
\end{align}
and we refer to it as the \emph{first-order} estimator of $x^*$. We note that $\xone = G_\lambda(\xzero) = \xzero - \frac{1}{\lambda}\nabla F(\xzero)$, i.e., the first-order estimator $\xone$ is the result of applying a gradient step correction to $\xzero$ with step size $1/\lambda$. Moreover, it holds that
\begin{align}
    x^* = -\lambda^{-1} A^\top z^*\,,\qquad \xone = -\lambda^{-1} A^\top y^*\,.
\end{align}
In contrast to $\xzero$, the first-order estimator $\xone$ is the result of a linear mapping of the sketched dual solution $y^*$. Except for a quadratic function $f$, this corresponds to a non-linear transformation of $\alpha^*$. The estimator $\xzero$ is computed solely based on the sketched data matrix $AS$, whereas $\xone$ uses an additional call to $A$ through the mapping $G_\lambda$. These observations would suggest a better performance for $\xone$.

\subsection{Main contribution}

Our analysis involves a canonical geometric object in convex analysis, namely, the \emph{tangent cone} of the domain of $f^*$ at the dual solution $z^*$, defined as
\begin{align}
    \mathcal{T}_{z^*} \defn \{t \cdot (y - z^*) \mid t \gre 0,\,\,y \in \text{dom} f^*\}\,.
\end{align}
The tangent cone is the intersection of the supporting hyperplanes of the domain of $f^*$ at $z^*$. A critical quantity in our error guarantees is the maximal singular value $\mathcal{Z}_f$ of the matrix $P_S^\perp A^\top$ restricted to the spherical cap
\begin{align}
\label{eqndefnsphericalcap}
    \mathcal{C}_{z^*} \defn \mathcal{T}_{z^*} \cap \mathcal{B}_2^n
\end{align}
and formally defined as
\begin{align}
\label{eqndefnZf}
    \mathcal{Z}_f \defn \sup_{\Delta \in \mathcal{C}_{z^*}} \,\|P_S^\perp A^\top \Delta\|_2\,.
\end{align}
Our main result is the following deterministic upper bound on the relative recovery error of the first-order estimator $\xone$, which shows in particular that $\xone$ has better performance than $\xzero$, at least by a factor one half.
\begin{theorem}[Recovery error of the first-order estimator]
\label{theoremzeroversusfirstorder}
Let $S \in \real^{d \times m}$ be an embedding matrix, and let $\alpha^*$ be a minimizer of the sketched program~\eqref{eqsketchedprimal}. Under the condition $\lambda \gre 2 \mu \mathcal{Z}_f^2$, it holds that
\begin{align}
\label{eqnboundfirstorder}
    &\frac{{\|\xone-x^*\|}_2}{\|x^*\|_2} \less \sqrt{\frac{\mu}{2 \lambda}} \cdot \mathcal{Z}_f \cdot \min\!\left\{1, \frac{{\|\xzero-x^*\|}_2}{\|x^*\|_2}\right\}\,.
\end{align}
\end{theorem}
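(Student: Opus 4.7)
The plan is to work entirely in the dual and bound $\Delta \defn y^* - z^*$. By Propositions~\ref{propositionfenchelduality} and~\ref{propositionsketchedfenchelduality}, $\xone - x^* = -\lambda^{-1} A^\top \Delta$ and $A^\top z^* = -\lambda x^*$, so it suffices to control $\|A^\top \Delta\|_2^2 = u^2 + v^2$, where I abbreviate $u \defn \|P_S A^\top \Delta\|_2$, $v \defn \|P_S^\perp A^\top \Delta\|_2$, and $M \defn \|P_S^\perp A^\top z^*\|_2 = \lambda\|P_S^\perp x^*\|_2$.

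First I would derive an energy inequality. Since $f$ is $\mu$-smooth, $\partial f^*$ is $(1/\mu)$-strongly monotone; applying this to the two KKT subgradients $-\lambda^{-1} A P_S A^\top y^* \in \partial f^*(y^*)$ and $-\lambda^{-1} A A^\top z^* \in \partial f^*(z^*)$, expanding $y^* = z^* + \Delta$, and using $I - P_S = P_S^\perp$ and $P_S^2 = P_S$ produces
\begin{equation*}
\langle P_S^\perp A^\top z^*,\, P_S^\perp A^\top \Delta\rangle \;\gre\; \tfrac{\lambda}{\mu}\|\Delta\|_2^2 + u^2.
\end{equation*}
Cauchy--Schwarz on the left then gives $Mv \gre (\lambda/\mu)\|\Delta\|_2^2 + u^2$. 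The second ingredient is purely geometric: since $y^*,z^* \in \text{dom}\,f^*$, the vector $\Delta$ lies in the tangent cone $\mathcal{T}_{z^*}$, so $\Delta/\|\Delta\|_2 \in \mathcal{C}_{z^*}$, and the definition~\eqref{eqndefnZf} of $\mathcal{Z}_f$ yields $v \less \mathcal{Z}_f \|\Delta\|_2$.

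The main technical step is to combine these two constraints sharply, which is where the factor $1/2$ in the theorem arises and which I expect to be the main obstacle. My plan is to eliminate $\|\Delta\|_2$ in favor of $v$ using $\|\Delta\|_2^2 \gre v^2/\mathcal{Z}_f^2$, which turns the energy inequality into $u^2 \less Mv - (\lambda/(\mu\mathcal{Z}_f^2))v^2$, and therefore
\begin{equation*}
\|A^\top \Delta\|_2^2 \;=\; u^2 + v^2 \;\less\; Mv - c\, v^2, \qquad c \defn \tfrac{\lambda}{\mu\mathcal{Z}_f^2} - 1,
\end{equation*}
with $c \gre 1$ by the hypothesis $\lambda \gre 2\mu\mathcal{Z}_f^2$. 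Maximizing the concave quadratic $v \mapsto Mv - cv^2$ over $v \gre 0$ then gives $\|A^\top \Delta\|_2^2 \less M^2/(4c) = \mu\mathcal{Z}_f^2 M^2/(4(\lambda - \mu\mathcal{Z}_f^2))$, and invoking the hypothesis one more time in the form $\lambda - \mu\mathcal{Z}_f^2 \gre \lambda/2$ sharpens this to $\|A^\top \Delta\|_2^2 \less (\mu\mathcal{Z}_f^2/(2\lambda))\, M^2$.

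To conclude, I would translate back via $M = \lambda\|P_S^\perp x^*\|_2$ to obtain $\|\xone - x^*\|_2 \less \sqrt{\mu/(2\lambda)}\,\mathcal{Z}_f\,\|P_S^\perp x^*\|_2$. The two alternatives inside the minimum in~\eqref{eqnboundfirstorder} follow from the trivial bound $\|P_S^\perp x^*\|_2 \less \|x^*\|_2$ and from the observation that $\xzero = S\alpha^* \in \text{range}(S)$ forces $P_S^\perp\xzero = 0$, so that $\|P_S^\perp x^*\|_2 = \|P_S^\perp(x^* - \xzero)\|_2 \less \|x^* - \xzero\|_2$. Dividing through by $\|x^*\|_2$ produces the stated relative-error bound.
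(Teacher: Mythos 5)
Your proposal is correct and follows essentially the same route as the paper: subtract the two dual KKT conditions, apply the $(1/\mu)$-strong monotonicity of $\partial f^*$, bound $\|P_S^\perp A^\top \Delta\|_2 \less \mathcal{Z}_f\|\Delta\|_2$ via the tangent cone, and translate back through $x^* = -\lambda^{-1}A^\top z^*$ and $P_S^\perp \xzero = 0$. The only difference is cosmetic and lies in the final algebraic step --- you eliminate $\|\Delta\|_2$ in favor of $v$ and maximize the concave quadratic $v \mapsto Mv - cv^2$, whereas the paper retains $\|\Delta\|_2$ and applies $a^2 + b^2 \gre 2ab$ --- and both yield the identical constant $\sqrt{\mu/(2\lambda)}\,\mathcal{Z}_f$.
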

Naturally, we have $\sup_{\Delta \in \mathcal{T}_{z^*} \cap \mathcal{B}_2^n} \|P_S^\perp A^\top \Delta \|_2 \less \sup_{\Delta \in \mathcal{B}_2^n} \|P_S^\perp A^\top \Delta \|_2$, i.e., $\mathcal{Z}_f \less \|P_S^\perp A^\top\|_2$. When $z^*$ is an extreme point of the tangent cone $\mathcal{T}_{z^*}$, then one may expect $\mathcal{T}_{z^*}$ to have a small size and $\mathcal{Z}_f \ll \|P_S^\perp A^\top\|_2$. We discuss such instances in Section~\ref{sectionadaptive}.
On the other hand, when $z^*$ belongs to the interior of the domain of $f^*$, then $\mathcal{C}_{z^*} = \mathcal{B}_2^n$ so that $\mathcal{Z}_f = \|P_S^\perp A^\top\|_2$. For an adaptive embedding $S = A^\top \wtilde S$, we can then leverage well-known upper bounds~\cite{halko2011, boutsidis2013} on the residual error $\|P_S^\perp A^\top\|_2$ of the form
\begin{align}
    \|P_S^\perp A^\top\|_2 \lesssim R_{m/2}(A)\,,
\end{align}
where the \emph{spectral residual} $R_\delta(A)$ of the matrix $A$ at level $\delta > 0$ is defined as
\begin{align}
\label{eqnspectralresidual}
    R_\delta(A) \defn \sigma_{\floor{\delta}+1} + \frac{1}{\sqrt{\floor{\delta}}} \cdot \sqrt{\sum_{j=\floor{\delta}+1}^\rho \sigma_j^2}\,.
\end{align}
\noindent As an immediate consequence, we establish in Theorem~\ref{theoremboundfirstorderrandom} in Section~\ref{sectionadaptive} high-probability upper bounds on the recovery error,
\begin{align}
    \frac{\|\xone - x^*\|_2}{\|x^*\|_2} \lesssim \sqrt{\frac{\mu}{\lambda}} \cdot R_{m/2}(A)\,,
\end{align}
provided that $\lambda \gtrsim \mu R_{m/2}^2(A)$.

\subsection{Comparison to existing work and oblivious sketching}

The recovery method most related to ours is based on the first-order estimator $\xone_\dagger = G_\lambda(Q \alpha^*_\dagger)$ introduced in~\cite{zhang2013recoveringCOLT, zhang2014recoveringIEEE}, where $\alpha_\dagger^* \defn \argmin_{\alpha \in \real^m} \{f(AQ\alpha) + \frac{\lambda}{2}\|\alpha\|_2^2\}$ and $Q \in \real^{d \times m}$ is an oblivious embedding with i.i.d.~Gaussian entries $\mathcal{N}(0,1/m)$. In contrast to~\eqref{eqsketchedprimal}, the regularization term does not involve the sketching matrix $Q$, and this incurs performance guarantees different from ours: under several stringent assumptions (see Theorem~\ref{theoremzhangrecovering} in Section~\ref{sectionoblivious} for details), a \emph{best-case} upper bound on the recovery error of $\xone_\dagger$ is given as
\begin{align}
    \frac{\|\xone_\dagger - x^*\|_2}{\|x^*\|_2} \lesssim \sqrt{\frac{d_{\lambda/\mu}}{m}}\,,
\end{align}
where the \emph{effective dimension} $d_{\lambda/\mu}$ -- a critical quantity in many contexts for sketching-based algorithms~\cite{alaoui2015fast, chowdhury2018iterative, lacotte2020effective} -- is defined as
\begin{align}
    d_{\lambda/\mu} \defn \frac{\text{trace}(D_{\lambda/\mu})}{\|D_{\lambda/\mu}\|_2}\,,
\end{align}
and $D_{\lambda/\mu} \defn A (\frac{\lambda}{\mu} I_d + A^\top A)^{-1} A^\top$. The characteristic measure of error for our adaptive estimator $\xone$ is $\sqrt{\frac{\mu}{\lambda}} \cdot R_{m/2}(A)$ as opposed to $\sqrt{\frac{d_{\lambda/\mu}}{m}}$ for the oblivious estimator $\xone_\dagger$. In Table~\ref{tablecomparisonspectralresidualeffdim}, we compare these two measures for spectral decays which are common~\cite{gu2013smoothing, yang2017randomized} in machine learning, e.g., polynomial $\sigma_j \asymp j^{-\frac{1+\nu}{2}}$ or exponential $\sigma_j \asymp e^{-\frac{\nu j}{2}}$, for some $\nu > 0$ (the proofs of these results involve simple summations and we leave details to the reader). The ratio $\sqrt{\frac{d_{\lambda/\mu}}{m}}$ scales in terms of $m$ as $\mathcal{O}(m^{-\frac{1}{2}})$ regardless of the spectral decay, whereas the term $\sqrt{\frac{\mu}{\lambda}} \cdot R_{m/2}(A)$ has the better scaling $\mathcal{O}(m^{-\frac{1+\nu}{2}})$ for a polynomial decay and $\mathcal{O}(e^{-\frac{\nu m}{2}})$ for an exponential decay.
\begin{table}[!h]
\caption{Comparison of the characteristic measures of errors for polynomial and exponential spectral decays. We assume that $\frac{\lambda}{\mu} \asymp \sigma_K^2$ for some threshold rank $1 \less K \less \rho$, and that $m \gre 2 K$.}
  \label{tablecomparisonspectralresidualeffdim}
  \centering
  \begin{small}
  \begin{tabular}{|c|c|c|c|c|}
    \cmidrule(r){1-5}
    Estimator & Embedding & Characteristic error & Polynomial $\sigma_j \asymp j^{-\frac{1+\nu}{2}}$ & Exponential $\sigma_j \asymp e^{-\frac{-\nu j}{2}}$  \\
    \midrule
    $\xone$ (Ours) & Adaptive, Gaussian & $\sqrt{\frac{\mu}{\lambda}} R_{\frac{m}{2}}(A)$ & $\left(\frac{2K}{m}\right)^{\frac{1+\nu}{2}}$ & $e^{-\nu (\frac{m}{2}-K)}$ \\
    $\xone_\dagger$ (\!\cite{zhang2013recoveringCOLT, zhang2014recoveringIEEE}) & Oblivious, Gaussian &  $\sqrt{\frac{d_{\lambda/\mu}}{m}}$ & $\sqrt{\frac{K}{m}}$ & $\sqrt{\frac{K}{m}}$ \\
    \bottomrule
    \end{tabular}
    \end{small}
\end{table}

\subsection{Statement of additional contributions}

In addition to our main result (Theorems~\ref{theoremzeroversusfirstorder} and~\ref{theoremboundfirstorderrandom}) on the recovery error of $\xone$, we have the following contributions. In Section~\ref{sectionadaptive}, by leveraging the equivalence between adaptive sketching of the data matrix $A$ and oblivious sketching of the Gram matrix $AA^\top$, we extend our results to kernel methods (Theorem~\ref{thmainkernel}). In Section~\ref{sectionoblivious}, we establish lower bounds (Theorems~\ref{theoremboundzeroorderrandom} and~\ref{theoremlowerboundfirstorderobliviousrandom}) on the recovery error of the estimators $\xzero$ and $\xone$ with oblivious embeddings. We show that these recovery errors are bounded away from $0$ unless $m \approx d$, which would defeat the purpose of sketching. In Section~\ref{sectionalgorithms}, we provide a prototype algorithm for adaptive sketching (Algorithm~\ref{alg:MainAlgorithm}), and we show that our proposed low-dimensional formulation is at least as numerically stable as the original program~\eqref{eqprimal}. Furthermore, we extend Algorithm~\ref{alg:MainAlgorithm} to an iterative version (Algorithm~\ref{alg:Algorithm}) with the following guarantee (see Theorem~\ref{theoremiterativemethod}). Based on a single sketch $AS$, it returns after $T$ iterations a solution $\xone_T$ which satisfies with high probability
\begin{align}
    \frac{{\|\xone_T - x^*\|}_2}{{\|x^*\|}_2} \less \left(\frac{\mu \mathcal{Z}_f^2}{2\lambda}\right)^{\frac{T}{2}}\,,
\end{align}
provided that $\lambda \gre 2\mu \mathcal{Z}_f^2$. Consequently, one can construct an approximation $\xone_T$ with linear convergence rate $\sqrt{\frac{\mu \mathcal{Z}_f^2}{2\lambda}}$ based on a single sketch of the data matrix $A$, a number $T$ of matrix-vector multiplications of the form $A^\top \nabla f(\cdot)$ (which is equivalent to a gradient call to the function $F$), and through solving a number $T$ of low-dimensional optimization programs. In Section~\ref{sectionnonsmooth}, we extend our analysis to the case of a non-smooth objective function $f$. We show how to construct a first-order estimator $\xone$ based on a low-dimensional optimization program and which satisfies the following high-probability guarantee (see Theorem~\ref{theoremnonsmooth}),
\begin{align}
    \|\xone - x^*\|_2 \lesssim \frac{1}{\lambda} \cdot \mathcal{Z}_f\,.
\end{align}
Lastly, we show in Theorem~\ref{theoremlowerboundleastsquares} and Corollary~\ref{corollaryoptimalityzeroorder} that, as $\lambda \to 0$, the zero-order estimator $\xzero$ with an oblivious embedding achieves the minimax rate of optimality for a fundamental statistical problem, namely, estimating the mean of a Gaussian distribution $\mathcal{N}(Ax_\text{pl}, \frac{\sigma^2}{n}I_n)$ under the smoothness assumption $\|x_\text{pl}\|_2 \less 1$. In contrast to our main results, this suggests different benefits of the linear reconstruction map and oblivious embeddings for small values of the regularization parameter $\lambda$.

\section{Smooth, convex optimization in adaptive random subspaces}
\label{sectionadaptive}

\subsection{Restricted singular value with adaptive random embeddings}

According to Theorem~\ref{theoremzeroversusfirstorder}, the relative recovery error depends on the restricted singular value $\mathcal{Z}_f$, and we provide next an upper bound on $\mathcal{Z}_f$ in the case of an adaptive Gaussian embedding. Let $A = U \Sigma V^\top $ be a thin a singular value decomposition of $A$, where $U \in \real^{n \times \rho}$ and $V \in \real^{d \times \rho}$ have orthonormal columns, and $\Sigma \in \real^{\rho \times \rho}$ is the diagonal matrix of the non-zero singular values of $A$ in non-increasing order. For a given target rank $1 \less k \less \frac{\rho}{2}$, we define $\Sigma_k \defn \text{diag}\{\sigma_1, \hdots, \sigma_k\}$, $\Sigma_{\rho-k} \defn \text{diag}\{\sigma_{k+1}, \hdots, \sigma_{\rho}\}$, the matrix $U_k \in \mathbb{R}^{n \times k}$ with the first $k$ columns of $U$ and $U_{\rho-k} \in \mathbb{R}^{n \times (\rho-k)}$ with its last $(\rho-k)$ columns.
\begin{lemma}[Restricted singular value with adaptive embeddings]
\label{lemmaupperboundZf}
Let $S = A^\top \wtilde S$ where $\wtilde S \in \real^{n \times m}$ has i.i.d.~Gaussian entries and $m=2k$ for some target rank $1 \less k \less \frac{\rho}{2}$. Then, it holds that
\begin{align}
\label{upperboundZf}
    \mathcal{Z}_f \less c^\prime_g \cdot \textrm{rad}(U_k^\top \mathcal{C}_{z^*}) \cdot R_k(A) + \sup_{\Delta \in \mathcal{C}_{z^*}} {\|\Sigma_{\rho-k}U_{\rho-k}^\top \Delta\|}_2\,,
\end{align}
with probability at least $1-6 e^{-k}$, where $c^\prime_g$ is a universal constant which satisfies $c^\prime_g \less 25$. This implies in particular that, with probability at least $1-6e^{-k}$,
\begin{align}
\label{eqnupperboundresidualgaussian}
    \mathcal{Z}_f \less \|P_S^\perp A^\top\|_2 \less c_g \cdot R_k(A)\,,
\end{align}
where $c_g$ is a universal constant which satisfies $c_g \less 26$.
\end{lemma}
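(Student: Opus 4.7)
The plan is to reduce the bound to a standard randomized-SVD residual calculation in the spirit of Halko, Martinsson and Tropp~\cite{halko2011}. Using the thin SVD $A=U\Sigma V^\top$ and the partition into top-$k$ and tail blocks fixed in the statement, I would write $A^\top = V_k \Sigma_k U_k^\top + V_{\rho-k} \Sigma_{\rho-k} U_{\rho-k}^\top$. For any $\Delta \in \mathcal{C}_{z^*}$, the triangle inequality together with the contractivity of $P_S^\perp$ and the orthonormality of the columns of $V_{\rho-k}$ gives
\begin{align*}
    \|P_S^\perp A^\top \Delta\|_2 \less \|P_S^\perp V_k \Sigma_k U_k^\top \Delta\|_2 + \|\Sigma_{\rho-k} U_{\rho-k}^\top \Delta\|_2\,,
\end{align*}
and the supremum of the second summand over $\Delta \in \mathcal{C}_{z^*}$ is exactly the tail term appearing in~\eqref{upperboundZf}. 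It therefore remains to control the head term.

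For this, I would introduce the rotated embedding $\Omega \defn U^\top \wtilde S \in \real^{\rho \times m}$, which still has i.i.d.~Gaussian entries because $U$ has orthonormal columns, and split it conformally as $\Omega = \bigl(\begin{smallmatrix}\Omega_1 \\ \Omega_2 \end{smallmatrix}\bigr)$ with $\Omega_1 \in \real^{k \times m}$ and $\Omega_2 \in \real^{(\rho-k)\times m}$ independent Gaussian blocks. Then
\begin{align*}
    S \;=\; A^\top \wtilde S \;=\; V_k \Sigma_k \Omega_1 + V_{\rho-k} \Sigma_{\rho-k} \Omega_2\,,
\end{align*}
and since $m=2k$ the matrix $\Omega_1$ has full row rank almost surely, so $\Omega_1 \Omega_1^\dagger = I_k$. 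For any $a \in \real^k$, the vector $S\,\Omega_1^\dagger a = V_k \Sigma_k a + V_{\rho-k} \Sigma_{\rho-k} \Omega_2 \Omega_1^\dagger a$ lies in the range of $S$, so applying $P_S^\perp$ annihilates the left-hand side and leaves
\begin{align*}
    \|P_S^\perp V_k \Sigma_k a\|_2 \;=\; \|P_S^\perp V_{\rho-k} \Sigma_{\rho-k} \Omega_2 \Omega_1^\dagger a\|_2 \;\less\; \|\Sigma_{\rho-k} \Omega_2 \Omega_1^\dagger\|_2 \cdot \|a\|_2\,.
\end{align*}
Choosing $a = U_k^\top \Delta$ and taking a supremum over $\Delta \in \mathcal{C}_{z^*}$ bounds the head term by $\|\Sigma_{\rho-k} \Omega_2 \Omega_1^\dagger\|_2 \cdot \mathrm{rad}(U_k^\top \mathcal{C}_{z^*})$.

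The remaining step, which is the main technical obstacle, is the purely probabilistic estimate $\|\Sigma_{\rho-k} \Omega_2 \Omega_1^\dagger\|_2 \less c'_g \cdot R_k(A)$ with probability at least $1-6 e^{-k}$. I would follow the standard HMT template: condition on $\Omega_1$ and exploit the independence of $\Omega_2$ to obtain, via Gaussian concentration for rectangular matrices with non-uniform variances, a deviation bound of the form $\|\Sigma_{\rho-k} \Omega_2 M\|_2 \lesssim \|\Sigma_{\rho-k}\|_2 \|M\|_F + \|\Sigma_{\rho-k}\|_F \|M\|_2$; then control $\|\Omega_1^\dagger\|_2$ and $\|\Omega_1^\dagger\|_F$ through Davidson--Szarek-type tail bounds for the smallest singular value of the $k \times 2k$ standard Gaussian $\Omega_1$ (which yields $\sigma_{\min}(\Omega_1) \gtrsim \sqrt{k}$ with exponentially small failure probability). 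Combining these on a single event of probability at least $1-6e^{-k}$ gives the target inequality; the hardest bookkeeping is tracking absolute constants through the Gaussian tail bounds to produce the explicit $c'_g \less 25$. Finally,~\eqref{eqnupperboundresidualgaussian} is immediate: the first inequality is definitional, and the second is just~\eqref{upperboundZf} specialized to $\mathcal{C}_{z^*} = \mathcal{B}_2^n$, for which $\mathrm{rad}(U_k^\top \mathcal{B}_2^n) \less 1$ and the tail supremum reduces to $\sigma_{k+1} \less R_k(A)$, so both summands merge into a single constant multiple of $R_k(A)$.
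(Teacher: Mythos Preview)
Your proposal is correct and reaches the same probabilistic quantity as the paper, but the deterministic reduction is organized differently. The paper first rewrites $P_{A^\top \wtilde S}^\perp A^\top = V f(\Sigma, U^\top \wtilde S) U^\top$ with $f(\Sigma,G)\defn P_{\Sigma G}^\perp \Sigma$, then invokes a monotonicity lemma from~\cite{witten2015randomized} (namely $\|f(\Sigma_1,G)w\|_2 \less \|f(\Sigma_2,G)w\|_2$ whenever $\Sigma_1 \preceq \Sigma_2$) with the one-parameter family $M(t)=\mathrm{diag}(tI_k,\Sigma_{\rho-k})$, and sends $t\to\infty$ to obtain an explicit block limit of $f(M(t),G)$; the triangle inequality on that limit produces the head and tail terms. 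Your range argument---noting that $S\Omega_1^\dagger a \in \mathrm{range}(S)$ forces $P_S^\perp V_k\Sigma_k a = -P_S^\perp V_{\rho-k}\Sigma_{\rho-k}\Omega_2\Omega_1^\dagger a$---is the classical HMT device and bypasses both the monotonicity lemma and the limiting step. Both routes land on $\|\Sigma_{\rho-k}\Omega_2\Omega_1^\dagger\|_2$ (your $\Omega_2\Omega_1^\dagger$ has exactly the distribution of the paper's $X_1\Lambda^{-1}\Omega$, since $\Omega_2 W_2$ is again standard Gaussian when $W_2$ comes from the SVD of $\Omega_1$), and both finish with Corollary~10.9 of~\cite{halko2011}. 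Your path is shorter and self-contained; the paper's path, via the Witten--Cand\`es machinery, makes the monotonicity in $\Sigma$ explicit but is otherwise more indirect for this particular statement.
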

The upper bound~\eqref{upperboundZf} on $\mathcal{Z}_f$ involves the quantities $\textrm{rad}(U_k^\top \mathcal{C}_{z^*})$ and $\sup_{\Delta \in \mathcal{C}_{z^*}} {\|\Sigma_{\rho-k}U_{\rho-k}^\top \Delta\|}_2$ which are deterministic (they do not depend on the randomness of $S$). These quantities depend on the coupling between the matrix $U$ and the spherical cap $\mathcal{C}_{z^*}$ which may vary based on the problem at hand. Under some (idealized) assumptions, we are able to characterize their typical values, and thus a bound on $\mathcal{Z}_f$ which decouples $A$ and the spherical cap $\mathcal{C}_{z^*}$.
\begin{lemma}
\label{lemmarandomizedupperboundZd}
We assume the matrix $U \in \real^{n \times \rho}$ of left singular vectors of $A$ to be a random Haar matrix in $\real^n$, and the dual solution $z^*$ to be independent of $U$. Under the hypotheses of Lemma~\ref{lemmaupperboundZf}, there exists universal constants $c_0, c_1, c_2 > 0$ such that 
\begin{align}
     \mathcal{Z}_f \less c_0 \cdot \frac{\omega(\mathcal{C}_{z^*}) + \sqrt{m}}{\sqrt{n}} \cdot R_{m/2}(A)\,,
\end{align}
with probability at least $1-c_1 e^{-c_2m}$.
\end{lemma}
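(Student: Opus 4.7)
The plan is to apply the deterministic bound of Lemma~\ref{lemmaupperboundZf} and then control its two summands separately, exploiting that, since $U$ and $z^*$ are independent, we may condition on $z^*$ and treat $\mathcal{C}_{z^*}$ as a fixed subset of $\mathcal{B}_2^n$, while $U_k$ and $U_{\rho-k}$ remain Haar-distributed partial orthonormal matrices. This decoupling is exactly what allows the Gaussian-width scaling $\omega(\mathcal{C}_{z^*})$ to appear.

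For the first summand $\textrm{rad}(U_k^\top \mathcal{C}_{z^*}) \cdot R_k(A)$, I would couple the Haar matrix $U_k$ with a Gaussian matrix $G_k \in \real^{n \times k}$ of i.i.d.~$\mathcal{N}(0,1)$ entries via the polar (QR) decomposition $G_k = U_k R$, so that $U_k^\top \Delta = R^{-\top} G_k^\top \Delta$ with $R \approx \sqrt{n}\,I_k$ with high probability by standard Wishart concentration (since $k \ll n$). This reduces the problem to bounding $\sup_{\Delta \in \mathcal{C}_{z^*}} \|G_k^\top \Delta\|_2/\sqrt{n}$. Chevet's inequality applied to the bilinear form $u^\top G_k^\top \Delta$ over $u \in \mathcal{B}_2^k$ and $\Delta \in \mathcal{C}_{z^*}$ gives $\Exs\sup \less \omega(\mathcal{C}_{z^*}) + \sqrt{k}$, and Gaussian Lipschitz concentration upgrades this to the high-probability bound $\textrm{rad}(U_k^\top \mathcal{C}_{z^*}) \lesssim (\omega(\mathcal{C}_{z^*}) + \sqrt{k})/\sqrt{n}$.

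For the second summand, I would rewrite it as the bilinear supremum
\begin{align*}
    \sup_{\Delta \in \mathcal{C}_{z^*}} \|\Sigma_{\rho-k} U_{\rho-k}^\top \Delta\|_2 = \sup_{\Delta \in \mathcal{C}_{z^*},\, v \in E} v^\top U_{\rho-k}^\top \Delta\,,
\end{align*}
where $E \defn \Sigma_{\rho-k}\mathcal{B}_2^{\rho-k}$ is an ellipsoid with $\textrm{rad}(E) = \sigma_{k+1}$ and $\omega(E) = \Exs\|\Sigma_{\rho-k} g\|_2 \less \sqrt{\sum_{j>k}\sigma_j^2}$. Repeating the Haar-to-Gaussian coupling and applying Chevet's inequality in the form $\Exs\sup \less \omega(A)\textrm{rad}(B) + \textrm{rad}(A)\omega(B)$ (with $A = E$, $B = \mathcal{C}_{z^*}$), followed again by Gaussian Lipschitz concentration, yields with high probability
\begin{align*}
    \sup_{\Delta \in \mathcal{C}_{z^*}} \|\Sigma_{\rho-k} U_{\rho-k}^\top \Delta\|_2 \,\lesssim\, \frac{1}{\sqrt{n}}\Bigl(\omega(\mathcal{C}_{z^*})\,\sigma_{k+1} + \sqrt{\textstyle\sum_{j>k}\sigma_j^2}\Bigr)\,.
\end{align*}

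Combining both bounds in Lemma~\ref{lemmaupperboundZf}, the definition of the spectral residual gives $R_k(A) \gre \sigma_{k+1}$ and $\sqrt{k}\,R_k(A) \gre \sqrt{\sum_{j>k}\sigma_j^2}$, so the second summand is already dominated by $(\omega(\mathcal{C}_{z^*}) + \sqrt{k})R_k(A)/\sqrt{n}$, matching the scaling of the first. A union bound over the two concentration events together with $m = 2k$ then yields $\mathcal{Z}_f \lesssim (\omega(\mathcal{C}_{z^*}) + \sqrt{m})R_{m/2}(A)/\sqrt{n}$ with probability $1 - c_1 e^{-c_2 m}$. I expect the main technical obstacle to be the Haar-to-Gaussian reduction \emph{uniformly} inside the bilinear chaos of the second step: one must control the Wishart rescaling $R$ simultaneously with the bilinear supremum, either via a direct Chevet-type estimate on the Stiefel manifold or via a careful coupling that keeps the $(\omega(\mathcal{C}_{z^*}) + \sqrt{m})/\sqrt{n}$ rate intact.
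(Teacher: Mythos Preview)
Your proposal is correct and follows essentially the same route as the paper: start from Lemma~\ref{lemmaupperboundZf}, bound $\textrm{rad}(U_k^\top\mathcal{C}_{z^*})$ by $(\omega(\mathcal{C}_{z^*})+\sqrt{m})/\sqrt{n}$, rewrite the second summand as a bilinear supremum over the ellipsoid $\Sigma_{\rho-k}\mathcal{B}_2^{\rho-k}$ and bound it via a Chevet-type inequality, then combine using $\sigma_{k+1}\less R_k(A)$ and $\sqrt{\sum_{j>k}\sigma_j^2}\less\sqrt{k}\,R_k(A)$. The only difference is that the paper invokes results that apply \emph{directly} to Haar matrices---Theorem~7.7.1 in~\cite{vershynin2018high} for the first term and the Chevet-type inequality of~\cite{adamczak2011chevet} for the second---thereby sidestepping the Haar-to-Gaussian coupling you flag as the main technical obstacle.
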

The upper bound $\frac{\omega(\mathcal{C}_{z^*}) + \sqrt{m}}{\sqrt{n}}$ on the ratio $\frac{\mathcal{Z}_f}{R_{m/2}(A)}$ decreases down to $\frac{\omega(\mathcal{C}_{z^*})}{\sqrt{n}}$ as $m$ decreases down to $\omega^2(\mathcal{C}_{z^*})$, and then plateaus at $\frac{\omega(\mathcal{C}_{z^*})}{\sqrt{n}}$. The smaller the Gaussian width $\omega(\mathcal{C}_{z^*})$, the more favorable the statistical-computational trade-off in terms of the sketch size. For instance, suppose that the domain of $f^*$ is the $L_1$-ball (e.g., $\ell_\infty$-regression; see Example~\ref{examplelinfty}), and denote by $s^*$ the number of non-zero entries of the dual solution $z^*$. It is known (see, for instance, Section~2.2.2 in~\cite{pilanci2016fast}) that $\mathcal{C}_{z^*} \subseteq \left\{ \Delta \in \real^n \mid \|\Delta\|_1 \less 2 \sqrt{s^*} \right\}$, and thus, $\omega(\mathcal{C}_{z^*}) \lesssim \sqrt{s^* \cdot \log n}$. Consequently,
\begin{align}
    \mathcal{Z}_f \lesssim \frac{\sqrt{s^* \cdot \log n} + \sqrt{m}}{\sqrt{n}} \cdot R_{m/2}(A)\,.  
\end{align}
The inequality $\|P_S^\perp A^\top\| \less c_g R_k(A)$ in~\eqref{eqnupperboundresidualgaussian} is well-known: it is in fact a simplified statement of the result of Corollary~10.9 in~\cite{halko2011}. We do not aim to derive a high-probability bound on $\mathcal{Z}_f$ with the SRHT as this would involve a different machinery of technical arguments, but a high-probability bound on $\|P_SA^\top\|_2$ already exists in the literature. 
\begin{lemma}[SRHT spectral residual~\cite{boutsidis2013}]
\label{lemmaresidualsrht}
Let $k \gre 2$ be a target rank and pick the sketch size $m \defn 19 (\sqrt{k} + 4 \sqrt{\log n})^2 \log(kn)$. Let $S = A^\top \wtilde S$ where $\wtilde S \in \real^{n \times m}$ is a SRHT. Then, it holds with probability at least $1-\frac{5}{n}$ that ${\|P_S^\perp A^\top\|}_2 \less c_s \cdot R_k(A)$ for some universal constant $c_s \less 5$.
\end{lemma}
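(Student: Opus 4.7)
The result is essentially a restatement of the SRHT-based low-rank approximation guarantee from \cite{boutsidis2013}, adapted to our setting. The plan is to reduce the restricted-projection quantity $\|P_S^\perp A^\top\|_2$ to the spectral error of a randomized rank-$k$ approximation of $A^\top$ obtained via the sketch $S = A^\top \wtilde S$, and then invoke the existing SRHT analysis.

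First, I would observe that, letting $M \defn A^\top$ and $\Omega \defn \wtilde S$, we have $S = M\Omega$ and therefore
\begin{align}
    \|P_S^\perp A^\top\|_2 = \|M - P_{M\Omega}\,M\|_2\,,
\end{align}
so the task is exactly to bound the spectral error of the randomized range finder applied to $M$. Next, using the thin SVD $A = U\Sigma V^\top$, so that $M = V\Sigma U^\top$, I would split the right singular vectors of $M$ (which are $U$) into the top-$k$ block $U_k$ and the bottom block $U_{\rho-k}$, and introduce the sketched blocks
\begin{align}
    \Omega_1 \defn U_k^\top \wtilde S \in \real^{k\times m}\,,\qquad \Omega_2 \defn U_{\rho-k}^\top \wtilde S \in \real^{(\rho-k)\times m}\,.
\end{align}
The Halko--Martinsson--Tropp structural bound (Theorem~9.1 in~\cite{halko2011}) then yields, deterministically on the event that $\Omega_1$ has full row rank,
\begin{align}
    \|P_S^\perp A^\top\|_2^2 \less \|\Sigma_{\rho-k}\|_2^2 + \|\Sigma_{\rho-k}\,\Omega_2\,\Omega_1^{\dagger}\|_2^2\,,
\end{align}
which immediately controls the first summand by $\sigma_{k+1}^2$.

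To bound the cross term $\|\Sigma_{\rho-k}\,\Omega_2\,\Omega_1^{\dagger}\|_2$, I would invoke the two standard properties of the SRHT at the chosen sketch size $m = 19(\sqrt{k}+4\sqrt{\log n})^2 \log(kn)$:
\begin{itemize}
    \item \emph{Oblivious subspace embedding on $U_k$.} Since $U_k$ has orthonormal columns, classical SRHT analysis (Lemma~3 in~\cite{boutsidis2013}, building on~\cite{tropp2011improved}) gives that the smallest singular value of $\Omega_1$ is lower bounded with probability at least $1-2/n$, yielding a deterministic bound on $\|\Omega_1^\dagger\|_2$ of order $\sqrt{n/m}$ up to absolute constants.
    \item \emph{Approximate matrix multiplication.} The SRHT satisfies a Frobenius-norm approximate matrix multiplication guarantee between the orthogonal blocks associated with $U_k$ and $\Sigma_{\rho-k} U_{\rho-k}^\top$; this controls $\|\Sigma_{\rho-k}\Omega_2\Omega_1^\top\|_F$ in terms of $\|\Sigma_{\rho-k}\|_F / \sqrt{m/n}$, again with probability at least $1-2/n$ (Lemma~11 in~\cite{boutsidis2013}).
\end{itemize}
Combining these two events via a union bound, the product $\|\Sigma_{\rho-k}\Omega_2\Omega_1^{\dagger}\|_2 \less \|\Sigma_{\rho-k}\Omega_2\Omega_1^\top\|_F \cdot \|(\Omega_1\Omega_1^\top)^{-1}\|_2$ is bounded, after substituting the prescribed value of $m$, by a universal constant times $\|\Sigma_{\rho-k}\|_F/\sqrt{k}$. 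Plugging back into the structural bound and using $\sqrt{a^2+b^2}\less a + b$ together with the definition of $R_k(A)$ concludes the argument.

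The main obstacle is bookkeeping the absolute constants through the two SRHT concentration lemmas so that the final constant $c_s$ satisfies $c_s \less 5$; the paper's choice of the prefactor $19$ and the additive $4\sqrt{\log n}$ term in $m$ is calibrated precisely for this. A failure probability of $1/n$ on each of the two events above, together with a further $1/n$ slack absorbed into the constants of the SRHT subspace embedding lemma, gives the claimed $1-5/n$ overall success probability via the union bound.
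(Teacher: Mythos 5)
Your overall architecture (structural bound of Halko--Martinsson--Tropp plus SRHT concentration) is the right family of ideas, but the paper itself does none of this: it simply instantiates Theorem~2.1 of~\cite{boutsidis2013} with $\delta = 1/n$, which directly gives $\|P_S^\perp A^\top\|_2 \less \big(4 + \sqrt{12\log^2(n)/m}\big)\sigma_{k+1} + \sqrt{6}\sqrt{\log(n)/m}\,\big(\sum_{j>k}\sigma_j^2\big)^{1/2}$ with probability $1-5/n$, and then checks that the prescribed $m$ makes $\sqrt{12\log^2(n)/m}\less 1$ and $\sqrt{\log(n)/m}\less 1/\sqrt{k}$, yielding $c_s \less 5$. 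Since you explicitly acknowledge the result is a restatement of the cited guarantee, the citation-plus-substitution route would have been both shorter and safe.

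More importantly, your re-derivation has a genuine gap in the cross term. You propose to bound $\|\Sigma_{\rho-k}\Omega_2\Omega_1^{\dagger}\|_2 \less \|\Sigma_{\rho-k}\Omega_2\Omega_1^\top\|_F\,\|(\Omega_1\Omega_1^\top)^{-1}\|_2$ and control the first factor by approximate matrix multiplication. But the Frobenius AMM guarantee between $U_k$ and $U_{\rho-k}\Sigma_{\rho-k}$ necessarily carries the factor $\|U_k\|_F = \sqrt{k}$: it gives $\|\Sigma_{\rho-k}\Omega_2\Omega_1^\top\|_F \lesssim \sqrt{k\log(1/\delta)/m}\cdot\|\Sigma_{\rho-k}\|_F$ (up to normalization). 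With the prescribed sketch size $m \asymp k\log(kn)$ this is only $O(1)\cdot\|\Sigma_{\rho-k}\|_F$, whereas the lemma requires the tail to enter as $\frac{1}{\sqrt{k}}\big(\sum_{j>k}\sigma_j^2\big)^{1/2}$; your route therefore proves only $\|P_S^\perp A^\top\|_2 \lesssim \sigma_{k+1} + \|\Sigma_{\rho-k}\|_F$, which is weaker than $c_s R_k(A)$ by a factor of $\sqrt{k}$ on the tail term (recovering the correct bound this way would force $m \gtrsim k^2$). The proof of the cited theorem avoids this by using the submultiplicative split $\|\Sigma_{\rho-k}\Omega_2\|_2\,\|\Omega_1^{\dagger}\|_2$ and bounding $\|\Sigma_{\rho-k}\Omega_2\|_2$ directly with a matrix Chernoff/Bernstein argument for row sampling after Hadamard preconditioning, in which the Frobenius norm of the tail appears with the prefactor $\sqrt{\log(\rho/\delta)/m}$ rather than $\sqrt{k/m}$; it is precisely this $\log$-versus-$k$ distinction that turns into the $1/\sqrt{k}$ in $R_k(A)$ once $m \gre 19k\log(kn)$.
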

Gaussian width based results established in this section complement the analysis of left-sketching in terms of the Gaussian width of the tangent cone \cite{pilanci2015randomized}. However, the results are quite different due to the extra regularization terms, the projection matrix $P_S^\perp$ appearing in the dual problem  and the effect of the spectral residual term $R_k(A)$.

\subsection{Recovery error in randomized adaptive random subspaces}

Combining the upper bound~\eqref{eqnboundfirstorder} on the recovery error of $\xone$ along with the results of Lemmas~\ref{lemmaupperboundZf} and~\ref{lemmaresidualsrht}, we obtain the following high-probability guarantees in terms of the spectral residual of $A$.
\begin{theorem}[High-probability upper bound on $\xone$ with adaptive sketching]
\label{theoremboundfirstorderrandom}
For a Gaussian embedding, under the hypotheses of Lemma~\ref{lemmaupperboundZf} and provided that $\lambda \gre 2 \mu c_g^2 R_k^2(A)$, it holds with probability at least $1-6e^{-k}$ that
\begin{align}
\label{eqnboundfirstorderrandomgaussian}
    \frac{{\|\xone \!-\! x^*\|}_2}{{\|x^*\|}_2} \less \sqrt{\frac{c_g^2 \mu}{2\lambda}} \cdot R_k(A) \cdot \min\!\left\{1, \frac{{\|\xzero\!-\!x^*\|}_2}{\|x^*\|_2}\right\}\,.
\end{align}
For a SRHT embedding, under the hypotheses of Lemma~\ref{lemmaresidualsrht} and provided that $\lambda \gre 2 \mu c_s^2 R_k^2(A)$, it holds with probability at least $1-\frac{5}{n}$ that 
\begin{align}
\label{eqnboundfirstorderrandomsrht}
    \frac{{\|\xone - x^*\|}_2}{{\|x^*\|}_2} \less \sqrt{\frac{c_s^2 \mu}{2\lambda}} \cdot R_k(A)\cdot \min\!\left\{1, \frac{{\|\xzero - x^*\|}_2}{\|x^*\|_2}\right\}\,.
\end{align}
\end{theorem}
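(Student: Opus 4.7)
The plan is to derive both claims as direct consequences of the deterministic bound in Theorem~\ref{theoremzeroversusfirstorder} combined with the tail bounds on the restricted singular value $\mathcal{Z}_f$ established in Lemmas~\ref{lemmaupperboundZf} and~\ref{lemmaresidualsrht}. The key observation is that once a high-probability upper bound of the form $\mathcal{Z}_f \less c \cdot R_k(A)$ is available, the assumption $\lambda \gre 2 \mu c^2 R_k^2(A)$ immediately implies the condition $\lambda \gre 2 \mu \mathcal{Z}_f^2$ required by Theorem~\ref{theoremzeroversusfirstorder}, and substituting the bound on $\mathcal{Z}_f$ into the deterministic inequality produces the stated rate.

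For the Gaussian case, I would proceed as follows. First, I invoke the second part of Lemma~\ref{lemmaupperboundZf}, which guarantees that the event $\mathcal{E}_g \defn \{\mathcal{Z}_f \less c_g R_k(A)\}$ holds with probability at least $1 - 6 e^{-k}$. Next, on $\mathcal{E}_g$, the hypothesis $\lambda \gre 2 \mu c_g^2 R_k^2(A)$ gives $\lambda \gre 2 \mu \mathcal{Z}_f^2$, so Theorem~\ref{theoremzeroversusfirstorder} applies and yields
\begin{align*}
\frac{\|\xone - x^*\|_2}{\|x^*\|_2} \less \sqrt{\frac{\mu}{2 \lambda}} \cdot \mathcal{Z}_f \cdot \min\!\left\{1, \frac{\|\xzero - x^*\|_2}{\|x^*\|_2}\right\}.
\end{align*}
Substituting $\mathcal{Z}_f \less c_g R_k(A)$ gives~\eqref{eqnboundfirstorderrandomgaussian}.

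For the SRHT case, the argument is identical, with Lemma~\ref{lemmaresidualsrht} playing the role of Lemma~\ref{lemmaupperboundZf}. That lemma controls $\|P_S^\perp A^\top\|_2$ rather than $\mathcal{Z}_f$ directly, but since $\mathcal{C}_{z^*} \subseteq \mathcal{B}_2^n$ the definition~\eqref{eqndefnZf} immediately yields $\mathcal{Z}_f \less \|P_S^\perp A^\top\|_2$, so on the event $\{\|P_S^\perp A^\top\|_2 \less c_s R_k(A)\}$, which has probability at least $1 - 5/n$, we obtain $\mathcal{Z}_f \less c_s R_k(A)$. The rest of the argument (verifying the assumption on $\lambda$, invoking Theorem~\ref{theoremzeroversusfirstorder}, substituting the bound) is the same as in the Gaussian case and produces~\eqref{eqnboundfirstorderrandomsrht}.

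There is no real obstacle here: both parts of the theorem are essentially corollaries obtained by chaining two previously established results. The only minor care needed is to note that the bound in Theorem~\ref{theoremzeroversusfirstorder} is nondecreasing in $\mathcal{Z}_f$ (for fixed $\xzero$, $x^*$ and $\lambda$), which justifies replacing $\mathcal{Z}_f$ with its high-probability upper bound while keeping the inequality valid on the good event.
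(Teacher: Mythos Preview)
Your proposal is correct and matches the paper's own approach exactly: the paper does not give a separate proof of Theorem~\ref{theoremboundfirstorderrandom} but simply states (in the text immediately preceding the theorem) that it follows by ``combining the upper bound~\eqref{eqnboundfirstorder} on the recovery error of $\xone$ along with the results of Lemmas~\ref{lemmaupperboundZf} and~\ref{lemmaresidualsrht}.'' Your write-up makes this chaining explicit, including the observation that $\mathcal{Z}_f \less \|P_S^\perp A^\top\|_2$ is needed in the SRHT case, which the paper leaves implicit.
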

The regime $\lambda/\mu \ll R_k^2(A)$ corresponds to a sketch size which is too small relatively to the regularization parameter $\lambda$ and the spectral decay of $A$. In this regime, it can be shown that $\frac{\|\xone - x^*\|_2}{\|x^*\|_2} \lesssim \frac{\mu}{\lambda} \cdot R_k(A)$, and this is a weaker bound for small values of $\frac{\lambda}{\mu}$.

\subsection{Extension to kernel methods}

The first-order optimality conditions $x^* = - \lambda^{-1} A^\top \nabla f(Ax^*)$ yield that $x^* \in \text{range}(A^\top)$. Therefore, the primal program~\eqref{eqprimal} can be solved through the kernel formulation
\begin{align}
\label{eqprimalkernel}
    w^* \in \argmin_{w \in \mathbb{R}^n} \left\{f(K w) + \frac{\lambda}{2} w^\top K w \right\}\,,
\end{align}
where $K=AA^\top$, and then setting $x^* = A^\top w^*$. Given an embedding $\wtilde S \in \real^{n \times m}$, we consider the sketched version of~\eqref{eqprimalkernel},
\begin{align}
\label{eqsketchedprimalkernel}
    \alpha^*_K \in \argmin_{\alpha \in \mathbb{R}^m} \left\{f(K \wtilde{S} \alpha) + \frac{\lambda}{2} \alpha^\top \wtilde{S}^\top K \wtilde{S} \alpha \right\}\,.
\end{align}
The sketched program~\eqref{eqsketchedprimalkernel} is equivalent to~\eqref{eqsketchedprimal} with the adaptive embedding $S=A^\top \wtilde S$, i.e., $\alpha^* = \alpha^*_K$ and $\xone = A^\top \wone$ where $\wone \defn -\lambda^{-1} \nabla f(K\wtilde S \alpha^*_K)$. In other words, we have equivalence between adaptive sketching of the data matrix $A$ and oblivious sketching of the Gram matrix $K$. This equivalence naturally extends to any kernel method with a smooth loss function. We recall the concepts necessary to the exposition of our results, and we refer the reader to the books~\cite{kernelprobastats, gu2013smoothing, wahba1990spline} for more details and background. Given a measurable space $\Omega$ endowed with a probability distribution $\mathbb{P}$, we consider the space $L^2(\Omega, \mathbb{P})$ of real-valued functions over $\Omega$ which are square-integrable with respect to $\mathbb{P}$, and we let $\mathcal{H} \subset L^2(\Omega, \mathcal{P})$ be a reproducing kernel Hilbert space (RKHS) with reproducing kernel $\mathcal{K} : \Omega \times \Omega \to \real$ and associated norm $\|\cdot\|_\mathcal{H}$. Given $\omega_1, \dots, \omega_n \in \Omega$, we aim to solve the (infinite-dimensional) kernel program $h^* \defn \argmin_{h \in \mathcal{H}} f(\{h(\omega_i)\}_{i=1}^n) + \frac{\lambda}{2} \|h\|_{\mathcal{H}}^2$. This kernel program occurs in many widely used machine learning contexts (e.g., kernel ridge regression, kernel support vector machines with smooth hinge loss or kernel logistic regression). The representer theorem~\cite{kimeldorf1971some} states that $h^*$ belongs to the span of the functions $\mathcal{K}(\cdot, \omega_1), \dots, \mathcal{K}(\cdot, \omega_n)$, i.e., there exists $w^* \in \real^n$ such that $h^* = \sum_{i=1}^n \mathcal{K}(\cdot,\omega_i) w^*_i$, and one can solve instead the finite-dimensional program~\eqref{eqprimalkernel} with the empirical kernel matrix $K \defn \{\mathcal{K}(\omega_i,\omega_j)\}_{i,j}$. Based on a low-dimensional solution $\alpha_K^*$ to~\eqref{eqsketchedprimalkernel}, we define the zero- and first-order estimators of $h^*$ as $\what h^{(0)} \defn \sum_{i=1}^n \mathcal{K}(\cdot,\omega_i)\wzero_i$ where $\wzero \defn \wtilde S \alpha^*_K$, and, $\what h^{(1)} \defn \sum_{i=1}^n \mathcal{K}(\cdot, \omega_i) \wone_i$ where $\wone \defn -\lambda^{-1} \nabla f(K\wtilde S \alpha^*_K)$. Let $K_h$ be a square-root of $K$, i.e., $K = K_h K_h^\top$, and we introduce its corresponding restricted singular value
\begin{align}
    \mathcal{Z}_{f,K} \defn \sup_{\Delta \in \mathcal{C}_{z_K^*}} \|P_{K_h^\top \wtilde S}^\perp K_h^\top \Delta\|_2\,.
\end{align}
where $z_K^* \defn \argmin_{z}\left\{f^*(z) + \frac{1}{2\lambda} z^\top K z\right\}$. We recall that $\|\sum_{i=1}^n \mathcal{K}(\cdot, \omega_i) w_i\|_\mathcal{H}=\sqrt{w^\top K w}$ for any $w \in \real^n$. We obtain the following recovery guarantee as a function of the spectral decay of $\mathcal{Z}_{f,K}$.
\begin{theorem}
\label{thmainkernel}
Let $\wtilde S \in \real^{n \times m}$ be an embedding matrix, and let $\alpha^*_K$ be a minimizer of the sketched kernel program~\eqref{eqsketchedprimalkernel}. Under the condition $\lambda \gre 2 \mu \mathcal{Z}_{f,K}^2$, it holds that
\begin{align}
    \frac{\|\what h^{(1)} - h^*\|_\mathcal{H}}{\|h^*\|_\mathcal{H}} \less \sqrt{\frac{\mu}{2\lambda}} \cdot \mathcal{Z}_{f,K} \cdot \min\!\left\{1, \frac{\|\what h^{(0)} - h^*\|_\mathcal{H}}{\|h^*\|_\mathcal{H}}\right\}\,.
\end{align}
\end{theorem}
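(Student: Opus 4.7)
The plan is to reduce Theorem~\ref{thmainkernel} directly to Theorem~\ref{theoremzeroversusfirstorder} by substituting a square-root factorization of the kernel matrix into the finite-dimensional program~\eqref{eqprimalkernel} and its sketched version~\eqref{eqsketchedprimalkernel}. Concretely, let $K_h \in \real^{n \times \rho_K}$ be a square root of $K$ (so that $K = K_h K_h^\top$, with $\rho_K$ the rank of $K$), and make the change of variables $x = K_h^\top w$. Since $K w = K_h x$ and $w^\top K w = \|x\|_2^2$, the kernel program~\eqref{eqprimalkernel} is equivalent to the program
\begin{align*}
    x^* \in \argmin_{x \in \text{range}(K_h^\top)} \left\{f(K_h x) + \tfrac{\lambda}{2}\|x\|_2^2\right\},
\end{align*}
which is exactly~\eqref{eqprimal} with the data matrix $A$ replaced by $K_h$ (the range restriction is automatic from the first-order optimality condition $x^* = -\lambda^{-1} K_h^\top \nabla f(K_h x^*)$, just as in the proof of Proposition~\ref{propositionfenchelduality}). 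Similarly, using $K\wtilde S \alpha = K_h (K_h^\top \wtilde S)\alpha$ and $\alpha^\top \wtilde S^\top K \wtilde S \alpha = \|K_h^\top \wtilde S \alpha\|_2^2$, the sketched kernel program~\eqref{eqsketchedprimalkernel} coincides with~\eqref{eqsketchedprimal} with data matrix $K_h$ and the embedding $S' \defn K_h^\top \wtilde S$.

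Next, I would translate all objects appearing in Theorem~\ref{theoremzeroversusfirstorder} under this identification. The dual of the $K_h$-primal is precisely the defining program of $z_K^*$, so the spherical cap and the restricted singular value produced by Theorem~\ref{theoremzeroversusfirstorder} coincide with $\mathcal{C}_{z_K^*}$ and $\mathcal{Z}_{f,K} = \sup_{\Delta \in \mathcal{C}_{z_K^*}}\|P_{S'}^\perp K_h^\top \Delta\|_2$. On the estimator side, the zero- and first-order estimators of $x^*$ for the $K_h$-problem are $\xzero = S'\alpha^* = K_h^\top \wtilde S \alpha^*_K = K_h^\top \wzero$ and $\xone = -\lambda^{-1} K_h^\top \nabla f(K_h S'\alpha^*) = K_h^\top \wone$. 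The reproducing property $\|\sum_i \mathcal{K}(\cdot,\omega_i) w_i\|_\mathcal{H}^2 = w^\top K w = \|K_h^\top w\|_2^2$ then gives the crucial isometry
\begin{align*}
    \|\what h^{(j)} - h^*\|_\mathcal{H} = \|K_h^\top(\what w^{(j)} - w^*)\|_2 = \|\xone[j] - x^*\|_2 \quad (j \in \{0,1\}),\qquad \|h^*\|_\mathcal{H} = \|x^*\|_2.
\end{align*}

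Applying Theorem~\ref{theoremzeroversusfirstorder} to the $K_h$-problem with embedding $S'$ (the hypothesis $\lambda \gre 2\mu \mathcal{Z}_{f,K}^2$ transfers verbatim) and then substituting the isometries above yields the claimed inequality. The main conceptual point, rather than an obstacle, is verifying that the definitions of $z_K^*$, $\mathcal{C}_{z_K^*}$, and $\mathcal{Z}_{f,K}$ in the kernel statement truly agree with the corresponding objects for the $K_h$-problem; this is immediate since the dual of $\min_x f(K_h x) + \frac{\lambda}{2}\|x\|_2^2$ is $\min_z f^*(z) + \frac{1}{2\lambda}\|K_h^\top z\|_2^2 = \min_z f^*(z) + \frac{1}{2\lambda} z^\top K z$ and does not depend on the particular choice of square root. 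One could also bypass the identification entirely by repeating, line by line, the argument used to prove Theorem~\ref{theoremzeroversusfirstorder} on the kernel program, but the reduction via $K_h$ is cleaner and avoids duplicating the convex-analytic machinery.
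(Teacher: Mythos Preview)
Your proposal is correct and follows essentially the same route as the paper: both reduce to Theorem~\ref{theoremzeroversusfirstorder} by taking $A = K_h$ and the adaptive embedding $S = K_h^\top \wtilde S$, then invoke the RKHS isometry $\|\sum_i \mathcal{K}(\cdot,\omega_i) w_i\|_\mathcal{H} = \|K_h^\top w\|_2$ to translate the Euclidean bound into the $\mathcal{H}$-norm bound. Your write-up is slightly more explicit about why the dual object $z_K^*$ and the restricted singular value $\mathcal{Z}_{f,K}$ coincide with those produced by Theorem~\ref{theoremzeroversusfirstorder}, but the argument is the same.
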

\begin{proof}
Define $x^* \defn \argmin_x f(K_h x) + \frac{\lambda}{2} \|x\|_2^2$, and $S\defn K_h^\top \wtilde S$. Note that $\alpha^*_K \in \argmin_{\alpha\in \real^m} f(K_h S \alpha) + \frac{\lambda}{2} \|S\alpha\|_2^2$. Set $\xone = - \lambda^{-1} K_h^\top \nabla f(K_h S\alpha^*_K)$ and $\xzero = S \alpha^*_K$. Then, using the results of Theorem~\ref{theoremzeroversusfirstorder} with $A = K_h$, we obtain $\frac{{\|\xone-x^*\|}_2}{{\|x^*\|}_2} \less \sqrt{\frac{\mu}{2 \lambda}} \mathcal{Z}_{f,K} \min\!\left\{1, \frac{\|\xzero - x^*\|_2}{\|x^*\|_2}\right\}$, provided that $\lambda \gre 2\mu \mathcal{Z}_{f,K}^2$. We conclude by using the identities $\|\what h^{(1)} - h^*\|_\mathcal{H} = \|\xone - x^*\|_2$, $\|\what h^{(0)} - h^*\|_\mathcal{H} = \|\xzero - x^*\|_2$  and $\|h^*\|_\mathcal{H} = \|x^*\|_2$.
\end{proof}
Similarly to Theorem~\ref{theoremboundfirstorderrandom}, the results of Theorem~\ref{thmainkernel} along with the concentration bounds in Lemmas~\ref{lemmaupperboundZf} and~\ref{lemmaresidualsrht} yield high-probability bounds on the recovery error $\frac{\|\what h^{(1)} - h^*\|_\mathcal{H}}{\|h^*\|_\mathcal{H}}$ in terms of the spectral decay of the kernel class $\mathcal{K}$. Typical decays encountered in practice are polynomial (e.g., Sobolev kernel) and exponential (e.g., Gaussian kernel).

Oblivious sketching of kernel matrices with Gaussian embeddings or the SRHT has already been considered in~\cite{yang2017randomized}, in the context of kernel ridge regression: the authors analyze the statistical performance of the zero-order estimator $\what h^{(0)}$ as measured by the in-sample predictive norm $\sqrt{\sum_{i=1}^n (\what h(\omega_i) - h^*(\omega_i))^2}$. We contribute to this set of results by showing that the first-order estimator $\what h^{(1)}$ has better recovery error than $\what h^{(0)}$ in the RKHS norm. We leave as an open problem a more extensive comparison of $\what h^{(1)}$ and $\what h^{(0)}$ in the predictive norm.

\section{Smooth, convex optimization in oblivious random subspaces}
\label{sectionoblivious}

\subsection{Limited performance of $\xzero$ and $\xone$ with oblivious sketching}

We first provide an upper bound on the restricted singular value $\mathcal{Z}_f$ for an oblivious Gaussian embedding, which is significantly weaker than in the adaptive case.
\begin{lemma}[Restricted singular value with oblivious embeddings]
\label{lemmaupperboundZfoblivious}
Let $S \in \real^{d\times m}$ be a matrix with i.i.d.~Gaussian entries. Then, it holds with probability at least $1-2e^{-(d-m)}$ that 
\begin{align}
\label{eqnupperboundZfoblivious}
    \mathcal{Z}_f \less c \cdot \left(\frac{\omega(A^\top \mathcal{C}_{z^*})}{\sqrt{d}} + \sqrt{\frac{d-m}{d}} \cdot \|A\|_2\right)\,,
\end{align}
for some universal constant $c > 0$.  
\end{lemma}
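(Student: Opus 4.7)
The plan is to reduce $\mathcal{Z}_f$ to the operator norm of a uniformly random orthogonal projection acting on $T \defn A^\top \mathcal{C}_{z^*} \subset \real^d$, and then to combine a Gaussian-width expectation bound with Stiefel-manifold concentration. Since $S \in \real^{d \times m}$ has i.i.d.~Gaussian entries with $m < d$, its column span is almost surely a uniformly random $m$-dimensional subspace of $\real^d$, so $P_S^\perp = UU^\top$ for a partial Haar matrix $U \in \real^{d \times (d-m)}$, giving $\mathcal{Z}_f = \sup_{v \in T} \|U^\top v\|_2$. The inclusion $\mathcal{C}_{z^*} \subseteq \mathcal{B}_2^n$ forces $\textrm{rad}(T) \less \|A\|_2$, while $\omega(T) = \omega(A^\top \mathcal{C}_{z^*})$ is precisely the Gaussian width appearing in the statement.

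Next I would establish the standard Gaussian-width bound for Haar projections: for $U$ partial Haar in $\real^{d \times k}$ with $k = d-m$ and any $T \subset \real^d$,
\begin{align*}
    \Exs\!\left[\sup_{v \in T} \|U^\top v\|_2\right] \less c_1 \sqrt{\frac{d-m}{d}} \cdot \textrm{rad}(T) + c_2 \cdot \frac{\omega(T)}{\sqrt{d}}\,.
\end{align*}
This can be derived by a Slepian/Gordon comparison between the process $v \mapsto \|U^\top v\|_2$ and $v \mapsto \sqrt{k/d}\|v\|_2 + \langle g,v\rangle/\sqrt{d}$ with $g \sim \mathcal{N}(0, I_d)$; the correct $\sqrt{k/d}$ scale comes from the identity $\Exs \|U^\top v\|_2^2 = (k/d)\|v\|_2^2$ (valid for any fixed $v$ by rotational invariance of Haar measure). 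Equivalently, one can write $U = G(G^\top G)^{-1/2}$ with $G \in \real^{d \times k}$ standard Gaussian and combine Gordon's inequality on $\sup_v \|G^\top v\|_2$ with Wishart concentration of $(G^\top G)^{-1/2}$ around $d^{-1/2} I_k$.

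For the tail, the map $U \mapsto \sup_{v \in T} \|U^\top v\|_2$ is $\textrm{rad}(T)$-Lipschitz in the Frobenius norm on the Stiefel manifold $V_{d-m}(\real^d)$, since $|\sup_v \|U^\top v\|_2 - \sup_v \|U'^\top v\|_2| \less \|U-U'\|_{\textrm{op}} \cdot \textrm{rad}(T)$. Gaussian concentration for Lipschitz functions on Stiefel (which inherits a log-Sobolev constant of order $d$ from the round sphere) yields
\begin{align*}
    \Pr\!\left(\sup_{v \in T} \|U^\top v\|_2 > \Exs[\sup] + t\right) \less 2 \exp\!\left(-\frac{c\, d\, t^2}{\|A\|_2^2}\right)\,.
\end{align*}
Choosing $t$ proportional to $\|A\|_2 \sqrt{(d-m)/d}$ drives the tail down to $2 e^{-(d-m)}$ and absorbs the deviation into the first term of the expectation bound; combining with $\textrm{rad}(T) \less \|A\|_2$ yields the stated inequality.

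The main obstacle is calibrating the expectation bound with the correct $\sqrt{(d-m)/d}$ prefactor rather than the $\sqrt{d-m}$ one would get from a direct Gaussian matrix. A naive transfer via $U = G(G^\top G)^{-1/2}$ overestimates the scale because $\|(G^\top G)^{-1/2}\|_{\textrm{op}} \sim 1/(\sqrt{d}-\sqrt{d-m})$ blows up for small $m$; one must instead apply the scaling inside the supremum via Gordon's comparison, or go through a chaining argument on the Stiefel manifold. Once this calibration is handled, the tail step and the reduction to $\textrm{rad}(T) \less \|A\|_2$ are routine.
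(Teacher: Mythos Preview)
Your reduction to $\mathcal{Z}_f = \sup_{v \in T}\|U^\top v\|_2$ with $T = A^\top \mathcal{C}_{z^*}$ and $U \in \real^{d\times(d-m)}$ partial Haar is exactly how the paper begins. From there, however, the paper does not split into an expectation bound plus Stiefel concentration: it simply invokes Theorem~7.7.1 in Vershynin's \emph{High-Dimensional Probability} (random projections of sets), which directly gives
\[
\sup_{\Delta,\Delta' \in \mathcal{C}_{z^*}} \|U^\top A^\top(\Delta-\Delta')\|_2 \;\less\; c\!\left(\frac{\omega(A^\top \mathcal{C}_{z^*})}{\sqrt{d}} + \sqrt{\frac{d-m}{d}}\,\|A\|_2\right)
\]
with probability at least $1-2e^{-(d-m)}$, and then uses $0 \in \mathcal{C}_{z^*}$ to pass from the diameter to the radius. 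Your proposal is essentially a sketch of \emph{that} theorem's proof---Gordon-type comparison for the mean with the correct $\sqrt{(d-m)/d}$ scale, then Lipschitz concentration on the Stiefel manifold for the tail---and you correctly identify the calibration pitfall in the naive $U=G(G^\top G)^{-1/2}$ transfer. So the argument is sound, but it reconstructs a known black box; citing the theorem is shorter and avoids the scaling subtlety you flagged.
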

Even for a small width $\omega(A^\top \mathcal{C}_{z^*})$, the upper bound~\eqref{eqnupperboundZfoblivious} scales at least as $\sqrt{\frac{d-m}{d}}\cdot \|A\|_2$, and this is large unless $d \approx m$, which defeats the purpose of sketching. This suggests a limited performance of the estimators $\xzero$ and $\xone$ with oblivious embeddings. We formalize this statement by deriving lower bounds on their respective recovery error. For conciseness, we focus on the expected relative error, where the expectation is taken with respect to the randomness of the embedding matrix $S$.

\begin{theorem}[Lower bound on the recovery error of $\xzero$ with oblivious sketching]
\label{theoremboundzeroorderrandom}
It holds for both Gaussian embeddings and the SRHT that
\begin{align}
\label{eqnlowerboundzeroorder}
    \mathbb{E}_S\!\left\{\frac{\|\xzero - x^*\|_2^2}{\|x^*\|_2^2}\right\} \gre 1-\frac{m}{d}\,.
\end{align}
\end{theorem}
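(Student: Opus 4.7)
The plan is to exploit a very simple geometric observation: since $\xzero = S\alpha^*$ lies in the range of $S$ by construction, and $P_S x^*$ is the best $\ell_2$-approximation to $x^*$ within $\text{range}(S)$, we have the deterministic lower bound
\begin{align*}
    \|\xzero - x^*\|_2^2 \,\geq\, \|P_S^\perp x^*\|_2^2 \,=\, \langle x^*, P_S^\perp x^*\rangle,
\end{align*}
which holds for every realization of $S$ and every minimizer $\alpha^*$. Taking expectation and dividing by $\|x^*\|_2^2$ (a deterministic quantity), the result reduces to showing $\mathbb{E}_S[P_S^\perp] \succeq (1 - m/d)\, I_d$, and in fact we will establish equality.

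The second step is therefore to compute $\mathbb{E}_S[P_S]$ for each of the two oblivious classes. For a Gaussian embedding, rotational invariance gives $U P_S U^\top \stackrel{d}{=} P_S$ for every orthogonal $U \in \real^{d \times d}$, so $\mathbb{E}_S[P_S]$ is a scalar multiple of $I_d$. Since $P_S$ is a.s. a rank-$m$ orthoprojector, $\text{trace}(P_S) = m$ almost surely, forcing $\mathbb{E}_S[P_S] = (m/d) I_d$. For the SRHT we cannot appeal to rotational invariance, so we compute directly: writing $S = \sqrt{d/m}\, D H R$ and using $H^\top H = d\, I_d$ together with $R^\top H^\top H R = d\, I_m$, one finds
\begin{align*}
    P_S \,=\, S(S^\top S)^{-1} S^\top \,=\, \tfrac{1}{d}\, D H R R^\top H^\top D.
\end{align*}
Taking expectation, $\mathbb{E}_R[R R^\top] = (m/d) I_d$ (uniform column sub-sampling of $I_d$), and since $D^2 = I_d$ and $HH^\top = d\, I_d$,
\begin{align*}
    \mathbb{E}_S[P_S] \,=\, \tfrac{1}{d}\, D H \cdot \tfrac{m}{d} I_d \cdot H^\top D \,=\, \tfrac{m}{d}\, I_d.
\end{align*}

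Combining the two steps, in both cases $\mathbb{E}_S[P_S^\perp] = (1 - m/d)\, I_d$, so
\begin{align*}
    \mathbb{E}_S\!\left[\frac{\|\xzero - x^*\|_2^2}{\|x^*\|_2^2}\right] \,\geq\, \frac{\langle x^*, \mathbb{E}_S[P_S^\perp]\, x^*\rangle}{\|x^*\|_2^2} \,=\, 1 - \frac{m}{d},
\end{align*}
which is exactly~\eqref{eqnlowerboundzeroorder}. The only subtlety is the SRHT computation, which requires the explicit closed form of $P_S$ and the column-sub-sampling identity; everything else is a one-line projection argument plus linearity of expectation. Note also that this lower bound is independent of $f$, $\lambda$, and $A$, reflecting the fact that merely confining $\xzero$ to an $m$-dimensional random subspace already loses a fraction $1 - m/d$ of $\|x^*\|_2^2$ on average.
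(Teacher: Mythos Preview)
Your proof is correct and follows essentially the same approach as the paper: the paper's argument is the same Pythagorean decomposition $\|\xzero - x^*\|_2^2 = \|P_S(\xzero - x^*)\|_2^2 + \|P_S^\perp x^*\|_2^2 \geq \|P_S^\perp x^*\|_2^2$, followed by the identity $\mathbb{E}_S\|P_S^\perp x^*\|_2^2 = (1-\tfrac{m}{d})\|x^*\|_2^2$, which the paper simply asserts for both embeddings. Your contribution is to spell out this expectation explicitly, via rotational invariance for the Gaussian case and via the closed form $P_S = \tfrac{1}{d}\,D H R R^\top H^\top D$ together with $\mathbb{E}_R[RR^\top]=\tfrac{m}{d}I_d$ for the SRHT, which is a useful level of detail the paper omits.
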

\begin{proof}
Note that $\xzero = P_S \xzero$ and thus, $\|\xzero - x^*\|_2^2 = \|P_S (\xzero - x^*)\|_2^2 + \|P_S^\perp x^*\|_2^2$, i.e., $\|\xzero - x^*\|^2_2 \gre \|P_S^\perp x^*\|^2_2$. Using that $\mathbb{E}_S\|P_S^\perp x^*\|_2^2 = (1-\frac{m}{d})\|x^*\|_2^2$ for both embeddings yields the claim.
\end{proof}
The estimator $\xzero$ lies in a low-dimensional oblivious random subspace and does not recover the residual projection of $x^*$ onto $P_S^\perp$: the error $\frac{\|P_S^\perp x^*\|^2_2}{\|x^*\|^2_2} \approx 1-\frac{m}{d}$ is large unless $m \approx d$. We provide next a lower bound on the \emph{worst-case} recovery error of $\xone$, for which we assume the function $f$ to be $\gamma$-strongly convex, i.e., $f(y) \gre f(x) + \langle \nabla f(x), y-x\rangle + \frac{\gamma}{2}\|y-x\|_2^2$ for any $x,y \in \real^d$. Although a similar result could hold for the SRHT, the proof would involve different technical arguments and we specialize our result to Gaussian embeddings for conciseness.

\begin{theorem}[Lower bound on the recovery error of $\xone$ with oblivious sketching]
\label{theoremlowerboundfirstorderobliviousrandom}
Let $S \in \real^{d \times m}$ be a matrix with i.i.d.~Gaussian entries. Then, it holds that 
\begin{align}
\label{eqnlowerboundfirstorder}
    \sup_{f \in \mathcal{F}_{\gamma,\mu}} \, \mathbb{E}_S\!\left\{\frac{\|\xone - x^*\|_2^2}{\|x^*\|_2^2}\right\} \gre \left(1-\frac{m}{d}\right)^3 \cdot \frac{\sigma_1^4}{(\sigma_1^2 + \frac{2\lambda}{\gamma})^2}\,,
\end{align}
where $\mathcal{F}_{\gamma,\mu}$ is the set of real-valued functions defined over $\real^n$, which are $\gamma$-strongly convex and $\mu$-smooth.
\end{theorem}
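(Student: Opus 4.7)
The plan is to exhibit a single adversarial $f^\star \in \mathcal{F}_{\gamma,\mu}$ realizing the bound. Writing a thin SVD $A = U\Sigma V^\top$, I would take $f^\star(w) \defn \frac{\gamma}{2}\|w-u_1\|_2^2$, which is $\gamma$-strongly convex and $\gamma$-smooth (hence $\mu$-smooth, since $\mu \gre \gamma$), so $f^\star \in \mathcal{F}_{\gamma,\mu}$. The primal KKT gives $x^* = c_0 v_1$ with $c_0 \defn \gamma\sigma_1/(\gamma\sigma_1^2 + \lambda)$. Picking an orthonormal basis $B \in \real^{d\times m}$ of $\mathrm{range}(S)$, writing $\xzero = B\beta$, and solving the sketched quadratic in closed form yields $\langle v_1, \xzero\rangle = \gamma\sigma_1\, \xi^\top (\gamma C + \lambda I_m)^{-1}\xi$, with $\xi \defn B^\top v_1$ and $C \defn B^\top A^\top A B \succeq 0$.

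Because $\nabla f^\star$ is affine, I can derive the identity $\xone - x^* = (\gamma/\lambda)\, A^\top A\, (x^* - \xzero)$ by combining $\xone = -\lambda^{-1} A^\top \nabla f^\star(A\xzero)$ with the primal KKT for $x^*$. Projecting on $v_1$ and using $v_1^\top A^\top A = \sigma_1^2 v_1^\top$ gives the deterministic lower bound
\[
    \frac{\|\xone - x^*\|_2^2}{\|x^*\|_2^2} \gre \left(\frac{\gamma\sigma_1^2}{\lambda}\right)^{2} \left(1 - \frac{\langle v_1, \xzero\rangle}{c_0}\right)^{2}.
\]

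The main technical hurdle is to control $\langle v_1, \xzero\rangle$ for \emph{arbitrary} $A$, not only rank one, and the key observation is a rank-one reduction via Loewner monotonicity: expanding $A^\top A = \sum_i \sigma_i^2 v_iv_i^\top$ gives $C = \sum_i \sigma_i^2 (B^\top v_i)(B^\top v_i)^\top \succeq \sigma_1^2 \xi\xi^\top$, so inverting (which reverses the Loewner order on positive definite matrices) yields
\[
    \langle v_1, \xzero\rangle \less \gamma\sigma_1\, \xi^\top (\gamma\sigma_1^2 \xi\xi^\top + \lambda I_m)^{-1} \xi = \frac{\gamma\sigma_1\, r^2}{\gamma\sigma_1^2 r^2 + \lambda},
\]
with $r^2 \defn \|\xi\|_2^2 = \|P_S v_1\|_2^2$, the equality following from Sherman--Morrison (equivalently, $\xi$ is an eigenvector of $\gamma\sigma_1^2 \xi\xi^\top + \lambda I_m$ with eigenvalue $\gamma\sigma_1^2 r^2 + \lambda$). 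Substituting and simplifying gives $1 - \langle v_1, \xzero\rangle/c_0 \gre \lambda(1-r^2)/(\gamma\sigma_1^2 r^2 + \lambda)$, and loosening the denominator through $\gamma\sigma_1^2 r^2 + \lambda \less \gamma\sigma_1^2 + 2\lambda = \gamma(\sigma_1^2 + 2\lambda/\gamma)$ (which is where the factor of $2$ in the final bound's denominator originates) produces
\[
    \frac{\|\xone - x^*\|_2^2}{\|x^*\|_2^2} \gre \frac{\sigma_1^4}{(\sigma_1^2 + 2\lambda/\gamma)^2}\, \|P_S^\perp v_1\|_2^4.
\]

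To conclude, I would randomize: since $S$ has i.i.d.\ Gaussian entries, $\mathrm{range}(S)$ is uniformly distributed on $m$-dimensional subspaces by rotational invariance, so $\|P_S^\perp v_1\|_2^2 \sim \mathrm{Beta}((d-m)/2, m/2)$, whose second moment equals $(d-m)(d-m+2)/[d(d+2)]$. An elementary rearrangement, equivalent to $(d-k)(dk + 2d + 2k) \gre 0$ with $k \defn d - m$, shows this quantity is at least $(1-m/d)^3$. Taking expectations then yields the claimed bound. The main conceptual obstacle is the Loewner reduction in paragraph three: it is not obvious \emph{a priori} that enriching the spectrum of $A$ beyond $\sigma_1$ cannot help the sketched solution's alignment with $v_1$, but this monotonicity is exactly what collapses the general case back to the tractable rank-one geometry.
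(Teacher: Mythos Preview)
Your proof is correct and takes a genuinely different route from the paper's.

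The paper works through the dual first-order optimality conditions for a \emph{generic} $f\in\mathcal{F}_{\gamma,\mu}$: it invokes a technical lemma producing a symmetric $H$ with $0\prec H\preceq(2/\gamma)I$ and $g_{y^*}-g_{z^*}=H(y^*-z^*)$, inverts to get $\xone-x^*=A^\top(\lambda H+AP_SA^\top)^{-1}AP_S^\perp x^*$, lower-bounds via $M\defn(2\lambda\gamma^{-1}I+AA^\top)^{-1/2}A$, then uses a separate lemma decomposing $P_S^\perp x^*/\|P_S^\perp x^*\|_2$ into an $x^*$-aligned piece plus an independent uniform direction, and closes with two applications of Jensen before finally taking the supremum over $f$ (implicitly choosing one with $x^*\propto v_1$). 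Your argument instead fixes the explicit quadratic witness $f^\star=\tfrac{\gamma}{2}\|\cdot-u_1\|_2^2$ from the start, carries out the whole computation in primal coordinates in closed form, and replaces all of that machinery with a single Loewner-monotonicity reduction $C\succeq\sigma_1^2\xi\xi^\top$ to the rank-one case, followed by a direct Beta second-moment computation.

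What each buys: your route is more elementary and self-contained, avoiding the three auxiliary lemmas (the $H$-representation, the random-projection decomposition, and the double Jensen), and it in fact delivers the sharper factor $(\sigma_1^2+\lambda/\gamma)^{-2}$ before you deliberately loosen to $2\lambda/\gamma$ to match the stated bound. In the paper, the factor $2$ is intrinsic to the method, arising from the bound $H\preceq(2/\gamma)I$ in the representation lemma. The paper's approach, on the other hand, yields an intermediate pointwise lower bound valid for every $f$ in the class, and its lemmas are of somewhat independent interest.
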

As for the zero-order estimator, the first-order estimator $\xone$ with oblivious embeddings has a (worst-case) recovery error bounded away from $0$, unless $m \approx d$. In both lower bounds~\eqref{eqnlowerboundzeroorder} and~\eqref{eqnlowerboundfirstorder}, the limiting factor $1-\frac{m}{d}$ is primarily due to the bias $\mathbb{E}\{P_S^\perp\} = \left(1-\frac{m}{d}\right) I_d \neq 0$, which we aim to address next.

\subsection{Improved performance through unbiased oblivious embeddings}
\label{sectionunbiasedoblivioussketching}

We consider a variant of the sketched primal program~\eqref{eqsketchedprimal}, given by
\begin{align}
\label{eqnrescaledsketchedprimal}
    \alpha^*_\dagger \defn \argmin_{\alpha_\dagger \in \real^m}\left\{ f(AQ\alpha_\dagger) + \frac{\lambda}{2}\|\alpha_\dagger\|_2^2 \right\}\,,
\end{align}
where $Q \in \real^{d \times m}$ is a sketching matrix. The low-dimensional formulations~\eqref{eqsketchedprimal} and~\eqref{eqnrescaledsketchedprimal} only differ in the choice of the regularization term, that is, $\frac{\lambda}{2} \|S\alpha\|_2^2$ versus $\frac{\lambda}{2} \|\alpha_\dagger\|_2^2$. We define the corresponding zero- and first-order estimators based on the (unique) low-dimensional solution $\alpha_\dagger^*$ as
\begin{align}
    \xzero_\dagger \defn Q \alpha_\dagger^*\,,\qquad \xone_\dagger \defn -\frac{1}{\lambda} A^\top \nabla f(AQ\alpha^*_\dagger)\,.
\end{align}
The next result relates the two low-dimensional programs~\eqref{eqsketchedprimal} and~\eqref{eqnrescaledsketchedprimal} more precisely. For an embedding matrix $S$, we denote by $U_S \Sigma_S V_S^\top$ a thin SVD of $S$, and we define its \emph{whitened} version as
\begin{align}
\label{eqnSwhite}
    Q_S \defn U_S V_S^\top\,, 
\end{align}
\begin{lemma}
\label{lemmaequivalencelowdim}
Let $S \in \real^{d \times m}$ be an embedding matrix, and let $\alpha_\dagger^*$ be the low-dimensional solution of~\eqref{eqnrescaledsketchedprimal} with the whitened matrix $Q_S$. Then, it holds that the vector $\alpha^* \defn V_S \Sigma_S^{-1} V_S^\top \alpha_\dagger^*$ is a solution of~\eqref{eqsketchedprimal} with the embedding matrix $S$. Furthermore, the corresponding zero- and first-order estimators of $\alpha^*$ and $\alpha_\dagger^*$ are respectively equal, i.e., 
\begin{align}
    \xzero = \xzero_\dagger\,,\qquad \xone = \xone_\dagger\,.
\end{align}
\end{lemma}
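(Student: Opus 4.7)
The strategy is simply to exhibit an explicit bijective change of variables between the decision variables $\alpha \in \real^m$ of program~\eqref{eqsketchedprimal} and $\alpha_\dagger \in \real^m$ of program~\eqref{eqnrescaledsketchedprimal} (with embedding $Q_S$) under which the two objectives become pointwise identical. The natural candidate is dictated by the statement itself: set
\begin{align*}
    \alpha = V_S \Sigma_S^{-1} V_S^\top \alpha_\dagger\,, \qquad\text{equivalently,}\qquad \alpha_\dagger = V_S \Sigma_S V_S^\top \alpha\,.
\end{align*}
Since $V_S \in \real^{m \times m}$ is orthogonal and $\Sigma_S$ is invertible (we may assume $S$ has full column rank $m$, as is the case almost surely for Gaussian embeddings and deterministically for the SRHT), this defines a bijection on $\real^m$.

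The key computation is then a one-line cancellation using the thin SVD $S = U_S \Sigma_S V_S^\top$:
\begin{align*}
    S\alpha \;=\; U_S \Sigma_S V_S^\top \cdot V_S \Sigma_S^{-1} V_S^\top \alpha_\dagger \;=\; U_S V_S^\top \alpha_\dagger \;=\; Q_S \alpha_\dagger\,.
\end{align*}
This immediately gives $f(AS\alpha) = f(AQ_S \alpha_\dagger)$ for the data-fitting term. For the regularizer, using the orthogonality of the columns of $U_S$ and of $V_S$ yields $\|S\alpha\|_2^2 = \|U_S V_S^\top \alpha_\dagger\|_2^2 = \|V_S^\top \alpha_\dagger\|_2^2 = \|\alpha_\dagger\|_2^2$. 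Consequently, the two objectives agree pointwise under the bijection, so if $\alpha_\dagger^*$ minimizes~\eqref{eqnrescaledsketchedprimal} with $Q = Q_S$, then $\alpha^* \defn V_S \Sigma_S^{-1} V_S^\top \alpha_\dagger^*$ minimizes~\eqref{eqsketchedprimal}.

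For the equivalence of the estimators, the zero-order case is immediate from the identity $S\alpha^* = Q_S \alpha_\dagger^*$ obtained above, giving $\xzero = S\alpha^* = Q_S \alpha_\dagger^* = \xzero_\dagger$. The first-order case follows from the same identity since
\begin{align*}
    \xone \;=\; -\tfrac{1}{\lambda} A^\top \nabla f(A S \alpha^*) \;=\; -\tfrac{1}{\lambda} A^\top \nabla f(A Q_S \alpha_\dagger^*) \;=\; \xone_\dagger\,.
\end{align*}
There is essentially no obstacle here beyond bookkeeping with the SVD factors; the only subtle point is ensuring $\Sigma_S$ is invertible, which is why it is natural to work under the standing full-rank assumption on $S$ (the same assumption implicitly used throughout the paper's sketching analysis).
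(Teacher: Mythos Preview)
Your proof is correct and reaches the same conclusion as the paper through a slightly different route. The paper argues via first-order optimality conditions: it starts from $Q_S^\top A^\top \nabla f(AQ_S\alpha_\dagger^*) + \lambda \alpha_\dagger^* = 0$, left-multiplies by $V_S \Sigma_S V_S^\top$, and checks that the result is exactly the stationarity condition $S^\top A^\top \nabla f(AS\alpha^*) + \lambda S^\top S \alpha^* = 0$ for~\eqref{eqsketchedprimal}. It then derives the same identity $Q_S \alpha_\dagger^* = S\alpha^*$ that you obtain, from which $\xzero = \xzero_\dagger$ and $\xone = \xone_\dagger$ follow. Your change-of-variables argument is arguably more transparent, since showing that the two objectives agree pointwise under a bijection immediately gives correspondence of minimizers without invoking differentiability of $f$; the paper's optimality-condition approach has the mild advantage that it only needs $V_S^\top V_S = I$ and does not rely on $V_S$ being square (i.e., it does not require the full-column-rank assumption on $S$ that you flag, though as you note this holds for the sketches considered).
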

\begin{proof}
We have by first-order optimality conditions that $Q_S^\top A^\top \nabla f(AQ_S\alpha_\dagger^*) + \lambda \alpha_\dagger^* = 0$. Multiplying by $V_S \Sigma_S V_S^\top$ and plugging-in the definition of $\alpha^*$, we obtain $S^\top A^\top \nabla f(AS \alpha^*) + \lambda S^\top S \alpha^* = 0$, i.e., $\alpha^*$ is a solution of~\eqref{eqsketchedprimal}. On the other hand, we have $Q_S\alpha_\dagger^* = U_S V_S^\top V_S \Sigma_S V_S^\top \alpha^* = S\alpha^*$, which further implies that $\xzero = \xzero_\dagger$ and $\xone=\xone_\dagger$.
\end{proof}
The low-dimensional formulation~\eqref{eqsketchedprimal} is equivalent to~\eqref{eqnrescaledsketchedprimal} with the whitened matrix $Q_S$, in the sense that they yield the same zero- and first-order estimators. In addition to the whitened matrices $Q_S$ which are in general biased (e.g., $\mathbb{E}\{Q_SQ_S^\top\} = \frac{m}{d}\cdot I_d$ for an oblivious Gaussian embedding $S$), the formulation~\eqref{eqnrescaledsketchedprimal} can incorporate any sketching matrix $Q$, and in particular, random matrices such that $\mathbb{E}\{QQ^\top\} = I_d$. This yields a larger class of zero- and first-order estimators, which may overcome the aforementioned limited performance of $\xzero$ and $\xone$ with oblivious embeddings. The estimator $\xone_\dagger$ with an unbiased oblivious Gaussian embedding has already been considered in~\cite{zhang2013recoveringCOLT}, and we recall their main result (Theorem~6 in~\cite{zhang2013recoveringCOLT}).
\begin{theorem}[Recovery error of the unbiased oblivious estimator $\xone_\dagger$]
\label{theoremzhangrecovering}
Suppose that the function $f$ is separable and that the solution $x^*$ lies in the span of the top $k$ right singular vectors of $A$. Let $Q \in \real^{d \times m}$ be a matrix with i.i.d.~Gaussian entries $\mathcal{N}(0,1/m)$. Then, provided that $m \gre 32 d_{\lambda/\mu} \log(2d/\delta)$, it holds that
\begin{align}
\label{equpperboundxonedagger}
    \frac{\|\xone_\dagger - x^*\|_2}{\|x^*\|_2} \less \sqrt{128 \log(2d/\delta)} \cdot \sqrt{\frac{d_{\lambda/\mu}}{m}} \cdot \left(1+\sqrt{\frac{\lambda}{\mu \sigma_k^2}}\right)\,,
\end{align}
with probability at least $1-\delta$.
\end{theorem}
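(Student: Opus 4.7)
The plan is to reduce the recovery error to controlling a dual residual through Fenchel duality, then apply Gaussian concentration to bound that residual in terms of the effective dimension $d_{\lambda/\mu}$.

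First, I would apply the strong Fenchel duality correspondence (the analogs of Propositions~\ref{propositionfenchelduality} and~\ref{propositionsketchedfenchelduality} adapted to the rescaled program~\eqref{eqnrescaledsketchedprimal}) to write $\xone_\dagger - x^* = -\lambda^{-1}A^\top \Delta_y$ where $\Delta_y \defn y^*_\dagger - z^*$, $z^* = \nabla f(Ax^*)$, and $y^*_\dagger = \nabla f(AQ\alpha^*_\dagger)$ solves the sketched dual $\min_y f^*(y) + \frac{1}{2\lambda}\|Q^\top A^\top y\|_2^2$. Combining the $(1/\mu)$-strong convexity of $f^*$ with optimality of $y^*_\dagger$ and the KKT identity $A^\top z^* = -\lambda x^*$, I would derive the basic inequality
\begin{equation*}
\frac{1}{2\mu}\|\Delta_y\|_2^2 + \frac{1}{2\lambda}\|Q^\top A^\top \Delta_y\|_2^2 \less \langle x^*, (QQ^\top - I_d) A^\top \Delta_y\rangle.
\end{equation*}

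Second, I would exploit the structural hypothesis $x^* \in \textrm{range}(V_k)$ to bound the right-hand side via $\|x^*\|_2 \cdot \|V_k^\top(QQ^\top - I_d)A^\top \Delta_y\|_2$ by Cauchy-Schwarz, and then split $A^\top \Delta_y$ along the top-$k$ and tail singular subspaces of $A$. The tail contribution $V_{\rho-k} V_{\rho-k}^\top A^\top \Delta_y$ has norm at most $\sigma_{k+1}\|\Delta_y\|_2$, which when combined with the absorbing $\frac{1}{2\mu}\|\Delta_y\|_2^2$ term on the left and a rebalancing produces the factor $(1 + \sqrt{\lambda/(\mu\sigma_k^2)})$ in the final bound. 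The top-$k$ contribution scales with $\|A^\top \Delta_y\|_2$, which is the quantity we want to control.

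Third, the main probabilistic step is to show that for Gaussian $Q$ with $m \gre 32 d_{\lambda/\mu}\log(2d/\delta)$, the operator $QQ^\top - I_d$ acts as a small perturbation on the relevant low-dimensional subspace. Writing the effective dimension as the stable rank $\|D_{\lambda/\mu}\|_F^2/\|D_{\lambda/\mu}\|_2^2$ of $D_{\lambda/\mu} = A(\frac{\lambda}{\mu} I_d + A^\top A)^{-1}A^\top$, I would apply either an approximate matrix multiplication bound or matrix Bernstein to $V_k^\top(QQ^\top - I_d)M$ for a preconditioned factor $M$ of $A^\top$ whose stable rank is of order $d_{\lambda/\mu}$. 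This yields a bound of the form $\|V_k^\top (QQ^\top-I_d) A^\top \Delta_y\|_2 \lesssim \sqrt{\log(2d/\delta)\,d_{\lambda/\mu}/m} \cdot (\sigma_1 \|\Delta_y\|_2 + \|A^\top \Delta_y\|_2)$ with probability at least $1-\delta$. Plugging this back into the basic inequality and applying Young's inequality to absorb the cross terms on the left, one obtains a quadratic inequality in $\|A^\top \Delta_y\|_2/\|x^*\|_2$ whose solution matches the claimed recovery error after using $\|\xone_\dagger - x^*\|_2 = \lambda^{-1}\|A^\top \Delta_y\|_2$.

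The main obstacle will be the third step: producing the concentration inequality with the correct effective-dimension scaling. The factor $d_{\lambda/\mu}$ rather than $k$ or $d$ appears only when the Bernstein argument is applied to a carefully chosen preconditioned matrix whose stable rank equals $d_{\lambda/\mu}$ up to constants, and pushing through the algebra so that the final bound decouples cleanly into $\sqrt{d_{\lambda/\mu}/m}$ times $(1 + \sqrt{\lambda/(\mu\sigma_k^2)})$ demands delicate bookkeeping of the trade-off between the $\|\Delta_y\|_2^2$ term in the basic inequality and the spectral threshold at $\sqrt{\lambda/\mu}$. The separability assumption on $f$ likely plays a minor role, possibly simplifying the strong-convexity step coordinatewise or enabling existence of $y^*_\dagger$; the more essential structural assumption is $x^* \in \textrm{range}(V_k)$, which is what enables the tail singular values to contribute only through $(1 + \sqrt{\lambda/(\mu\sigma_k^2)})$ rather than through the full spectrum of $A$.
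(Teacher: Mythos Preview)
The paper does not prove this theorem. It is stated explicitly as a recall of Theorem~6 in~\cite{zhang2013recoveringCOLT}: the sentence immediately preceding the statement reads ``we recall their main result (Theorem~6 in~\cite{zhang2013recoveringCOLT}).'' There is therefore no proof in the paper to compare your proposal against; the result is imported wholesale from the cited reference and used only as a benchmark for the adaptive estimator.

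Your outline is a plausible reconstruction of how such a bound is obtained, and the Fenchel-dual basic inequality combined with the low-rank assumption on $x^*$ is indeed the natural route. One technical slip: you write the effective dimension as the stable rank $\|D_{\lambda/\mu}\|_F^2/\|D_{\lambda/\mu}\|_2^2$, but the paper (and the cited source) define it as $d_{\lambda/\mu} = \text{trace}(D_{\lambda/\mu})/\|D_{\lambda/\mu}\|_2$; these differ in general since $D_{\lambda/\mu}$ is not a projector. If you intend to actually carry out the concentration step, you would need to work with the trace-based quantity, which is what drives the matrix-Bernstein variance term in the original argument.
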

The best-case upper bound~\eqref{equpperboundxonedagger} scales as $\sqrt{\frac{d_{\lambda/\mu}}{m}}$. That is, the sketch size $m$ needs to scale at least as the effective dimension which can be much smaller than the ambient dimension $d$. This significantly improves on the guarantees for $\xzero$ and $\xone$ with oblivious embeddings. Furthermore, the oblivious zero-order estimator $\xzero_\dagger$ has worse performance than $\xone_\dagger$ (see Theorem~3 in~\cite{zhang2013recoveringCOLT}). The assumptions of Theorem~6 in~\cite{zhang2013recoveringCOLT} have been slightly relaxed by the same authors (see Theorem~2 and Corollary 3 in~\cite{zhang2014recoveringIEEE}): provided that the projection of $x^*$ onto the subspace orthogonal to the top $k$ right singular vectors of $A$ has small enough norm, then the recovery error scales as $\sqrt{\frac{k}{m}}$.

Let us now compare the guarantees for the oblivious estimator $\xone_\dagger$ and for the adaptive estimator $\xone$. Besides being independent of any assumption on $x^*$, the upper bound~\eqref{eqnboundfirstorderrandomgaussian} scales as $\sqrt{\frac{\mu}{\lambda}} \cdot R_{m/2}(A)$. In light of the comparison between the characteristic quantities $\sqrt{\frac{d_{\lambda/\mu}}{m}}$ and $\sqrt{\frac{\mu}{\lambda}} \cdot R_{m/2}(A)$ provided in Table~\ref{tablecomparisonspectralresidualeffdim}, the guarantees for the adaptive estimator $\xone$ are in general stronger than the best-case guarantee~\eqref{equpperboundxonedagger} for the unbiased oblivious estimator $\xone_\dagger$.  

In light of the results of Theorems~\ref{theoremzeroversusfirstorder},~\ref{theoremboundfirstorderrandom},~\ref{theoremboundzeroorderrandom},~\ref{theoremlowerboundfirstorderobliviousrandom} and~\ref{theoremzhangrecovering}, we obtain that among the different possible estimators $\xzero$ and $\xone$ with oblivious and adaptive sketching, and, $\xzero_\dagger$ and $\xone_\dagger$ with unbiased oblivious sketching, the strongest guarantees are obtained for the first-order estimator $\xone$ with an adaptive embedding. We compare numerically the performance of the adaptive estimator $\xone$ and the unbiased oblivious estimator $\xone_\dagger$ with Gaussian embeddings and for polynomial and exponential decays. We use $n=1000$, $d=2000$ and we generate data matrices with respective spectral decay $\sigma_j = \sqrt{n} e^{-0.05 j}$ and $\sigma_j = \sqrt{n} j^{-1}$. We consider two convex smooth loss functions: the logistic function $f(w) = n^{-1} \sum_{i=1}^n \log(1+e^{-y_i w_i})$ where $y \in \{\pm 1\}^n$, and, a second loss function $f(w) = (2n)^{-1} \sum_{i=1}^n (w_i)^2_+ - 2w_i y_i$ (where $a_+ \defn \max\{a,0\}$ for $a\in \real$) which can be seen as the convex relaxation of the penalty $\frac{1}{2}\|w_+ - y\|_2^2$ for fitting a shallow neural network with a ReLU non-linearity~\cite{ergen2019convex,ergen2020convex,pilanci2020neural}. We report results in Figure~\ref{figsyntheticresults}. The adaptive estimator $\xone$ has better empirical performance than the oblivious one $\xone_\dagger$.

\begin{figure}[h!]
\centering
\includegraphics[width=0.95\columnwidth]{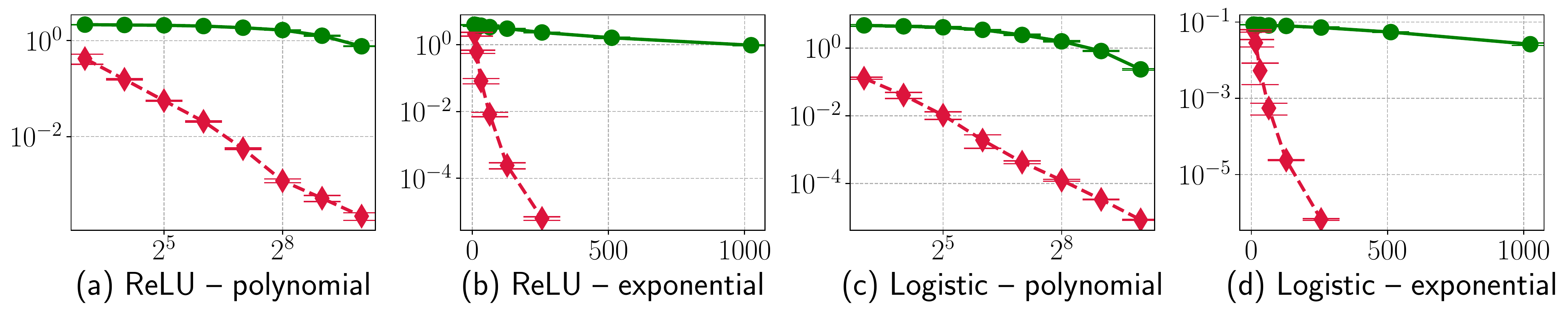}
\caption{Relative recovery error versus sketching dimension $m \in \{2^k \mid 3\less k\less10\}$ of $\xone$ (red diamonds) and $\xone_\dagger$ (green circles), for the ReLU-type and logistic loss functions, and the exponential and polynomial decays. We use $\lambda=10^{-4}$ for all simulations. Results are averaged over $10$ trials. Bar plots show (twice) the empirical standard deviations.}
\label{figsyntheticresults}
\end{figure}

\section{Algorithms for smooth convex objectives in adaptive random subspaces}
\label{sectionalgorithms}

\subsection{Prototype algorithm for adaptive sketching}

A standard quantity to characterize whether a convex program can be solved efficiently is its condition number~\cite{boyd2004convex}, which, for the primal~\eqref{eqprimal} and sketched program~\eqref{eqsketchedprimal}, is respectively given by
\begin{align}
    \kappa \defn \sup_{x \in \real^d}\,\frac{\lambda + \sigma_{1\!}\left( A^\top \nabla^2 f(Ax) A\right)}{\lambda + \sigma_{d}\!\left( A^\top \nabla^2 f(Ax) A\right)}\,, \qquad \kappa_S \defn \sup_{\alpha \in \real^m}\,\frac{\sigma_{1}\!\left( S^\top (\lambda I_d + A^\top \nabla^2 f(A S \alpha) A) S\right)}{\sigma_{m}\!\left( S^\top ( \lambda I_d + A^\top \nabla^2 f(A S \alpha) A) S\right)}\,.
\end{align}
The latter can be significantly larger than $\kappa$, up to $\kappa_S \approx \kappa \cdot\frac{\sigma_{1}\left(S^\top S\right)}{\sigma_{m}\left(S^\top S\right)} \gg \kappa$. According to Lemma~\ref{lemmaequivalencelowdim}, we can solve instead the optimization problem~\eqref{eqnrescaledsketchedprimal} with the whitened matrix $Q_S$, and we have $\xone = -\lambda^{-1} A^\top \nabla f(A Q_S\alpha_\dagger^*)$. Fortunately, the re-scaled sketched program~\eqref{eqnrescaledsketchedprimal} with $Q=Q_S$ is numerically well-conditioned: in fact, it is even better conditioned than the original primal program~\eqref{eqprimal}.
\begin{proposition}
\label{prop:numericalconditioning}
The condition number $\kappa_\dagger$ of the re-scaled sketched program~\eqref{eqnrescaledsketchedprimal}
\begin{align}
    \kappa_\dagger \defn \sup_{\alpha \in \real^m} \frac{\lambda + \sigma_1\!\left(Q_S^\top A^\top \nabla^2 f(AQ_S\alpha)AQ_S\right)}{\lambda + \sigma_m\!\left(Q_S^\top A^\top \nabla^2 f(AQ_S\alpha)AQ_S\right)}
\end{align}
satisfies $\kappa_\dagger \less \kappa$ almost surely.
\end{proposition}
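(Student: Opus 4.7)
The plan is to reduce this to an elementary variational-eigenvalue inequality, using the fact that $Q_S = U_S V_S^\top$ has orthonormal columns. Specifically, writing out $Q_S^\top Q_S = V_S U_S^\top U_S V_S^\top = V_S V_S^\top = I_m$ (since $V_S \in \real^{m \times m}$ is orthogonal and $U_S \in \real^{d \times m}$ has orthonormal columns), I get that $Q_S : \real^m \to \real^d$ is an isometric embedding onto the $m$-dimensional subspace $\text{range}(S)$.

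The core step is the following general observation: if $M \in \real^{d \times d}$ is symmetric positive semidefinite and $Q \in \real^{d \times m}$ satisfies $Q^\top Q = I_m$, then by the Courant–Fischer characterization,
\begin{align*}
\sigma_1(Q^\top M Q) &= \sup_{\|u\|_2 = 1} u^\top Q^\top M Q u = \sup_{\substack{v \in \text{range}(Q) \\ \|v\|_2=1}} v^\top M v \,\less\, \sup_{\|v\|_2=1} v^\top M v = \sigma_1(M), \\
\sigma_m(Q^\top M Q) &= \inf_{\|u\|_2=1} u^\top Q^\top M Q u = \inf_{\substack{v \in \text{range}(Q) \\ \|v\|_2=1}} v^\top M v \,\gre\, \inf_{\|v\|_2=1} v^\top M v = \sigma_d(M),
\end{align*}
where I used that $v = Qu$ traces out exactly the unit sphere of $\text{range}(Q)$ as $u$ traces the unit sphere of $\real^m$, and that taking infimum over a smaller set can only increase the value.

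I would then apply this with $Q = Q_S$ and, for each fixed $\alpha \in \real^m$, $M = A^\top \nabla^2 f(AQ_S\alpha) A$, which is indeed symmetric PSD by convexity of $f$. This yields simultaneously $\lambda + \sigma_1(Q_S^\top M Q_S) \less \lambda + \sigma_1(M)$ and $\lambda + \sigma_m(Q_S^\top M Q_S) \gre \lambda + \sigma_d(M)$, so the ratio defining the summand of $\kappa_\dagger$ at $\alpha$ is upper bounded by
\begin{align*}
\frac{\lambda + \sigma_1(A^\top \nabla^2 f(AQ_S\alpha)A)}{\lambda + \sigma_d(A^\top \nabla^2 f(AQ_S\alpha)A)} \,\less\, \sup_{x \in \real^d} \frac{\lambda + \sigma_1(A^\top \nabla^2 f(Ax)A)}{\lambda + \sigma_d(A^\top \nabla^2 f(Ax)A)} \,=\, \kappa,
\end{align*}
the last inequality obtained by specializing the supremum to $x = Q_S \alpha$. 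Taking the supremum over $\alpha \in \real^m$ on the left then gives $\kappa_\dagger \less \kappa$, which holds deterministically for every realization of $S$ (hence almost surely).

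I do not anticipate a real obstacle here; the only point to handle carefully is the identification $\sigma_k = \lambda_k$ for PSD matrices (needed to invoke the variational min–max formulas) and the fact that $Q_S$ genuinely has orthonormal columns, which is why the lemma is stated with the whitened matrix $Q_S$ rather than the raw $S$ (for which $S^\top S \ne I_m$ in general, so the argument above would fail).
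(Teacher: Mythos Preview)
Your proof is correct and follows essentially the same route as the paper's own argument: fix $\alpha$, bound $\sigma_1$ from above and $\sigma_m$ from below by the extremal eigenvalues of $A^\top \nabla^2 f(AQ_S\alpha)A$, form the ratio, and take the supremum over $\alpha$. The paper justifies the two eigenvalue inequalities by invoking $\|Q_S\|_2 \less 1$, whereas you make the orthonormality $Q_S^\top Q_S = I_m$ explicit and appeal to Courant--Fischer; this is a slightly more transparent justification (especially for the lower bound $\sigma_m(Q_S^\top M Q_S) \gre \sigma_d(M)$, which is really Cauchy interlacing and does not follow from $\|Q_S\|_2 \less 1$ alone), but the underlying idea is identical.
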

\begin{proof}
Fix $\alpha \in \real^m$. Using $\|Q_S\|_2 \less 1$, we obtain 
\begin{align*}
    &\sigma_1(Q_S^\top A^\top \nabla^2 f(AQ_S\alpha) A Q_S) \less \sigma_1(A^\top \nabla^2 f(AQ_S\alpha) A)\,,\\  
    &\sigma_m(Q_S^\top A^\top \nabla^2 f(AQ_S \alpha) AQ_S) \gre \sigma_d(A^\top \nabla^2 f(AQ_S\alpha)A)\,.
\end{align*}
Consequently, 
\begin{align*}
    \frac{\lambda + \sigma_1(Q_S^\top A^\top \nabla^2 f(AQ_S\alpha) A Q_S)}{\lambda + \sigma_m(Q_S^\top A^\top \nabla^2 f(AQ_S \alpha) AQ_S)} \less \frac{\lambda + \sigma_1(A^\top \nabla^2 f(AQ_S\alpha) A)}{\lambda + \sigma_d(A^\top \nabla^2 f(AQ_S\alpha)A)}\,. 
\end{align*}
Taking the supremum over $\alpha \in \real^m$ in both sides of the latter inequality, we obtain $\kappa_\dagger \less \sup_{\alpha \in \real^m}\,\frac{\lambda + \sigma_1(A^\top \nabla^2 f(AQ_S\alpha) A)}{\lambda + \sigma_d(A^\top \nabla^2 f(AQ_S\alpha)A)}$. We conclude using the fact that the latter right-hand side is smaller than $\kappa$.
\end{proof}
\begin{algorithm}
\label{alg:MainAlgorithm}
\DontPrintSemicolon
\SetKwInOut{Input}{Input}
\Input{Data matrix $A \in \mathbb{R}^{n \times d}$, random matrix $\wtilde{S} \in \mathbb{R}^{n \times m}$ and regularization parameter $\lambda > 0$.}
Compute the sketching matrix $S = A^\top \wtilde{S}$.\\
Compute a thin SVD $S=U_S \Sigma_S V_S^\top$ and set $Q_S = U_S V_S^\top$.\\
Solve the convex optimization problem~\eqref{eqnrescaledsketchedprimal} with $Q=Q_S$, and return $\xone = -\frac{1}{\lambda} A^\top \nabla f\left(A Q_S\alpha^*_\dagger \right)$.
\caption{Prototype algorithm for adaptive sketching in the smooth case.}
\end{algorithm}
Algorithm~\ref{alg:MainAlgorithm} is decomposed into three steps: forming the sketch $A Q_S$, solving the low-dimensional program~\eqref{eqnrescaledsketchedprimal}, and, mapping $\alpha_\dagger^*$ to $\xone$. The last step is, in general, relatively cheap, as it only requires a matrix-vector multiplication with $A^\top$ and a gradient call to $f$. In total, the algorithm requires three passes over the entire data matrix $A$. Depending on the choice and structure of the random embedding, the sketching part has different computational costs, as discussed next. We denote by $\text{nnz}(A)$ the number of non-zero entries of $A$.

\subsubsection{Computational complexity for Gaussian embeddings} 
Forming $S=A^\top \wtilde S$ takes time $\mathcal{O}(m \cdot \text{nnz}(A))$. The cost of computing the SVD of $S$ is $\mathcal{O}(d \cdot m^2)$. The matrix multiplication $A\cdot Q_S$ takes time $\mathcal{O}(m \cdot \text{nnz}(A))$. Therefore, the total complexity is given by 
\begin{align}
    \mathcal{O}(2 \cdot m \cdot \text{nnz}(A) + d \cdot m^2)\,.
\end{align}
For a dense matrix $A$, this results in $\mathcal{O}(2mnd + d m^2) = \mathcal{O}(2mnd)$ floating point operations, and the cost is dominated by the sketching part. For a sparse enough matrix $A$ with $\text{nnz}(A) \lesssim dm$, the total cost is $\mathcal{O}(d m^2)$.

\subsubsection{Computational complexity for the SRHT} Differently from dense and unstructured embeddings, forming $S=A^\top \wtilde S$ takes time $\mathcal{O}(\log m \cdot \text{nnz}(A))$. The total time complexity is then given by
\begin{align}
    \mathcal{O}(\log m \cdot \text{nnz}(A) + m \cdot \text{nnz}(A) + d \cdot m^2) = \mathcal{O}(m \cdot \text{nnz}(A) + d \cdot m^2)\,,
\end{align}
which is always smaller than the cost of Gaussian embeddings. For a dense matrix $A$, this results in $\mathcal{O}(mnd)$ floating point operations, and this is half the cost with Gaussian embeddings. For a sparse enough matrix $A$ with $\text{nnz}(A) \lesssim dm$, the cost similarly scales as $\mathcal{O}(d m^2)$.

\subsection{Improved algorithms}

\subsubsection{Iterative method and almost exact recovery of the optimal solution} The estimator $\xone$ satisfies a guarantee of the form $\|\xone-x^*\|_2 \less \varepsilon\, \|x^*\|_2$ with high probability, and with $\varepsilon < 1$ provided that $m$ is large enough relatively to $\frac{\lambda}{\mu}$ and the spectral decay of $A$. Here, we extend Algorithm~\ref{alg:MainAlgorithm} to an iterative version which takes advantage of this error contraction, and which does not incur additional memory requirements, at the expense of additional time complexity.

\begin{algorithm}
\label{alg:Algorithm}
\DontPrintSemicolon
\SetKwInOut{Input}{Input}
\Input{Data matrix $A \in \mathbb{R}^{n \times d}$, random matrix $\wtilde{S} \in \mathbb{R}^{n \times m}$, iterations number $T$, regularization parameter $\lambda > 0$.}
Compute the matrices $Q_S \in \real^{d \times m}$ and $A Q_S$ as in Algorithm~\ref{alg:MainAlgorithm}. Set $\xone_0 = 0$.\\
\For{$t= 1, 2,\dots,T$}{ 
Solve the low-dimensional convex optimization problem
\begin{equation}
\label{eqnintermediatesketchedprimal}
    \alpha^*_{\dagger,t} \defn \argmin_{\alpha_\dagger \in \mathbb{R}^m} \left\{ f(A Q_S \alpha_\dagger + A \xone_{t-1}) + \frac{\lambda}{2} \|\alpha_\dagger + Q_S^\top \xone_{t-1}\|_2^2 \right\} \,.
\end{equation}
Update the solution $\xone_t = -\frac{1}{\lambda} A^\top \nabla f(A Q_S \alpha^*_{\dagger,t} + A \xone_{t-1})$.
}
Return the last iterate $\xone_T$.
\caption{Prototype iterative method for adaptive sketching in the smooth case}
\end{algorithm}

A key advantage of Algorithm~\ref{alg:Algorithm} is that, at each iteration, the same sketching matrix $S$ is used, i.e., the matrices $Q_S$ and $AQ_S$ need to be computed only once, at the beginning of the procedure. The output $\xone_T$ satisfies the following recovery property, whose empirical benefits are illustrated in Figure~\ref{figitpm}.
\begin{theorem}
\label{theoremiterativemethod}
After $T$ iterations of Algorithm~\ref{alg:Algorithm}, provided that $\lambda \gre 2 \mu \mathcal{Z}_f^2$, it holds that
\begin{equation}
\label{eqnbounditerative}
    \frac{{\|\xone_T - x^*\|}_2}{{\|x^*\|}_2} \less \left(\frac{\mu \mathcal{Z}_f^2}{2\lambda}\right)^{\frac{T}{2}}\,.
\end{equation}
For an adaptive Gaussian embedding, under the hypotheses of Lemma~\ref{lemmaupperboundZf} and provided that $\lambda \gre 2 \mu c_g^2 R_k^2(A)$, it holds with probability at least $1-6e^{-k}$ that
\begin{align}
\label{eqnbounditerativegaussian}
    \frac{{\|\xone_T - x^*\|}_2}{{\|x^*\|}_2} \less \left(\frac{c_g^2 \mu R_k^2(A)}{2\lambda}\right)^{\frac{T}{2}}\,.
\end{align}
For an adaptive SRHT, under the hypotheses of Lemma~\ref{lemmaresidualsrht} and provided that $\lambda \gre 2 \mu c_s^2 R_k^2(A)$, it holds with probability at least $1-\frac{5}{n}$ that 
\begin{align}
\label{eqnbounditerativesrht}
    \frac{{\|\xone_T - x^*\|}_2}{{\|x^*\|}_2} \less \left(\frac{c_s^2 \mu R_k^2(A)}{2\lambda}\right)^{\frac{T}{2}}\,.
\end{align}
\end{theorem}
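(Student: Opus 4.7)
The plan is to iterate a per-step contraction of the form $\|\xone_t - x^*\|_2 \less \sqrt{\mu \mathcal{Z}_f^2/(2\lambda)}\,\|\xone_{t-1} - x^*\|_2$ starting from $\xone_0 = 0$, so that $e_0 \defn \xone_0 - x^* = -x^*$ satisfies $\|e_0\|_2 = \|x^*\|_2$ and the bound~\eqref{eqnbounditerative} follows by raising to the $T$-th power. Since $Q_S^\top Q_S = I_m$, the change of variable $\tilde x_t \defn Q_S \alpha^*_{\dagger,t} + \xone_{t-1}$ recasts~\eqref{eqnintermediatesketchedprimal} as the minimization of $f(A\tilde x) + \frac{\lambda}{2}\|Q_S^\top \tilde x\|_2^2$ over the affine subspace $\tilde x \in \xone_{t-1} + \text{range}(Q_S)$. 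First-order optimality along $\text{range}(Q_S)$ yields $Q_S^\top A^\top \tilde y_t = -\lambda Q_S^\top \tilde x_t$, where $\tilde y_t \defn \nabla f(A\tilde x_t)$, which, combined with $\xone_t = -\frac{1}{\lambda}A^\top \tilde y_t$ and $P_S = P_{Q_S}$, gives the key identity $\tilde x_t = P_{Q_S}\xone_t + P_{Q_S}^\perp \xone_{t-1}$.

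Setting $e_t \defn \xone_t - x^*$, the relation $x^* = -\frac{1}{\lambda}A^\top z^*$ implies $\lambda e_t = -A^\top(\tilde y_t - z^*)$. Splitting this orthogonally with respect to $P_{Q_S}$, the parallel part equals $\lambda\|P_{Q_S}e_t\|_2$ by the KKT identity above, while for the perpendicular part, the fact that $\tilde y_t \in \textrm{dom}\,f^*$ (because $A\tilde x_t \in \partial f^*(\tilde y_t)$) gives $\tilde y_t - z^* \in \mathcal{T}_{z^*}$ and hence $\|P_{Q_S}^\perp A^\top(\tilde y_t - z^*)\|_2 \less \mathcal{Z}_f\,\|\tilde y_t - z^*\|_2$ by definition of $\mathcal{Z}_f$. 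The missing piece $\|\tilde y_t - z^*\|_2$ is controlled by $(1/\mu)$-strong convexity of $f^*$: since $\nabla f^*(\tilde y_t) = A\tilde x_t$ and $\nabla f^*(z^*) = Ax^*$, we get $\langle A^\top(\tilde y_t - z^*), \tilde x_t - x^*\rangle \gre \frac{1}{\mu}\|\tilde y_t - z^*\|_2^2$. Substituting $A^\top(\tilde y_t - z^*) = -\lambda e_t$ and, from the key identity, $\tilde x_t - x^* = P_{Q_S}e_t + P_{Q_S}^\perp e_{t-1}$, and then feeding the resulting bound back into the perpendicular estimate yields
\begin{align*}
\lambda\|P_{Q_S}^\perp e_t\|_2^2 + \mu\mathcal{Z}_f^2 \|P_{Q_S}e_t\|_2^2 \less -\mu\mathcal{Z}_f^2\,\langle P_{Q_S}^\perp e_t,\, P_{Q_S}^\perp e_{t-1}\rangle\,.
\end{align*}
A single Cauchy--Schwarz application followed by Young's inequality with weights tuned to $\lambda$ and $\mu\mathcal{Z}_f^2$ produces $\frac{\lambda}{2}\|P_{Q_S}^\perp e_t\|_2^2 + \mu\mathcal{Z}_f^2\|P_{Q_S}e_t\|_2^2 \less \frac{\mu^2\mathcal{Z}_f^4}{2\lambda}\|P_{Q_S}^\perp e_{t-1}\|_2^2$. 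Under $\lambda \gre 2\mu\mathcal{Z}_f^2$, i.e.\ $\frac{\lambda}{2} \gre \mu\mathcal{Z}_f^2$, the left-hand side dominates $\mu\mathcal{Z}_f^2\|e_t\|_2^2$, so $\|e_t\|_2^2 \less \frac{\mu\mathcal{Z}_f^2}{2\lambda}\|P_{Q_S}^\perp e_{t-1}\|_2^2 \less \frac{\mu\mathcal{Z}_f^2}{2\lambda}\|e_{t-1}\|_2^2$. Iterating from $\|e_0\|_2 = \|x^*\|_2$ gives~\eqref{eqnbounditerative}, and the Gaussian and SRHT refinements~\eqref{eqnbounditerativegaussian} and~\eqref{eqnbounditerativesrht} follow by plugging in the bounds $\mathcal{Z}_f \less \|P_S^\perp A^\top\|_2 \less c_g R_k(A)$ of Lemma~\ref{lemmaupperboundZf} and $\mathcal{Z}_f \less c_s R_k(A)$ of Lemma~\ref{lemmaresidualsrht}.

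The delicate point is obtaining the correct factor of $1/2$ inside the contraction rate $\mu\mathcal{Z}_f^2/(2\lambda)$: the crude estimate $\|P_{Q_S}^\perp e_{t-1}\|_2 \less \|e_{t-1}\|_2$ must be deferred to the very last step, because using it any earlier (for instance before Young's inequality) loses that factor and produces a strictly weaker linear rate. The enabler of the sharper analysis is the identity $\tilde x_t = P_{Q_S}\xone_t + P_{Q_S}^\perp \xone_{t-1}$, which makes the strong-convexity pairing couple only the $P_{Q_S}^\perp$-component of the previous iterate with the current iterate, and which interacts cleanly with the orthogonal splitting of $\lambda e_t$ into components where $\mathcal{Z}_f$ controls only the $P_{Q_S}^\perp$-part.
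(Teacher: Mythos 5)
Your argument is correct and yields exactly the claimed rate. It uses the same three ingredients as the paper's proof --- the optimality conditions of the shifted sketched program and of the full program, $(1/\mu)$-strong convexity of $f^*$ applied to the pair $(\tilde y_t, z^*)$, and the restricted-singular-value bound $\|P_S^\perp A^\top \Delta\|_2 \less \mathcal{Z}_f \|\Delta\|_2$ for $\Delta$ in the tangent cone --- but organizes the algebra differently. The paper passes explicitly through the Fenchel dual of~\eqref{eqnintermediatesketchedprimal}, subtracts the two dual stationarity conditions, multiplies by $\Delta^\top$ with $\Delta \defn y^*_t - z^*$, and closes with the inequality $a^2+b^2 \gre 2ab$ applied to $a = \sqrt{\lambda/(2\mu)}\,\|\Delta\|_2$ and $b=\|A^\top\Delta\|_2$, so the contraction is read off from $\|A^\top\Delta\|_2 = \lambda\|\xone_{t+1}-x^*\|_2$. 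You instead stay on the primal side: the identity $\tilde x_t = P_{Q_S}\xone_t + P_{Q_S}^\perp\xone_{t-1}$ (which is the primal-side content of the paper's dual stationarity condition $g_{y^*_t} + \tfrac{1}{\lambda}AP_SA^\top y^*_t - AP_S^\perp\xone_{t-1}=0$) lets you split the error orthogonally and track $\|P_{Q_S}e_t\|_2$ and $\|P_{Q_S}^\perp e_t\|_2$ separately, with Young's inequality replacing the paper's AM--GM step. The two routes are algebraically equivalent and both place the crude bound $\|P_{Q_S}^\perp e_{t-1}\|_2 \less \|e_{t-1}\|_2$ at the very end, which is indeed where the factor $1/2$ in the rate comes from; your version makes this point more transparent, while the paper's dual formulation generalizes more directly to its non-smooth analysis. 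One negligible caveat: your final division by $\mu\mathcal{Z}_f^2$ implicitly assumes $\mathcal{Z}_f>0$, but the degenerate case $\mathcal{Z}_f=0$ gives $e_t=0$ directly from your two estimates, so nothing is lost.
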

Let us compare the \emph{oracle} complexities of Algorithm~\ref{alg:Algorithm} and first-order methods applied to the high-dimensional program~\eqref{eqprimal} under the assumption that $\frac{\lambda}{\mu \sigma_1^2(A)} \ll 1$ is small, so that the high-dimensional objective function $F$ in~\eqref{eqprimal} is ill-conditioned. After a number $T$ of matrix-vector multiplications of the form $A^\top \nabla f(\cdot)$ (which is equivalent to a gradient call to $F$), Algorithm~\ref{alg:Algorithm} returns an approximate solution $\xone_T$ whose relative error is upper bounded by $\left(\frac{\mu \mathcal{Z}_f^2}{2\lambda}\right)^\frac{T}{2}$. In comparison, given a similar budget of $T$ gradient calls to the objective function $F$, first-order methods applied to the high-dimensional program~\eqref{eqprimal} return an approximate solution $\wtilde x$ whose relative recovery error is upper bounded by $(1-\frac{\lambda}{\mu \sigma_1^2(A)})^\frac{T}{2}$ (see, for instance, Theorem~3.10 in~\cite{bubeck2015convex}). The latter convergence rate is close to $1$, whereas the convergence rate of the sequence $\xone_T$ is bounded away from $1$ provided that the sketch size is large enough.

\subsubsection{Shrinking the spectral residual with the power method} An immediate extension of Algorithms~\ref{alg:MainAlgorithm} and~\ref{alg:Algorithm} consists in using the so-called \emph{power method}~\cite{halko2011}. Given $q \in \mathbb{N}$, the adaptive sketching matrix at power $q$ is defined as $S \defn {(A^\top A)}^q A^\top \wtilde{S}$. The larger $q$, the smaller the approximation error $\|P_S^\perp A^\top\|_2$. More precisely, for a Gaussian embedding $\wtilde S$ with $m=2k$, we have according to Corollary 10.10 in~\cite{halko2011} that
\begin{align}
    \|P_S^\perp A^\top\|_2 \lesssim \sigma_k \left( 1 + \sqrt{\frac{1}{k} \sum_{j > k} \left(\frac{\sigma_j}{\sigma_k}\right)^{2(2q+1)}} \right)^{\frac{1}{2q+1}}
\end{align}
with high probability. The above right-hand side decreases as $q$ increases. Of practical interest are data matrices $A$ of the form $A = \overline{A} + W$, where $\overline{A}$ is a signal with a fast spectral decay, and $W$ is a noise matrix with a relatively small and flat spectral profile. The power method reduces the noise contribution to the spectrum, i.e., it shrinks the error factor $\left( 1 + \sqrt{\frac{1}{k} \sum_{j > k} \left(\frac{\sigma_j}{\sigma_k}\right)^{2(2q+1)}} \right)^{\frac{1}{2q+1}}$. Our algorithms readily incorporate the power method, and we illustrate its empirical benefits in Figure~\ref{figitpm}.
\begin{figure}[h!]
\centering
\includegraphics[width=0.95\columnwidth]{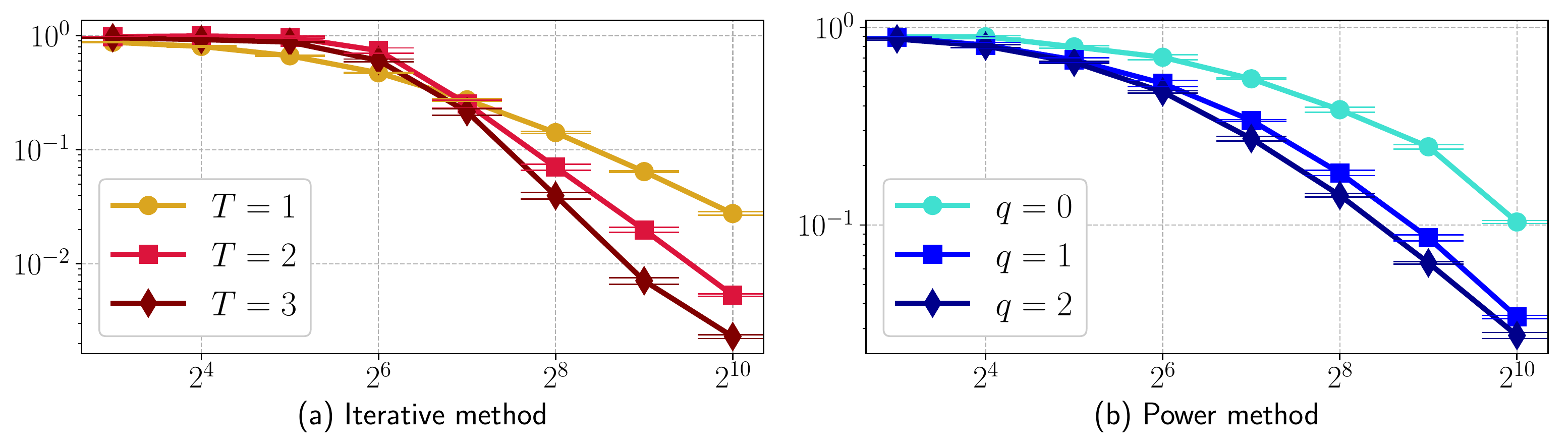}
\caption{Relative recovery error versus sketching dimension $m \in \{2^k \mid 3\less k\less10\}$ of adaptive Gaussian sketching for (a) the iterative method (Algorithm~\ref{alg:Algorithm}) and (b) the power method. We use the MNIST dataset with $50000$ training images and $10000$ testing images, and we map images through $10000$-dimensional through random cosines $\psi$ which approximate the Gaussian kernel~\cite{rahimi2008random}, i.e., $\langle \psi(a), \psi(a^\prime) \rangle \approx \exp(-\gamma \|a-a^2\|_2^2)$, with $\gamma = 0.02$. We perform binary logistic regression for even-vs-odd classification of digits. For the iterative method, we use the sketching matrix $S = (A^\top A)^2 A^\top \wtilde{S}$, where $\wtilde{S}$ is Gaussian i.i.d. That is, we run the iterative method on top of the power method, with $q=2$. We use $\lambda=10^{-5}$. Results are averaged over $20$ trials. Bar plots show (twice) the empirical standard deviations.}
\label{figitpm}
\end{figure}

\section{Non-smooth, convex optimization in adaptive random subspaces}
\label{sectionnonsmooth}

We extend our analysis to the case where the function $f$ is Lipschitz continuous but not necessarily smooth nor even differentiable. In contrast to the smooth case, the dual map $\partial f(AQ_S\alpha^*_\dagger)$ is set-valued and this makes the recovery through dual mapping more challenging. On the other hand, according to Proposition~\ref{propositionfenchelduality}, when the function $f$ is strictly convex, the dual solution $z^*$ lies within the interior of the domain of $f^*$ and consequently, $\mathcal{Z}_f = \|P_S^\perp A^\top\|_2$. As strict convexity holds for most smooth convex objective functions $f$ of practical interest, one may instead expect the tangent cone $\mathcal{T}_{z^*}$ to have a small size in the case of a non-smooth objective function. 

\subsection{Undeterminate estimator through dual mapping}
\begin{theorem}[Deterministic upper bound for non-smooth objectives]
\label{theoremnonsmooth}
Suppose that $f : \real^n \to \real$ is convex and $L$-Lipschitz but not necessarily smooth nor differentiable. Let $S \in \real^{d \times m}$ be an embedding matrix, and $y^*$ be any sketched dual solution, and set $\xone \defn - \lambda^{-1} A^\top y^*$. Then, for any $\lambda > 0$, it holds that
\begin{align}
\label{eqnboundnonsmoothcase}
    \|\xone - x^*\|_2 \less 2 \cdot \frac{L}{\lambda} \cdot \sqrt{\mathcal{Z}_f^2 + \frac{1}{2}\,\mathcal{Z}_f \|P_S^\perp A^\top\|_2} \less \sqrt{6} \cdot \frac{L}{\lambda} \cdot \|P_S^\perp A^\top\|_2\,.
\end{align}
Further, under the additional assumption that $\inf_w f(w) > -\infty$, we have the improved upper bound
\begin{align}
\label{eqnboundnonsmoothcase0in}
    \|\xone - x^*\|_2 \less \sqrt{6} \cdot \frac{L}{\lambda} \cdot \mathcal{Z}_f\,.
\end{align}
\end{theorem}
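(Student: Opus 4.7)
The starting point is the identity $\xone - x^* = -\lambda^{-1} A^\top (y^* - z^*)$, which follows directly from the two KKT relations $x^* = -\lambda^{-1} A^\top z^*$ and $\xone = -\lambda^{-1} A^\top y^*$. So the plan reduces to controlling $\|A^\top \Delta\|_2$ where $\Delta \defn y^* - z^*$. To exploit both the sketched and unsketched optimality conditions, I would write them for the (possibly non-smooth) dual problems: $-\lambda^{-1} AA^\top z^* \in \partial f^*(z^*)$ and $-\lambda^{-1} AP_S A^\top y^* \in \partial f^*(y^*)$. Then I apply monotonicity of the subdifferential of the convex function $f^*$, i.e., $\langle y^* - z^*, g_{y^*} - g_{z^*} \rangle \gre 0$, and after rearranging the inner product using $A^\top z^* - P_S A^\top y^* = -A^\top \Delta + P_S^\perp A^\top y^*$, I expect to obtain the key one-line inequality
\begin{align*}
\|A^\top \Delta\|_2^2 \less \langle A^\top \Delta,\, P_S^\perp A^\top y^* \rangle.
\end{align*}

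The next step is to write $y^* = z^* + \Delta$, use the symmetry and idempotence of $P_S^\perp$ to get $\langle A^\top \Delta, P_S^\perp A^\top y^* \rangle = \langle P_S^\perp A^\top \Delta, P_S^\perp A^\top z^* \rangle + \|P_S^\perp A^\top \Delta\|_2^2$, and then bound the cross term by Cauchy--Schwarz. This delivers the master inequality
\begin{align*}
\|A^\top \Delta\|_2^2 \less \|P_S^\perp A^\top \Delta\|_2 \cdot \|P_S^\perp A^\top z^*\|_2 + \|P_S^\perp A^\top \Delta\|_2^2.
\end{align*}
Both factors then have to be bounded in terms of $\mathcal{Z}_f$. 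For the first one, I would use that $y^* \in \textrm{dom}\, f^*$ and $z^* \in \textrm{dom}\, f^*$ imply $\Delta \in \mathcal{T}_{z^*}$, and that $L$-Lipschitzness of $f$ forces $\textrm{dom}\, f^* \subseteq L\cdot \mathcal{B}_2^n$ so that $\|\Delta\|_2 \less 2L$; combining these and the definition~\eqref{eqndefnZf} of $\mathcal{Z}_f$ gives $\|P_S^\perp A^\top \Delta\|_2 \less 2L\, \mathcal{Z}_f$.

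The main subtlety is bounding $\|P_S^\perp A^\top z^*\|_2$, since $z^*$ itself is not an element of $\mathcal{T}_{z^*}$. In the general case, I would simply use the operator-norm bound $\|P_S^\perp A^\top z^*\|_2 \less L\, \|P_S^\perp A^\top\|_2$, which, plugged into the master inequality, yields $\|A^\top \Delta\|_2^2 \less 4L^2(\mathcal{Z}_f^2 + \tfrac{1}{2}\mathcal{Z}_f \|P_S^\perp A^\top\|_2)$, i.e., the first bound in~\eqref{eqnboundnonsmoothcase}; the second inequality there then follows by invoking $\mathcal{Z}_f \less \|P_S^\perp A^\top\|_2$ to bound everything by $\tfrac{3}{2}\|P_S^\perp A^\top\|_2^2$. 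Under the extra hypothesis $\inf_w f(w) > -\infty$, I would observe that $f^*(0) = -\inf_w f(w) < +\infty$, so $0 \in \textrm{dom}\, f^*$ and hence $-z^* = 1\cdot (0 - z^*) \in \mathcal{T}_{z^*}$. Applying the definition of $\mathcal{Z}_f$ to $-z^*$ then yields $\|P_S^\perp A^\top z^*\|_2 \less L\, \mathcal{Z}_f$ (since $\|z^*\|_2 \less L$), and the master inequality becomes $\|A^\top \Delta\|_2^2 \less 6 L^2 \mathcal{Z}_f^2$, which produces~\eqref{eqnboundnonsmoothcase0in} after dividing by $\lambda$.

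The main obstacle I anticipate is organizing the use of $P_S^\perp$ carefully enough that the cross term yields a product of two terms that each land in the tangent cone (or at least admit the tangent-cone bound when $0 \in \textrm{dom}\, f^*$). Once that observation is in place, the proof is almost mechanical: monotonicity of $\partial f^*$, one Cauchy--Schwarz, and two applications of the $\mathcal{Z}_f$ bound.
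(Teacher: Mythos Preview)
Your proposal is correct and follows essentially the same route as the paper: subtract the dual optimality conditions, apply monotonicity of $\partial f^*$ (equivalently, convexity of $f^*$) to drop the subgradient term, and arrive at the same master inequality $\|A^\top \Delta\|_2^2 \less \|P_S^\perp A^\top \Delta\|_2^2 + \|P_S^\perp A^\top \Delta\|_2\,\|P_S^\perp A^\top z^*\|_2$, after which the bounds $\|\Delta\|_2 \less 2L$, $\Delta \in \mathcal{T}_{z^*}$, and (under $\inf_w f(w) > -\infty$) $-z^* \in \mathcal{T}_{z^*}$ are used exactly as you describe. The only cosmetic difference is that the paper decomposes $AP_S A^\top y^* - AA^\top z^*$ as $AP_S A^\top \Delta - AP_S^\perp A^\top z^*$ directly (so $P_S^\perp A^\top z^*$ appears immediately), whereas you first obtain $P_S^\perp A^\top y^*$ and then split $y^* = z^* + \Delta$; both yield the identical inequality.
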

The upper bound~\eqref{eqnboundnonsmoothcase0in} is, for most cases of interest, weaker than~\eqref{eqnboundfirstorder} as it scales as $\mathcal{O}(\frac{\mathcal{Z}_f}{\lambda})$ in contrast to the bound $\mathcal{O}(\frac{\mathcal{Z}_f}{\sqrt{\lambda}})$ in the smooth case: for small $\lambda$, the former scaling is worse. Furthermore, the upper bound~\eqref{eqnboundnonsmoothcase0in} controls the recovery error $\|\xone - x^*\|_2$ whereas the upper bound in the smooth case controls the relative recovery error $\frac{\|\xone - x^*\|_2}{\|x^*\|_2}$. The latter is contractive and enables an iterative version for (almost) exact recovery of $x^*$, whereas this approach does not readily extend to the non-smooth case. More importantly, when $f$ is smooth and thus differentiable, the mapping between a low-dimensional solution $\alpha^*_\dagger$ and the estimator $\xone = -\lambda^{-1} A^\top \nabla f(AQ_S\alpha^*_\dagger)$ is well-defined. However, in the non-differentiable case, we only know that the sketched dual solution $y^*$ belongs to the set $\partial f(AQ_S \alpha^*_\dagger)$, i.e., $\xone \in -\lambda^{-1} A^\top \partial f(AQ_S\alpha^*_\dagger)$, and one cannot directly compute $\xone$ based on the low-dimensional solution $\alpha^*_\dagger$ and the dual mapping. Furthermore, picking an arbitrary subgradient $g \in \partial f(AQ_S\alpha^*_\dagger)$ yields an estimator $\what x = -\lambda^{-1} A^\top g$ with weaker recovery guarantees. 
\begin{corollary}
\label{corollaryundeterminedestimator}
Suppose that $\inf_w f(w) > -\infty$. Let $\alpha^*_\dagger$ be a minimizer of the sketch primal program~\eqref{eqnrescaledsketchedprimal} with $Q=Q_S$. Pick any estimator $\what x$ in the set $-\lambda^{-1} A^\top \partial f(AQ_S \alpha^*_\dagger)$. It holds that
\begin{align}
    \|\what x - x^*\|_2 \less \sqrt{6} \cdot \frac{L}{\lambda} \cdot \mathcal{Z}_f + \text{diam}\!\left(\lambda^{-1} A^\top \partial f(AQ_S \alpha^*_\dagger)\right)\,.
\end{align}
\end{corollary}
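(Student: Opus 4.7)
The plan is to reduce the corollary to Theorem~\ref{theoremnonsmooth} via a single triangle inequality. First, I would observe that the specific first-order estimator constructed in Theorem~\ref{theoremnonsmooth}, namely $\xone = -\lambda^{-1} A^\top y^*$ with $y^*$ a sketched dual solution, belongs to the set $-\lambda^{-1} A^\top \partial f(AQ_S \alpha^*_\dagger)$. This follows from Proposition~\ref{propositionsketchedfenchelduality} (adapted to the non-smooth case via the subdifferential KKT condition $y^* \in \partial f(AQ_S \alpha^*_\dagger)$), which is the natural generalization when $f$ is not differentiable. In particular, both $\xone$ and the arbitrarily chosen $\what x$ lie in the same set $-\lambda^{-1} A^\top \partial f(AQ_S \alpha^*_\dagger)$.

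Next, I would apply the triangle inequality:
\begin{align*}
\|\what x - x^*\|_2 \less \|\xone - x^*\|_2 + \|\what x - \xone\|_2.
\end{align*}
The first term on the right-hand side is bounded by $\sqrt{6} \cdot \frac{L}{\lambda} \cdot \mathcal{Z}_f$ thanks to inequality~\eqref{eqnboundnonsmoothcase0in} of Theorem~\ref{theoremnonsmooth}, which applies since by hypothesis $\inf_w f(w) > -\infty$. For the second term, since both $\what x$ and $\xone$ belong to the set $-\lambda^{-1} A^\top \partial f(AQ_S \alpha^*_\dagger)$, by the very definition of the diameter of a set, we have
\begin{align*}
\|\what x - \xone\|_2 \less \text{diam}\!\left(\lambda^{-1} A^\top \partial f(AQ_S \alpha^*_\dagger)\right).
\end{align*}
Combining these two bounds yields the claim.

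The only subtle point, and the one I would be careful about, is ensuring that the $\xone$ appearing in Theorem~\ref{theoremnonsmooth} can indeed be identified as an element of the set $-\lambda^{-1} A^\top \partial f(AQ_S \alpha^*_\dagger)$ corresponding to the same minimizer $\alpha^*_\dagger$ used to define the candidate estimator $\what x$. This is essentially a verification that the non-smooth KKT conditions (as in the natural extension of Proposition~\ref{propositionsketchedfenchelduality}) identify the dual optimum $y^*$ as a subgradient of $f$ evaluated at the sketched primal optimum. Beyond this identification, the remainder is a direct triangle inequality argument and requires no further calculation.
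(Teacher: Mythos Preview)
Your proposal is correct and follows essentially the same approach as the paper's own proof: both identify $\xone = -\lambda^{-1}A^\top y^*$ as an element of $-\lambda^{-1}A^\top \partial f(AQ_S\alpha^*_\dagger)$, apply Theorem~\ref{theoremnonsmooth} under the assumption $\inf_w f(w) > -\infty$ to bound $\|\xone - x^*\|_2$, and conclude via the triangle inequality together with the definition of the diameter. The subtle identification you flag is exactly the one the paper takes for granted (stated in the sentence immediately preceding the corollary), so there is no gap.
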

\begin{proof}
Fix $\what x \in -\lambda^{-1} A^\top \partial f(AQ_S \alpha^*_\dagger)$. The first-order estimator $\xone = -\lambda^{-1} A^\top y^*$ also belongs to $-\lambda^{-1} A^\top \partial f(AQ_S \alpha^*_\dagger)$, so that $\|\what x - \xone\|_2 \less \text{diam}\!\left(\lambda^{-1} A^\top \partial f(AQ_S \alpha^*_\dagger)\right)$. According to Theorem~\ref{theoremnonsmooth}, we also have $\|\xone - x^*\|_2 \less \sqrt{6}\cdot \frac{L}{\lambda} \cdot \mathcal{Z}_f$. The claim then follows from the triangular inequality $\|\what x - x^*\|_2 \less \|\what x - \xone\|_2 + \|\xone - x^*\|_2$.
\end{proof}
According to Corollary~\ref{corollaryundeterminedestimator}, an arbitrary estimator $\what x \in -\lambda^{-1} A^\top \partial f(AQ_S \alpha^*_\dagger)$ may perform poorly, even when $\mathcal{Z}_f$ is small, as soon as the diameter of the set $\lambda^{-1} A^\top \partial f(AQ_S \alpha^*_\dagger)$ is relatively large. One may wonder whether this upper bound is tight: although we do not provide a lower bound, we carried out extensive numerical simulations with some commonly used non-smooth loss functions, and we obtained poor performance when we picked the easiest-to-compute subgradient in $\partial f(AQ_S \alpha^*_\dagger)$ (see Figure~\ref{fignonsmooth}).

\subsection{Resolving the indeterminacy of the set-valued dual mapping through the sketched dual program}

Based on an adaptive sketch $AQ_S$ and a low-dimensional solution $\alpha^*_\dagger$, the dual map yields a set of candidate approximate solutions $-\lambda^{-1} A^\top \partial f(AQ_S\alpha^*_\dagger)$. We aim to resolve this indeterminacy by computing the optimal subgradient $y^* \in \partial f(AQ_S\alpha^*_\dagger)$ and thus the first-order estimator $\xone = -\lambda^{-1}A^\top y^*$. 

\subsubsection{Easy-to-compute subgradient set $\partial f(AQ_S\alpha_\dagger^*)$ and restricted sketched dual program} We propose to (i) compute a low-dimensional solution $\alpha_\dagger^*$, (ii) to compute the subgradient set $\partial f(AQ_S \alpha_\dagger^*)$, and finally, (iii) to solve the \emph{restricted} sketched dual program
\begin{align}
\label{eqnsketchedualwithQ}
    y^* \in \argmin_{y \in \partial f(AQ_S\alpha^*_\dagger)} \left\{ f^*(y) + \frac{1}{2\lambda} \|Q_S^\top A^\top y\|_2^2 \right\}\,.
\end{align}
This procedure is especially relevant when the function $f$ is separable and when there are only a few number $k$ of indices $i \in \{1, \dots, n\}$ such that the function $f$ is not partially differentiable at $(AS\alpha_\dagger^*)_i$. In this case, the restricted sketched dual program~\eqref{eqnsketchedualwithQ} only involves $k$ dual variable's coordinates. This is reminiscent of the Stochastic Dual Newton Ascent (SDNA) method~\cite{qu2016sdna} which selects at each iteration a random subset of coordinates of the dual variable $y$. Differently, we use the low-dimensional solution $\alpha_\dagger^*$ and the subgradient set $\partial f(AQ_S\alpha_\dagger^*)$ to determine a subset of coordinates of $y$ to optimize and which yields the exact optimal solution $y^*$. Moreover, our perspective is, again, agnostic to the choice of the optimization algorithm, whereas SDNA is itself an optimization method.

More generally, the restricted sketched dual program is relevant for practical cases where (i), given a low-dimensional solution $\alpha^*_\dagger$, computing the subgradient set $\partial f(AQ_S\alpha^*_\dagger)$ can be done efficiently, and (ii) the subgradient set $\partial f(AQ_S\alpha^*_\dagger)$ is, in some sense, small. We discuss some examples below.

\begin{example}[$L_1$-regression]
\label{examplel1}
Given $b \in \real^n$, we consider the objective function $f(w) = \|w-b\|_1$. The subgradient set of $f$ at some $w \in \real^n$ is the Cartesian product $\prod_{i=1}^n \mathcal{I}_i$, where $\mathcal{I}_i = [-1,1]$ if $w_i=b_i$ and $\mathcal{I}_i = \{\text{sign}(w_i-b_i)\}$ otherwise. The restricted sketched dual program~\eqref{eqnsketchedualwithQ} only involves the variables $y_i$ for the indices $i$ such that $(AS\alpha_\dagger^*)_i = b_i$.
\end{example}
\begin{example}[$L_\infty$-regression]
\label{examplelinfty}
Given some target vector $b \in \real^n$, consider the objective function $f(w) = \|w-b\|_\infty$. The subgradient set of $f$ at some $w \in \real^n$ is the convex hull of the vectors $\text{sign}(w_I - b_I)\cdot e_I$, where $e_1, \dots, e_n$ is the canonical basis of $\real^n$ and for all $I \in \argmax_{i=1,\dots,n} |w_i - b_i|$. The restricted sketched dual program~\eqref{eqnsketchedualwithQ} only involves the variables $y_i$ for the indices $i$ such that $|(AS\alpha_\dagger^*)_i-b_i|=\|AS\alpha_\dagger^*-b\|_\infty$.
\end{example}
\begin{example}[Support vector machines]
\label{examplesvm}
Given $b \in \{\pm 1\}^n$, we consider the hinge loss $f(w) = \sum_{i=1}^n \max\{0, 1-w_i b_i\}$. The subgradient set of $f$ at some $w \in \real^n$ is the Cartesian product $\prod_{i=1}^n \mathcal{I}_i$, where $\mathcal{I}_i = \{-b_i\}$ if $1-w_i b_i > 0$, $\mathcal{I}_i =\{0\}$ if $1-w_i b_i < 0$ and $\mathcal{I}_i = [0,-b_i]$ if $1-w_i b_i=0$. The restricted sketched dual program~\eqref{eqnsketchedualwithQ} only involves the variables $y_i$ for the indices $i$ such that $1=(AS\alpha_\dagger^*)_i b_i$.
\end{example}
For the loss functions of Examples~\ref{examplel1},~\ref{examplelinfty} and~\ref{examplesvm}, we compare the empirical performance of the estimator $\xone = -\lambda^{-1}A^\top y^*$ and an arbitrary estimator $\what x \in -\lambda^{-1}A^\top \partial f(AQ_S\alpha_\dagger^*)$, and we report results in Figure~\ref{fignonsmooth}. We observe that $\xone$ has an increasingly stronger performance compared to $\what x$ as the sketch size increases.

\begin{figure}[h!]
\centering
\includegraphics[width=0.95\columnwidth]{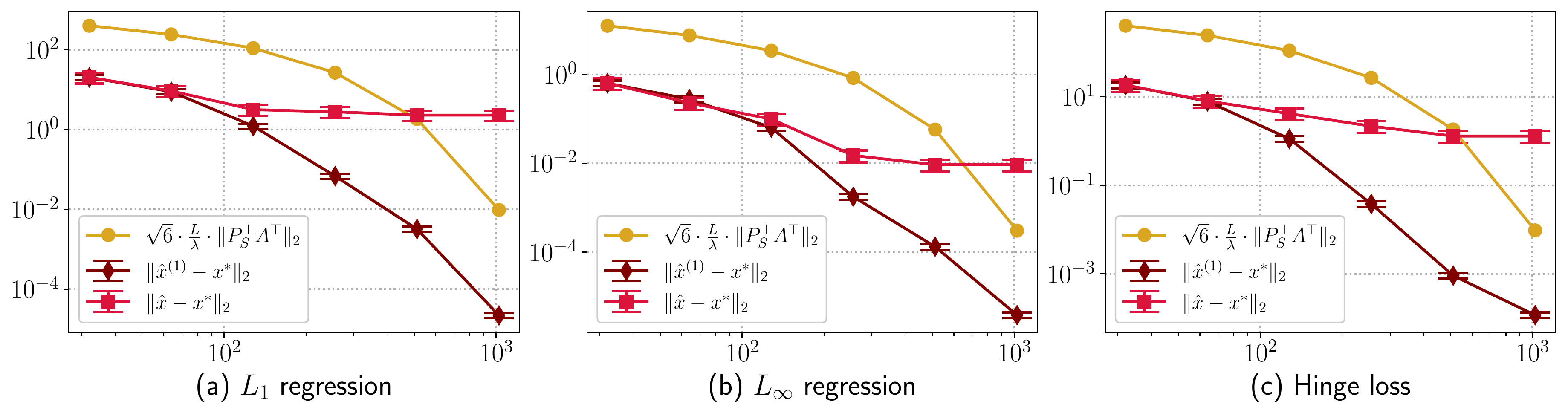}
\caption{Recovery error versus sketching dimension $m \in \{2^k \mid 5\less k\less10\}$ of the adaptive estimator $\xone=-\lambda^{-1}A^\top y^*$ versus an estimator $\what x = -\lambda^{-1} A^\top g$ where $g \in \partial f(AQ_S\alpha_\dagger^*)$ is an arbitrary subgradient (the 'easiest-to-compute'). We use $\lambda=0.01$, $n=1000$ and $d=2000$. We generate a data matrix $A$ with exponential spectral decay $\sigma_j=0.98^j$, and we use an adaptive Gaussian embedding $S=A^\top \wtilde S$. We consider (a) $L_1$-regression (Example~\ref{examplel1}), (b) $L_\infty$-regression (Example~\ref{examplelinfty}) and (c) SVM classification (Example~\ref{examplesvm}). Results are averaged over $20$ trials. Bar plots show (twice) the empirical standard deviations.}
\label{fignonsmooth}
\end{figure}

\subsubsection{Solving the plain sketched dual program} When the entire subgradient set $\partial f(AQ_S\alpha^*_\dagger)$ cannot be efficiently computed, one can instead solve directly the plain sketched dual program without restricting $y$ to lie within the subgradient set $\partial f(AQ_S\alpha_\dagger^*)$ and without even computing a low-dimensional solution $\alpha_\dagger^*$. Of practical interest are, for instance, the functions $f$ which, up to a translation, are the support function of a convex set $\mathcal{C}$, i.e.,
\begin{align}
\label{eqnsupportfunction}
    f(w) = \sup_{z \in \mathcal{C}} \, z^\top (w - b)\,,
\end{align}
for which $\partial f(w) = \argmax_{z \in \mathcal{C}} \, z^\top (w-b)$. This class of function includes in particular distributionally robust objective functions (e.g., conditional Value-at-Risk) and deterministic robust counterparts \cite{ben2009robust,pilanci2009structured,pilanci2010structured}  for which computing the entire set of worst-case distributions is, in general, expensive. Hence, one can alternatively obtain $y^*$ by solving the \emph{unrestricted} constrained quadratic program
\begin{align}
    y^* \in \argmin_{y \in \mathcal{C}} \left\{ y^\top b + \frac{1}{2\lambda} \|Q_S^\top A^\top y\|_2^2 \right\}\,.
\end{align}
This approach corresponds to a \emph{left} sketch of a constrained quadratic program: its computational benefits have been extensively studied in the sketching literature and its statistical performance has been carefully analyzed in the case of oblivious embeddings (see, for instance,~\cite{pilanci2015randomized, lacotte2019faster, lacotte2020optimal}). Hence, our analysis extends the range of existing results of the left sketch approach to the case of adaptive embeddings.

\section{Numerical experiments}
\label{sec:numericalexperiments}

\textbf{Datasets.} We evaluate Algorithm~\ref{alg:MainAlgorithm} on the MNIST and CIFAR10 datasets with logistic regression. First, we aim to illustrate that the sketch size can be considerably smaller than the data dimension while recovering a close approximation to the optimal solution which achieves a similar test classification accuracy. Second, we aim to achieve significant computational speed-ups. To solve the primal program~\eqref{eqprimal}, we use two standard algorithms for empirical risk minimization, namely, stochastic gradient descent (SGD) with (best) fixed step size and stochastic variance reduction gradient (SVRG)~\cite{johnson2013accelerating} with (best) fixed step size and frequency update of the gradient correction. To solve the adaptive sketched program~\eqref{eqsketchedprimal}, we use SGD, SVRG and the sub-sampled Newton method~\cite{bollapragada2018exact, erdogdu2015convergence}: we refer to them as Sketch-SGD, Sketch-SVRG and Sketch-Newton. The latter is well-suited to the sketched program for a relatively small sketch size, as a low-dimensional Newton system can be efficiently solved at each iteration. For both datasets, we use $50000$ training and $10000$ testing images. We transform each image $a$ using random Fourier features $\psi(a) \in \real^d$, i.e., $\langle \psi(a), \psi(a^\prime) \rangle \approx \exp\left(-\gamma \|a-a^\prime\|^2_2 \right)$~\cite{rahimi2008random, scikit-learn}. For MNIST and CIFAR10, we choose respectively $d=10000$ and $\gamma=0.02$, and, $d=60000$ and $\gamma=0.002$, so that the primal programs are respectively $10000$-dimensional and $60000$-dimensional. Then, we train a classifier via a sequence of binary logistic regressions, using a one-versus-all procedure. 

\textbf{Adaptive Gaussian embeddings.} We evaluate the test classification error of the first-order estimator $\xone$. We solve to optimality the primal and sketched programs for values of $\lambda \in \{10^{-4}, 5 \cdot 10^{-5}, 10^{-5}, 5 \cdot 10^{-6}\}$ and sketch sizes $m \in \{128, 256, 512, 1024\}$. In Table~\ref{tab:rawrandomfourierfeatures} are reported the empirical means and standard deviations, which are averaged over $20$ trials. The adaptive sketched program yields a high accuracy classifier for most pairs $(\lambda, m)$: we match the best primal classifier with values of $m$ as small as $256$ for MNIST and $512$ for CIFAR10, which respectively corresponds to a dimension reduction by a factor $\approx 40$ and $\approx 120$. These results also suggest that adaptive sketching induces an implicit regularization effect, which is reminiscent of the benefits of spectral cutoff estimators~\cite{donoho1996neo}. For instance, on CIFAR10, using $\lambda=10^{-5}$ and $m=512$, we obtain an improvement in test accuracy by more than $2\%$ compared to $x^*$.

\begin{table}[!h]
\caption{Test classification error of adaptive Gaussian sketching on MNIST and CIFAR10 datasets. The subscript in $\xone_m$ refers to the sketch size $m$. Results are averaged over $20$ trials, and we report the empirical mean and standard deviation.}
  \label{tab:rawrandomfourierfeatures}
  \centering
  \begin{tabular}{|c|ccccc|}
    \cmidrule(r){1-6}
    $\lambda$ & $x^*_{\text{MNIST}}$ & $\xone_{128}$ & $\xone_{256}$ & $\xone_{512}$ & $\xone_{1024}$  \\
    \midrule
    $10^{-4}$   & $5.4$  & $4.8 \pm 0.2$ & $5.2 \pm 0.1$ & $5.3 \pm 0.1$ & $5.4 \pm 0.1$\\
    $5 \!\cdot\! 10^{-5}$ & $4.6$ & $3.8 \!\pm \!0.2$ & $4.0 \!\pm \!0.2$ & $4.3 \!\pm\! 0.1$ & $4.5\! \pm\! 0.1$\\
    $10^{-5}$   & $2.8$ &  $3.4 \pm 0.8$ & $2.4 \pm 0.2$ & $2.5 \pm 0.1$ & $2.8 \pm 0.1$ \\
    $5 \!\cdot\! 10^{-6}$ & $2.5$ & $4.9 \pm 2.1$ & $2.8 \pm 0.3$ & $2.6 \pm 0.2$ & $2.4 \pm 0.1$ \\
    \bottomrule
    $\lambda$ & $x^*_{\text{CIFAR}}$ & $\xone_{128}$ & $\xone_{256}$ & $\xone_{512}$ & $\xone_{1024}$\\
    \midrule
    $5\! \cdot\! 10^{-5}$ & $51.6$ & $50.5 \!\pm \!0.3$ & $50.6 \!\pm\! 0.3$ & $50.8\! \pm\! 0.2$ & $51.0 \!\pm \!0.2$ \\
    $10^{-5}$ & $48.2$ & $54.5 \pm 3.2$ & $47.7 \pm 0.6$ & $45.9 \pm 0.2$ & $46.2 \pm 0.2$ \\
    $5\! \cdot\! 10^{-6}$ & $47.6$ & $59.8 \pm 3.5$ & $51.9 \pm 2.1$ & $47.7 \pm 0.6$ & $45.8 \pm 0.6$ \\
    \bottomrule
    \end{tabular}
\end{table}

\textbf{Adaptive versus oblivious Gaussian sketching.} We evaluate the test classification error of two baseline estimators, that is, the first-order estimator $\xone_\dagger$ with oblivious Gaussian embeddings as proposed in~\cite{zhang2013recoveringCOLT, zhang2014recoveringIEEE} and described in Section~\ref{sectionunbiasedoblivioussketching}, and, the first-order estimator $\what{x}^{N}$ with adaptive Nystrom embeddings for which $S=A^\top \wtilde{S}$ with $\wtilde{S}$ a uniformly random column sub-sampling matrix. As reported in Table~\ref{tab:comparisonsketchingbaselines}, adaptive Gaussian sketching performs better for a wide range of values of sketch size $m$ and regularization parameter $\lambda$.

\begin{table}[!h]
\centering
    \caption{Test classification error (in percentage) on MNIST and CIFAR10. Results are averaged over $20$ trials and we report the empirical mean. The subscript under each estimator refers to the sketch size $m$.}
  \label{tab:comparisonsketchingbaselines}
  \begin{scriptsize}
  \begin{tabular}{|c|c c c c c c c|}
    \cmidrule(r){1-8}
    $\lambda$ & $x^*_{\text{MNIST}}$ & $\xone_{256}$ & $\xone_{1024}$ & $\xone_{\dagger, 256}$ & $\xone_{\dagger,1024}$ & $\what{x}^{N}_{256}$ & $\what{x}^{N}_{1024}$\\
    \midrule
    $5 \!\cdot\! 10^{-5}$ & 4.6 & 4.0 & 4.5 & 25.2 & 8.5 & 5.0 & 4.6 \\
    $5 \!\cdot\! 10^{-6}$ & 2.0 & 2.8  & 2.4 & 30.1 & 9.4 & 3.0 & 2.7 \\
    \bottomrule
    \end{tabular}
    \begin{tabular}{|c|c c c c c c c|}
    \cmidrule(r){1-8}
    $\lambda$ & $x^*_{\text{CIFAR}}$ & $\xone_{256}$ & $\xone_{1024}$ & $\xone_{\dagger, 256}$ & $\xone_{\dagger,1024}$ & $\what{x}^{N}_{256}$ & $\what{x}^{N}_{1024}$  \\
    \midrule
    $5 \!\cdot\! 10^{-5}$ & 51.6 & 50.6 & 51.0 & 88.2 & 70.5 & 55.8 & 53.1 \\
    $5 \!\cdot\! 10^{-6}$ & 47.6 & 51.9 & 45.8 & 88.9 & 80.1 & 57.2 & 55.8 \\
    \bottomrule
    \end{tabular}
    \end{scriptsize}
\end{table}

\textbf{Wall-clock time speed-ups.} For the first-order estimator $\xone$ with adaptive Gaussian embeddings, we compare the test classification error versus wall-clock time of the aforementioned optimization algorithms. Figure~\ref{fig:speed-ups} shows results for some values of $m$ and $\lambda$. We observe that solving instead the low-dimensional optimization problem offers significant speed-ups on the $10000$-dimensional MNIST problem, in particular for Sketch-SGD and for Sketch-SVRG, for which computing the gradient correction is relatively fast. Such speed-ups are even more significant on the $60000$-dimensional CIFAR10 problem, especially for Sketch-Newton, and a few iterations suffice to closely reach the approximate solution $\xone$, with a per-iteration time which is relatively small thanks to dimensionality reduction. Remarkably, it is more than $10$ times faster to reach the best test accuracy classifier using the sketched program. In addition to random Fourier features, we carry out another set of experiments with the CIFAR10 dataset, in which we pre-process the images. That is, similarly to~\cite{tu2016large, recht2018cifar}, we map each image through a random convolutional layer. Then, we kernelize these processed images using a Gaussian kernel with $\gamma=2 \cdot 10^{-5}$. Using our implementation, the best test accuracy of the kernel primal program~\eqref{eqprimalkernel} we obtained is $73.1\%$. Sketch-SGD, Sketch-SVRG and Sketch-Newton -- applied to the sketched kernel program~\eqref{eqsketchedprimalkernel} -- match this test accuracy, with significant speed-ups, as reported in Figure~\ref{fig:speed-ups}.

\begin{figure}[h!]
\centering
\includegraphics[width=0.95\columnwidth]{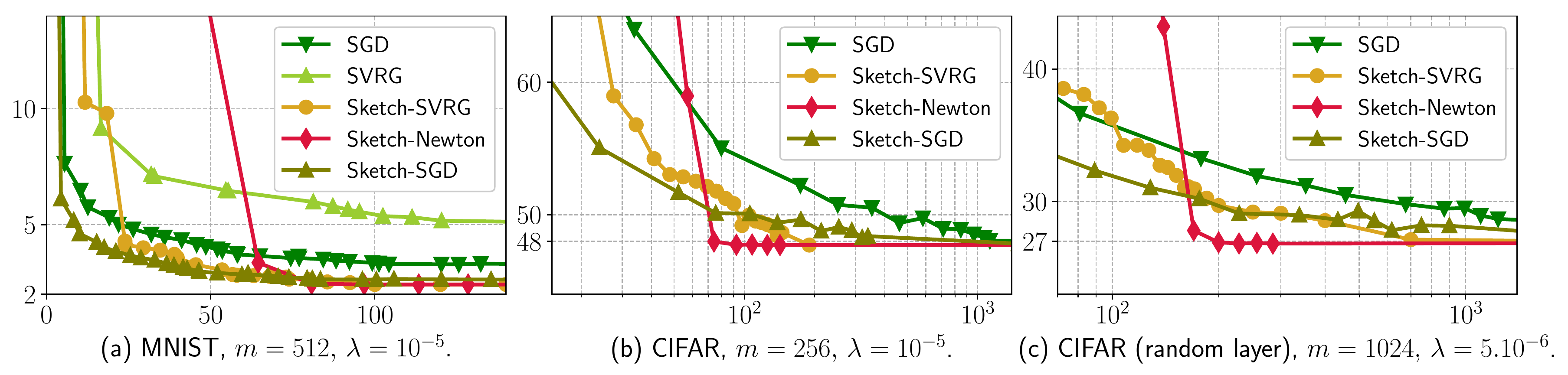}
\caption{Test classification error (percentage) versus wall-clock time (seconds).}
\label{fig:speed-ups}
\end{figure}

\textbf{Additional numerical details.} Experiments were run in Python on a workstation with $512$ GB of memory. We use our own implementation of each algorithm for a fair comparison. For SGD, we use a batch size equal to $128$. For SVRG, we use a batch size equal to $128$ and update the gradient correction every $400$ iterations. For Sketch-SGD, we use a batch size equal to $1024$. For Sketch-SVRG, we use a batch size equal to $64$ and update the gradient correction every $200$ iterations. Each iteration of the sub-sampled Newton method (Sketch-Newton) computes a full-batch gradient, and, the Hessian with respect to a batch of size $1500$. For SGD and SVRG, we considered step sizes $\eta$ between $10^{-2}$ and $10^{2}$. We obtained best performance for $\eta = 10^1$. For the sub-sampled Newton method, we use a step size $\eta=1$, except for the first $5$ iterations, for which we use $\eta=0.2$. In Figure~\ref{fig:speed-ups}, we did not report results for SVRG for solving the primal~\eqref{eqprimal} on CIFAR10, as the computation time for reaching a satisfying performance was significantly larger than for the other algorithms.

\section{Information theoretic lower-bounds and optimality for right-sketching}

We now consider the fundamental problem of estimating the mean of a random sample $b = A x_\text{pl} + w$ where $w \sim \mathcal{N}(0, \frac{\sigma^2}{n}\,I_n)$, under the assumption that the planted vector $x_\text{pl}$ satisfies $\|x_\text{pl}\|_2 \less 1$. Given an estimator $\what x$, we define its risk as $\mathfrak{R}(\what x) \defn \sup_{\|x_\text{pl}\|_2\less 1}\mathbb{E}_w\|A(\what x - x_\text{pl})\|_2^2$. A critical quantity to characterize the best achievable risk is the \emph{statistical dimension} $d_s$, defined as $d_s \defn \min\{k \gre 1 \mid \frac{\sigma^2 k}{n} \gre \sigma_{k+1}^2\}$. It satisfies the scaling $\frac{\sigma^2 d_s}{n} \asymp \sigma^2_{d_s+1}$. It is well-known (see, for instance,~\cite{bartlett2005local, geer2000empirical, yang2017randomized}) that the quantity $A^\top b$ is sufficient to construct an optimal estimator.

Information theoretic lower bounds for left-sketching via $SA, Sb$ were developed in \cite{pilanci2016iterative}. Surprisingly, it was shown that unless $m\asymp d$, left-sketching based estimators are sub-optimal. In turn, an iterative sketching method based on the sketches $\{S_iA,A^T(Ax_i- b)\}_{i=1}^T$ was shown to be statistically optimal for the broader class of constrained least squares problems, including unconstrained least squares, Lasso and nuclear norm constrained problems.

In our context, we are given a sketch $AS$ where $S \in \real^{d \times m}$. Note that right-sketched based optimization problem \eqref{eqsketchedprimal} for the least squares objective is of the form
\begin{align}
    \min_\alpha \|AS\alpha -b\|_2^2 + \phi(\alpha) = \min_{\alpha} \|AS\alpha\|_2^2 - 2 \alpha^T S^T A^T b+ \|b\|_2^2 + \phi(\alpha)
\end{align}
where $\phi(\alpha)$ is an arbitrary regularization term. The preceding line shows that right-sketching estimators are functions of $S^TA^Tb$. Therefore, bounds on the mutual information between $S^T A^T b$ and  $x_\text{pl}$ can be leveraged to develop information theoretic lower bounds under this observation model. Consequently, we consider right-sketching estimators based on the observation $S^\top A^\top b$, and we aim to characterize the minimax risk $\mathfrak{M}_S \defn \inf_{\what x} \mathfrak{R}(\what x)$ where the infimum is taken over estimators $\what x \equiv \what x (S^\top A^\top b)$. We aim to show that for an oblivious Gaussian embedding $S \in \real^{d \times m}$, a sketch size $m \asymp d_s$ and polynomial or exponential decays, it holds with high-probability that
\begin{align}
    \mathfrak{M}_S \asymp \frac{\sigma^2 d_s}{n}\,.
\end{align}
Moreover, this minimax lower-bound is achieved by the right-sketching estimator. 
Our proof of the lower-bound is based on the standard local Fano method.
\begin{theorem}
\label{theoremlowerboundleastsquares}
Let $m \gre 1$ be a sketch size and $S \in \real^{d \times m}$ be a Gaussian embedding, and assume that $d_s \gre 4$. Conditional on the event $\|P_{AS}^\perp A\|_2^2 \less \frac{\sigma_{d_s+1}^2}{2}$, it holds that
\begin{align}
\label{eqnlowerboundminimaxrate}
    \mathfrak{M}_S \gre c_0 \cdot \sigma_{d_s+1}^2 \asymp c_0 \cdot \frac{\sigma^2 d_s}{n}\,,
\end{align}
where $c_0$ is a universal constant such that $c_0 \gre \frac{1}{4096}$. Consequently, for a polynomial decay $\sigma_j = j^{-\frac{1+\nu}{2}}$ with $\nu > 0$ for which $d_s \asymp \left(\frac{n}{\sigma^2}\right)^{\frac{1}{2+\nu}}$ and for a sketch size $m \gre c^{\text{poly}}_{\nu} \cdot d_s$ where $c^{\text{poly}}_\nu \asymp (\nu^{-1}+1)^{(1+\nu)}$, it holds with probability at least $1-6 e^{-d_s}$ that 
\begin{align}
    \mathfrak{M}_S \gre c_0 \cdot \sigma_{d_s+1}^2 \asymp c_0 \cdot \left(\frac{\sigma^2}{n}\right)^{\frac{1+\nu}{2+\nu}}\,.
\end{align}
For an exponential decay $\sigma_j = e^{-\frac{\nu j}{2}}$ with $\nu > 0$ such that $n \nu/\sigma^2 \gre 5408$ for which $d_s \asymp \frac{1}{\nu}\log(n/\sigma^2)$ and for a sketch size $m \gre 4d_s$, it holds with probability at least $1-6e^{-d_s}$ that
\begin{align}
    \mathfrak{M}_S \gre c_0 \cdot \sigma_{d_s+1}^2 \asymp c_0 \cdot \frac{\sigma^2 \log(n/\sigma^2)}{\nu n}\,.
\end{align}
\end{theorem}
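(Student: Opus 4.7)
The plan is to apply the standard local Fano method to the Gaussian sketched observation model. Since $S^\top A^\top b \sim \mathcal{N}\!\left(S^\top A^\top A x_\text{pl},\,(\sigma^2/n)\,S^\top A^\top A S\right)$, the Gaussian KL formula gives
\begin{align*}
\textup{KL}(P_{x_1} \| P_{x_2}) = \frac{n}{2\sigma^2}\,\|P_{AS}\,A(x_1-x_2)\|_2^2 \less \frac{n}{2\sigma^2}\,\|A(x_1-x_2)\|_2^2.
\end{align*}
The conditioning event $\|P_{AS}^\perp A\|_2^2 \less \sigma_{d_s+1}^2/2$ guarantees that the sketch still resolves information on the relevant top-$k$ singular subspace of $A$, so the Fano KL bound above matches the classical unsketched minimax KL up to absolute constants.

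For the packing I set $k = d_s + 1$ and pack the unit sphere $\mathbb{S}^{k-1} \subset \real^k$ with $M \gre (1/(2\varepsilon))^k$ points $\{u_1,\dots,u_M\}$ at pairwise Euclidean distance $\gre \varepsilon$, using a standard volume argument. For a tuning constant $\beta \in (0,1)$, I define
\begin{align*}
x_i \defn \sqrt{\beta}\,\sigma_{d_s+1}\,V_k\,\Sigma_k^{-1}\,u_i,
\end{align*}
where $V_k \in \real^{d \times k}$ and $\Sigma_k \in \real^{k\times k}$ collect the top-$k$ right singular vectors and singular values of $A$. Since $\|\Sigma_k^{-1} u_i\|_2 \less 1/\sigma_k = 1/\sigma_{d_s+1}$, each $x_i$ lies in the unit ball. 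A direct calculation using $A V_k \Sigma_k^{-1} = U_k$ (the corresponding orthonormal left singular vectors) yields $\|A(x_i - x_j)\|_2^2 = \beta\,\sigma_{d_s+1}^2\,\|u_i - u_j\|_2^2$, producing pairwise separation $\gre \sqrt{\beta}\,\varepsilon\,\sigma_{d_s+1}$ in the $\|A\cdot\|_2$ norm together with the KL upper bound $\textup{KL}(P_{x_i}\|P_{x_j}) \less 2\beta n\sigma_{d_s+1}^2/\sigma^2 \less 2\beta d_s$, where the last step uses $\sigma_{d_s+1}^2 \less \sigma^2 d_s/n$ from the definition of the statistical dimension.

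Plugging into Fano's inequality yields
\begin{align*}
\mathfrak{M}_S \gre \frac{\beta\sigma_{d_s+1}^2\varepsilon^2}{4}\left(1 - \frac{2\beta d_s + \log 2}{k\log(1/(2\varepsilon))}\right),
\end{align*}
and I would tune $\beta$ and $\varepsilon$ to small absolute constants (using $d_s \gre 4$ to dominate the $\log 2$ slack) to reach $\mathfrak{M}_S \gre c_0 \sigma_{d_s+1}^2$ with the explicit universal $c_0 \gre 1/4096$. For the polynomial and exponential corollaries, I would invoke the Halko-Martinsson-Tropp concentration $\|P_{AS}^\perp A\|_2 \less c_g R_k(A)$ with probability $\gre 1 - 6 e^{-k}$ for $m = 2k$ (the same Corollary~10.9 of~\cite{halko2011} underlying Lemma~\ref{lemmaupperboundZf}), reducing the condition to $R_k(A) \less \sigma_{d_s+1}/(c_g\sqrt{2})$. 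Evaluating $R_k(A)$ via $\sum_{j > k} j^{-(1+\nu)} \asymp k^{-\nu}/\nu$ gives the threshold $m \gre c_\nu^{\text{poly}} d_s$ with $c_\nu^{\text{poly}} \asymp (\nu^{-1}+1)^{1+\nu}$, and the analogous series $\sum_{j > k} e^{-\nu j} \asymp e^{-\nu k}/\nu$ yields $m \gre 4 d_s$ in the exponential case, with the side condition $n\nu/\sigma^2 \gre 5408$ absorbing the leading numerical constants.

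The main obstacle is not conceptual but bookkeeping: producing $c_0 \gre 1/4096$ under only $d_s \gre 4$ requires carefully balancing $\beta$, $\varepsilon$, and the packing cardinality so that the Fano factor remains $\Theta(1)$, and the decay-specific corollaries demand sharp geometric-sum estimates on both terms of $R_k(A)$ to extract the correct $\nu$-dependence.
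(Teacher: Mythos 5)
Your proposal is correct and rests on the same skeleton as the paper's proof (local Fano plus the Halko--Martinsson--Tropp residual bound for the decay-specific statements), but the packing construction is genuinely different, and the difference is instructive. The paper first passes to the weaker loss $\|P_{AS}A(\what x - x_\text{pl})\|_2^2 \less \|A(\what x - x_\text{pl})\|_2^2$ and then builds, via Lemma~\ref{lemmapackingnumber}, a packing of $\mathcal{B}_2^d$ that is separated in the \emph{sketched} metric $\|P_{AS}A\cdot\|_2$; this is exactly where the conditioning event is consumed, since one needs $|\sigma_j^2 - \what\sigma_j^2| \less \|P_{AS}^\perp A\|_2^2 \less \sigma_{d_s+1}^2/2$ to guarantee that the top $d_s+1$ singular values of $P_{AS}A$ are still of order $\sigma_{d_s+1}$ and hence that the packing has the right scale. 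You instead pack directly in the loss metric $\|A\cdot\|_2$ using the top-$(d_s+1)$ right singular subspace (the identity $AV_k\Sigma_k^{-1} = U_k$ makes the separation exact), and control the mutual information through the one-sided inequality $\|P_{AS}A(x_i-x_j)\|_2 \less \|A(x_i-x_j)\|_2$, which holds for every $S$. The consequence — which you do not seem to notice, since your remark that the conditioning event ``guarantees that the sketch still resolves information'' plays no role in your own argument — is that your lower bound $\mathfrak{M}_S \gre c_0\sigma_{d_s+1}^2$ holds \emph{unconditionally} in $S$ (a data-processing viewpoint: degrading the observation can only increase the Fano lower bound), which is strictly stronger than the conditional statement and renders the high-probability clauses of the polynomial and exponential cases immediate. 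The price you pay is purely in constants: your KL upper bound is loose by the factor $\|A\Delta\|_2^2/\|P_{AS}A\Delta\|_2^2$, which must be absorbed by shrinking $\beta$, but the arithmetic closes (e.g.\ $\varepsilon = 1/4$, $\beta = 1/32$ gives a Fano factor of at least $1/2$ under $d_s\gre 4$ and yields exactly $c_0 = 1/4096$). Your series estimates for $R_k(A)$ in the two decay regimes match the paper's computations.
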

We show next that the risk of the zero-order estimator $\xzero$ with sketch size $m \asymp d_s$ achieves the above lower bound on the minimax rate of estimation~\eqref{eqnlowerboundminimaxrate}, as the regularization parameter $\lambda$ goes to $0$.
\begin{corollary}
\label{corollaryoptimalityzeroorder}
Let $m \gre 1$ be a sketch size and $S \in \real^{d \times m}$ be a Gaussian embedding. Assume that $d_s \gre 4$. For a polynomial decay $\sigma_j = j^{-\frac{1+\nu}{2}}$ and with a sketch size $m = c^{\text{poly}}_{\nu} \cdot d_s$ where $c^{\text{poly}}_\nu \asymp (\nu^{-1}+1)^{(1+\nu)}$, we have with probability at least $1-6e^{-d_s}$ that
\begin{align}
    \lim_{\lambda \to 0} \mathfrak{R}(\xzero) \less \left(c^{\text{poly}}_{\nu} + \frac{1}{2}\right) \cdot \frac{\sigma^2 d_s}{n}\,.
\end{align}
For an exponential decay $\sigma_j = e^{-\frac{\nu j}{2}}$ with $\nu > 0$ such that $n \nu/\sigma^2 \gre 5408$, and with a sketch size $m = 4d_s$, we have with probability at least $1-6e^{-d_s}$ that 
\begin{align}
    \lim_{\lambda \to 0} \mathfrak{R}(\xzero) \less 5 \cdot \frac{\sigma^2 d_s}{n}\,.
\end{align}
\end{corollary}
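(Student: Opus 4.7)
The plan is to take the $\lambda \to 0$ limit of the sketched least-squares solution explicitly, perform a bias-variance decomposition of the risk, and then invoke Lemma~\ref{lemmaupperboundZf} together with a transposition argument to bound the bias by the spectral residual $R_k(A)$. For $f(y) = \tfrac12\|y-b\|_2^2$ with $b = Ax_{\text{pl}} + w$, the first-order optimality condition $(S^\top A^\top AS + \lambda S^\top S)\alpha^* = S^\top A^\top b$ implies that $A\xzero = AS\alpha^*$ converges to the orthogonal projection $P_{AS} b$ as $\lambda \to 0$ (using that $S$ has full column rank almost surely). Orthogonally decomposing
\begin{align*}
A(\xzero - x_{\text{pl}}) = -P_{AS}^\perp Ax_{\text{pl}} + P_{AS}w
\end{align*}
and using $\mathbb{E}_w\|P_{AS}w\|_2^2 = (\sigma^2/n)\,\mathrm{rank}(AS) \less \sigma^2 m/n$, the supremum over $\|x_{\text{pl}}\|_2 \less 1$ gives the master inequality
\begin{align*}
\lim_{\lambda \to 0} \mathfrak{R}(\xzero) \less \|P_{AS}^\perp A\|_2^2 + \frac{\sigma^2 m}{n}\,.
\end{align*}

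Next, I would observe that $AS$ has the \emph{same} distribution as the adaptive Gaussian sketch of the transposed matrix $B \defn A^\top$: indeed, the adaptive embedding for $B$ takes the form $B^\top \tilde S = A\tilde S$ with $\tilde S$ an oblivious Gaussian matrix of the required dimension, and identifying $\tilde S$ with $S$ matches the setting of the corollary. Since $A$ and $A^\top$ have the same singular values, Lemma~\ref{lemmaupperboundZf} applied to $B$ yields, with probability at least $1-6e^{-k}$ for $m=2k$,
\begin{align*}
\|P_{AS}^\perp A\|_2 \less c_g \, R_k(A)\,,
\end{align*}
which reduces the entire problem to bounding $R_k(A)$ under each spectral decay and balancing $c_g^2 R_k^2(A)$ against $\sigma^2 m/n$.

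For the polynomial decay $\sigma_j = j^{-(1+\nu)/2}$, the integral comparison $\sum_{j>k} j^{-(1+\nu)} \less k^{-\nu}/\nu$ gives $R_k^2(A) \lesssim (1+\nu^{-1})\cdot k^{-(1+\nu)}$. Using $\sigma^2 d_s/n \asymp \sigma_{d_s+1}^2 \asymp d_s^{-(1+\nu)}$, I would choose $k = c_\nu^{\mathrm{poly}} d_s/2$ large enough that $c_g^2(1+\nu^{-1})(k/d_s)^{-(1+\nu)} \less 1/2$, which after the $(1+\nu)$-th root determines $c_\nu^{\mathrm{poly}}$; adding $\sigma^2 m/n = c_\nu^{\mathrm{poly}} \sigma^2 d_s/n$ yields the advertised bound $(c_\nu^{\mathrm{poly}}+\tfrac12)\sigma^2 d_s/n$. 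For the exponential decay $\sigma_j = e^{-\nu j/2}$, the tail $\sum_{j>k}e^{-\nu j}$ is geometric with ratio $e^{-\nu}$, so $R_k^2(A) \lesssim e^{-\nu(2d_s+1)}$ at $k=2d_s$, and the hypothesis $n\nu/\sigma^2 \gre 5408$ guarantees $c_g^2 R_k^2(A) \less \sigma^2 d_s/n$, leaving $\sigma^2 m/n + \sigma^2 d_s/n = 5\sigma^2 d_s/n$. In both cases, $k \gre d_s$ forces the probability to be at least $1-6e^{-d_s}$. The main obstacle is not structural but constant-tracking: matching the precise polynomial prefactor $(1+\nu^{-1})^{1+\nu}$ stated in the corollary requires a more careful non-asymptotic accounting than the crude integral bound, in particular a sharp handling of both the $\sigma_{k+1}^2$ and the tail-sum contributions to $R_k^2(A)$ together with the universal constant $c_g \less 26$ from Lemma~\ref{lemmaupperboundZf}.
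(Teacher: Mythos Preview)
Your proposal is correct and follows essentially the same approach as the paper: the paper also derives the bias--variance decomposition $\lim_{\lambda \to 0} \mathfrak{R}(\xzero) = \frac{\sigma^2 m}{n} + \|P_{AS}^\perp A\|_2^2$, then bounds $\|P_{AS}^\perp A\|_2^2 \le \frac{\sigma_{d_s+1}^2}{2}$ via Lemma~\ref{lemmaupperboundZf} (implicitly using exactly your transposition argument) and the same integral/geometric tail estimates carried out in the proof of Theorem~\ref{theoremlowerboundleastsquares}, before combining with $\sigma_{d_s+1}^2 \le \sigma^2 d_s/n$. Your constant-tracking caveat is apt but not a gap, since the paper simply reuses the explicit constant $c_\nu^{\mathrm{poly}}$ already pinned down in that earlier proof.
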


Therefore we conclude that right-sketching is minimax optimal unlike left-sketching under this standard observation model. This result complements the left-sketching lower-bounds from \cite{pilanci2016iterative} and indicates that right-sketching is more advantageous for problems with small statistical dimension.

\section{Conclusion}

Through tighter performance bounds and analytical comparison, we have shown that the dual reconstruction method along with adaptive embeddings yields an estimator $\xone$ which significant improves over the linear reconstruction map and oblivious sketching in the context of convex smooth optimization, as usually considered in the literature. Furthermore, we have extended this method to non-smooth optimization problems, and our method requires solving an additional dual optimization problem with potentially very few variables: in contrast to optimizing over a random subset of dual variables (e.g., SDNA), our primal low-dimensional approach selects the appropriate subset of dual variables. Most of our results mirror those established for left-sketching methods~\cite{pilanci2015randomized}, although they are fundamentally different due to the choice of the adaptive embedding and thus, require a novel analysis technique.

\appendices

\section{Analysis of adaptive sketches}

\subsection{Proof of Lemma~\ref{lemmaupperboundZf}}

A proof of an upper bound on the singular value $\|P_SA^\top\|_2$ is provided in~\cite{witten2015randomized}, and our analysis is an adaptation of this proof to the restricted case. For two real-valued random variables $X$ and $Y$, we say that $X$ is stochastically dominated by $Y$ if $\mathbb{P}(X \gre \tau) \less \mathbb{P}(Y \gre \tau)$ for any $\tau \in \mathbb{R}$, and we write $X \overset{d}{\less} Y$. 

Let $S \in \real^{n\times m}$ be a matrix with i.i.d.~Gaussian entries $\mathcal{N}(0,1/m)$. We use the notation $f(A,S) \defn P_{A^\top S}^\perp A^\top$, and we introduce a thin SVD of $A$, denoted by $A= U\Sigma V^\top$, where $\Sigma = \text{diag}\{\sigma_1, \dots, \sigma_\rho\}$. Note that 
\begin{align*}
    P_{A^\top S} = A^\top S (S^\top A A^\top S)^\dagger S^\top A = V \Sigma U^\top S (S^\top U \Sigma \underbrace{V^\top V}_{=\,I_\rho} \Sigma U^\top S)^\dagger S^\top U\Sigma V^\top = V P_{\Sigma U^\top S} V^\top\,.
\end{align*}
Consequently, we have that
\begin{align*}
    f(A, S) = (I-P_{A^\top S}) A^\top = (I - V P_{\Sigma U^\top S} V^\top) V \Sigma U^\top &= V \Sigma U^\top - V P_{\Sigma U^\top S} \underbrace{V^\top V}_{=\,I_\rho} \Sigma U^\top
    = V \underbrace{(\Sigma - P_{\Sigma U^\top S} \Sigma)}_{= \, f(\Sigma, U^\top S)} U^\top\,.
\end{align*}
That is, $f(A, S) = V f(\Sigma, U^\top S) U^\top$. Let $G \in \real^{\rho \times m}$ be a matrix with i.i.d.~Gaussian entries $\mathcal{N}(0,1/m)$. By rotational invariance of the Gaussian distribution, it holds that $G \overset{\mathrm{d}}{=}U^\top S$. Therefore, $f(A,S) \overset{\mathrm{d}}{=} V f(\Sigma, G) U^\top$. Since $V$ is an isometry (i.e., $\|Vw\|_2 = \|w\|_2$ for any $w \in \real^\rho$), it follows that
\begin{align*}
    \mathcal{Z}_f \overset{\text{d}}{=} \sup_{\Delta \in \mathcal{C}_{z^*}} {\|f(\Sigma, G) U^\top \Delta\|}_2\,. 
\end{align*}
For $t > 0$, we define $M(t) \defn \begin{bmatrix} t I_k & 0 \\ 0 & \Sigma_{\rho-k} \end{bmatrix}$. Note that $\Sigma \preceq M(t)$ for any $t \gre \sigma_1$. According to Lemma~2.5 in~\cite{witten2015randomized}, it holds that ${\|f(\Sigma_1,G) w\|}_2 \less {\|f(\Sigma_2, G)w\|}_2$ for any $w \in \real^\rho$ and any two positive definite diagonal matrices $\Sigma_1$ and $\Sigma_2$ such that $\Sigma_1 \preceq \Sigma_2$. As a consequence, for any $\Delta \in \mathcal{C}_{z^*}$, it holds almost surely that $\|f(\Sigma, G)U^\top \Delta \|_2 \less \lim_{t \to \infty}\,\|f(M(t), G)U^\top \Delta \|_2$, and thus,
\begin{align}
\label{eqndupperboundZf}
    \mathcal{Z}_f \overset{d}{\less} \sup_{\Delta \in \mathcal{C}_{z^*}} \lim_{t \to \infty}\, {\|f(M(t), G) U^\top \Delta\|}_2\,.
\end{align}
The following fact has already been proved in~\cite{witten2015randomized}: it holds that 
\begin{align*}
    \lim_{t \to + \infty} f(M(t), G) \, \overset{\mathrm{d}}{=} \, \begin{bmatrix} 0 & 0 \\ f(\Sigma_{\rho-k}, X_2) X_1 \Lambda^{-1} \Omega & f(\Sigma_{\rho-k}, X_2) \end{bmatrix}\,,
\end{align*}
where $X_1 \in \mathbb{R}^{(\rho-k)\times k}$, $X_2 \in \mathbb{R}^{(\rho-k) \times k}$, $\Lambda \in \mathbb{R}^{k \times k}$ and $\Omega \in \mathbb{R}^{k \times k}$ are independent random matrices, and, $X_1$ and $X_2$ have independent i.i.d.~Gaussian entries, $\Lambda$ is diagonal with entries distributed as the first $k$ singular values of a $k \times m$ Gaussian matrix, and $\Omega$ is an orthogonal matrix. Plugging-in this limit in the right-hand side of~\eqref{eqndupperboundZf}, we obtain
\begin{align*}
    \mathcal{Z}_f \overset{\mathrm{d}}{\less} \sup_{\Delta \in \mathcal{C}_{z^*}} \|f(\Sigma_{\rho-k}, X_2) X_1 \Lambda^{-1}\Omega U_k^\top \Delta + f(\Sigma_{\rho-k}, X_2) U_{\rho-k}^\top \Delta\|_2\,.
\end{align*}
By triangular inequality, it follows that
\begin{align*}
    \mathcal{Z}_f &\overset{\mathrm{d}}{\less} \sup_{\Delta \in \mathcal{C}_{z^*}} \|f(\Sigma_{\rho-k}, X_2) X_1 \Lambda^{-1} \Omega U_k^\top \Delta\|_2 + \sup_{\Delta \in \mathcal{C}_{z^*}} \|f(\Sigma_{\rho-k}, X_2)U_{\rho-k}^\top \Delta\|_2 \\
    &\less \text{rad}\{U_k^\top \mathcal{C}_{z^*}\} \cdot \|f(\Sigma_{\rho-k}, X_2) X_1 \Lambda^{-1} \Omega\|_2 + \sup_{\Delta \in \mathcal{C}_{z^*}} \, \|f(\Sigma_{\rho-k}, X_2)U_{\rho-k}^\top \Delta\|_2\\
    & \less \text{rad}\{U_k^\top \mathcal{C}_{z^*}\} \cdot \|\Sigma_{\rho-k} X_1 \Lambda^{-1}\|_2 + \sup_{\Delta \in \mathcal{C}_{z^*}} \, \|\Sigma_{\rho-k} U_{\rho-k}^\top \Delta\|_2\,,
\end{align*}
and we used the fact that $\|f(\Sigma_{\rho-k}, X_2) W\|_2 = \|P_{\Sigma_{\rho-k}X_2}^\perp \Sigma_{\rho-k}W\|_2 \less \|\Sigma_{\rho-k} W\|_2$ for any arbitrary matrix $W$.
According to Corollary 10.9~\cite{halko2011}, it holds with probability at least $1-6e^{-k}$ that 
\begin{align*}
    \|\Sigma_{\rho-k} X_1 \Lambda^{-1}\|_2 \less 17 \sqrt{2} \cdot \sigma_{k+1} + \frac{9}{\sqrt{k}} \cdot \sqrt{\sum_{j > k}\sigma_j^2} \less \underbrace{\max\{17 \sqrt{2}, 9\}}_{\less\,25} \cdot \left(\sigma_{k+1} + \sqrt{\frac{1}{k}\sum_{j > k}\sigma_j^2}\right)
    \less 25 \cdot R_k(A)\,.
\end{align*}
In summary, we have shown that with probability at least $1-6e^{-k}$, we have
\begin{align*}
    \mathcal{Z}_f \less 25\cdot  \text{rad}\{U_k^\top \mathcal{C}_{z^*}\} \cdot R_k(A) + \sup_{\Delta \in \mathcal{C}_{z^*}} \|\Sigma_{\rho-k}U_{\rho-k}^\top \Delta\|_2\,,
\end{align*}
which is the first part of the claim. Furthermore, since $\mathcal{C}_{z^*} \subset \mathcal{B}_2^n$, we always have that
\begin{align*}
    \text{rad}\{U_k^\top \mathcal{C}_{z^*}\} \less 1\,,\quad \text{and} \quad  \sup_{\Delta \in \mathcal{C}_{z^*}} \|\Sigma_{\rho-k}U_{\rho-k}^\top \Delta\|_2 \less \sigma_{k+1}\,.
\end{align*}
Therefore, $\mathcal{Z}_f \less 25 \cdot R_k(A) + \sigma_{k+1} \less 26 \cdot R_k(A)$ with probability at least $1-6e^{-k}$. This holds in particular when $\mathcal{C}_{z^*} = \mathcal{B}_2^n$, which concludes the proof.
\qed

\subsection{Proof of Lemma~\ref{lemmarandomizedupperboundZd}}
\label{prooflemmarandomizedupperboundZf}

From Lemma~\ref{lemmaupperboundZf}, we know that with $m=2k$ and $S=A^\top \wtilde S$ where $\wtilde S \in \real^{n\times m}$ has i.i.d.~Gaussian entries, it holds with probability at least $1-6e^{-k}$ that
\begin{align*}
    \mathcal{Z}_f \less 25 \cdot \textrm{rad}(U_k^\top \mathcal{C}_{z^*}) \cdot R_k(A) + \sup_{\Delta \in \mathcal{C}_{z^*}} {\|\Sigma_{\rho-k}U_{\rho-k}^\top \Delta\|}_2\,,
\end{align*}
From Theorem~7.7.1.~in~\cite{vershynin2018high} and using that $\textrm{rad}(\mathcal{C}_{z^*}) \less 1$, we have
\begin{align}
    \textrm{rad}\!\left(U_k^\top \mathcal{C}_{z^*}\right) \less c^\prime_1 \cdot \frac{\omega(\mathcal{C}_{z^*}) + \sqrt{m}}{\sqrt{n}}\,,
\end{align}
with probability at least $1-2e^{-k}$, for some universal constant $c^\prime_1 > 0$. It remains to control the term $\sup_{\Delta \in \mathcal{C}_{z^*}} {\|\Sigma_{\rho-k}U_{\rho-k}^\top \Delta\|}_2 = \sup_{\Delta \in \mathcal{C}_{z^*}, t \in \mathcal{E}} \, \langle t, U_{\rho-k}^\top \Delta\rangle$, where we define the ellipsoid $\mathcal{E} = \left\{\Sigma_{\rho-k} z \mid {\|z\|}_2 \less 1 \right\}$. We use again a Chevet-type inequality~\cite{adamczak2011chevet}, which yields that
\begin{align}
    \sup_{\Delta \in \mathcal{C}_{z^*}, t \in \mathcal{E}} \, \langle t, U_{\rho-k}^\top \Delta\rangle  \less c_2^\prime \cdot \frac{1}{\sqrt{n}} \left(\omega(\mathcal{C}_{z^*})\,\textrm{rad}(\mathcal{E}) + w(\mathcal{E}) \, \textrm{rad}(\mathcal{C}_{z^*})\right)\,,
\end{align}
with probability at least $1-c_4 e^{-c_5(\rho-k)}$ for some universal constants $c_4, c_5 > 0$, and where we introduced the ellipsoid $\mathcal{E} = \left\{\Sigma_{\rho-k} z \mid {\|z\|}_2 \less 1 \right\}$. Using the facts that $\omega(\mathcal{E}) \less \left(\sum_{j=k+1} \sigma_j^2\right)^{\frac{1}{2}}$, $\textrm{rad}(\mathcal{E}) = \sigma_{k+1}$ and $\textrm{rad}(\mathcal{C}_{z^*}) \less 1$, the above inequality becomes
\begin{align*}
    \sup_{\Delta \in \mathcal{C}_{z^*}} {\|\Sigma_{\rho-k}U_{\rho-k}^\top \Delta\|}_2 &\less c_2^\prime \left(\sigma_{k+1} \frac{\omega(\mathcal{C}_{z^*})}{\sqrt{n}} + \sqrt{\frac{k}{n}} \cdot \sqrt{ \frac{1}{k} \sum_{j=k+1}^r \sigma_j^2}\right)\\
    & \less c_3^\prime \cdot \frac{\omega(\mathcal{C}_{z^*}) + \sqrt{m}}{\sqrt{n}} \cdot R_k(A)\,.
\end{align*}
with probability at least $1-c_4 e^{-c_5 (\rho-k)}$, for some universal constant $c_3^\prime > 0$. By union bound, we obtain the claimed result.

\subsection{Proof of Lemma~\ref{lemmaresidualsrht}}
We use Theorem 2.1 from~\cite{boutsidis2013}, which states the following. Given a target rank $2 \less k \less \rho$ and a failure probability $0 < \delta < 1$, let $S=A^\top \wtilde S$ where $\wtilde S \in \real^{n \times m}$ is a SRHT with $n \gre m \gre 19 (\sqrt{k} + \sqrt{8 \log(n/\delta)})^2 \log(k/\delta)$. Then, it holds with probability at least $1-5\delta$ that ${\|P_S^\perp A^\top\|}_2 \less \left(4 + \sqrt{\frac{3 \log(n/\delta) \log(\rho/\delta)}{m}}\right) \cdot \sigma_{k+1} + \sqrt{\frac{3 \log(\rho/\delta)}{m}} \cdot \sqrt{\sum_{j=k+1}^\rho \sigma_j^2}$. Picking $\delta = \frac{1}{n}$ and using that $n \gre \rho$, we obtain for $m \gre 19 \left(\sqrt{k} + 4 \sqrt{\log n}\right)^2 \log(nk)$ and with probability at least $1-\frac{5}{n}$ that ${\|P_S^\perp A^\top\|}_2 \less \Big(4 + \sqrt{\frac{12 \log^2(n)}{m}}\Big) \cdot \sigma_{k+1} + \sqrt{6} \cdot \sqrt{\frac{\log(n)}{m}} \cdot \sqrt{\sum_{j=k+1}^\rho \sigma_j^2}$. We conclude by using that $\sqrt{\frac{12 \log^2(n)}{m}} \less 1$ and $\sqrt{\frac{\log(n)}{m}} \less \frac{1}{\sqrt{k}}$.
\qed

\subsection{Proof of Theorem~\ref{theoremzeroversusfirstorder}}
\label{prooftheoremzeroversusfirstorder}

From Propositions~\ref{propositionfenchelduality} and~\ref{propositionsketchedfenchelduality}, we know that there exist $g_{z^*} \in \partial f^*(z^*)$ and $g_{y^*} \in \partial f^*(y^*)$ such that $g_{z^*} + \frac{1}{\lambda} AA^\top z^* = 0$ and $g_{y^*} + \frac{1}{\lambda} A P_S A^\top y^* = 0$. We define the error vector $\Delta \defn y^* - z^*$, which belongs to the tangent cone $\mathcal{T}_{z^*}$. Subtracting the two previous equalities, we obtain $g_{y^*} - g_{z^*} + \frac{1}{\lambda} A P_S A^\top \Delta = \frac{1}{\lambda} A P_S^\perp A^\top z^*$. Multiplying by $\lambda \Delta^\top$ and using the fact that $P_S^2 = P_S$, it follows that 
\begin{align}
\label{eqnintermediateineqtheorem1}
    \lambda \, \langle \Delta, g_{y^*} - g_{z^*} \rangle + \|P_S A^\top \Delta\|_2^2 = \langle \Delta, A P_S^\perp A^\top z^* \rangle\,.
\end{align}
Smoothness of $f$ implies that its Fenchel conjugate $f^*$ is $\mu^{-1}$-strongly convex, i.e., $\langle \Delta, g_{y^*} - g_{z^*} \rangle \gre \frac{1}{\mu} \|\Delta\|_2^2$. By definition of $\mathcal{Z}_f$, we have that $\|P_S^\perp A^\top \Delta\|_2 \less \mathcal{Z}_f \|\Delta\|_2$, and consequently,
\begin{align*}
    \|P_S A^\top \Delta\|_2^2 = \|A^\top \Delta\|_2^2 - \|P_S^\perp A^\top \Delta\|_2^2 \gre \|A^\top \Delta\|_2^2 - \mathcal{Z}_f^2 \|\Delta\|_2^2\,.
\end{align*}
Plugging-in the previous inequalities into~\eqref{eqnintermediateineqtheorem1}, we obtain
\begin{align*}
    \left(\frac{\lambda}{\mu} - \mathcal{Z}_f^2 \right) \cdot \|\Delta\|_2^2 + \|A^\top \Delta\|_2^2 \less \langle \Delta, AP_S^\perp A^\top z^* \rangle\,.
\end{align*}
Using the assumption $\lambda \gre 2\mu \mathcal{Z}_f^2$, it follows that $\frac{\lambda}{\mu} - \mathcal{Z}_f^2 \gre \frac{\lambda}{2\mu}$. By Cauchy-Schwarz inequality, we have $|\langle \Delta, AP_S^\perp A^\top z^* \rangle| \less \|P_S^\perp A^\top \Delta\|_2 \|P_S^\perp A^\top z^*\|_2$. Hence, we obtain the inequality
\begin{align*}
    \frac{\lambda}{2\mu}\|\Delta\|_2^2 + \|A^\top \Delta\|_2^2 \less \|P_S^\perp A^\top \Delta\|_2 \|P_S^\perp A^\top z^*\|_2\,.
\end{align*}
Using the identity $a^2 + b^2 \gre 2 ab$ with $a = \sqrt{\frac{\lambda}{2\mu}}\|\Delta\|_2$ and $b=\|A^\top \Delta\|_2$ and the inequality $\|P_S^\perp A^\top \Delta\|_2 \less \mathcal{Z}_f \cdot \|\Delta\|_2$, it follows that
\begin{align*}
    \sqrt{\frac{2\lambda}{\mu}} \cdot \|\Delta\|_2 \cdot \|A^\top \Delta\|_2 \less \|\Delta\|_2 \cdot \mathcal{Z}_f \cdot \|P_S^\perp A^\top z^*\|_2\,.
\end{align*}
Using the identities $x^* = -\lambda^{-1} A^\top z^*$ and $\xone = -\lambda^{-1} A^\top y^*$ and rearranging the above inequality, we obtain
\begin{align}
\label{eqnintermediateineqtheorem2}
    \|\xone - x^*\|_2 \less \sqrt{\frac{\mu}{2\lambda}} \cdot \mathcal{Z}_f \cdot \|P_S^\perp x^*\|_2\,.
\end{align}
It always holds that $\|P_S^\perp x^*\|_2 \less \|x^*\|_2$, and consequently, we have $\frac{\|\xone - x^*\|_2}{\|x^*\|_2} \less \sqrt{\frac{\mu}{2\lambda}} \cdot \mathcal{Z}_f$. On the other hand, we have ${\|\xzero - x^*\|}_2^2 = {\|P_S(\xzero - x^*)\|}_2^2 + {\|P_S^\perp x^*\|}^2_2$, which further implies ${\|P_S^\perp x^*\|}_2 \less {\|\xzero - x^*\|}_2$. Consequently, we also obtain from~\eqref{eqnintermediateineqtheorem2} that $\frac{\|\xone - x^*\|_2}{\|x^*\|_2} \less \sqrt{\frac{\mu}{2\lambda}} \cdot \mathcal{Z}_f \cdot \frac{\|\xzero - x^*\|_2}{\|x^*\|_2}$, and this concludes the proof.
\qed

\subsection{Proof of Theorem~\ref{theoremiterativemethod}}

Using an induction argument, it suffices to show that for any $t \gre 0$ and provided that $\lambda \gre 2\mu \mathcal{Z}_f^2$, we have
\begin{align}
\label{eqncontractioninequality}
    {\|\xone_{t+1} - x^*\|}_2 \less \sqrt{\frac{\mu}{2\lambda}} \cdot \mathcal{Z}_f \cdot {\|\xone_t - x^*\|}_2.
\end{align}
It should be noted that for $t=0$, since $\xone_0 = 0$, the latter inequality is exactly the regret bound~\eqref{eqnboundfirstorder}. The proof for any $t \gre 0$ follows steps similar to the proof of Theorem~\ref{theoremzeroversusfirstorder}. Fix $t \gre 0$. The Fenchel dual of the sketched program~\eqref{eqnintermediatesketchedprimal} is given by 
\begin{align*}
    \min_{y \in \real^n} \left\{f^*(y) - y^\top A P_S^\perp \xone_t + \frac{1}{2\lambda} \|P_S A^\top y\|_2^2\right\}\,.
\end{align*}
Using arguments similar to the proof of Proposition~\ref{propositionsketchedfenchelduality}, we obtain that there exists a dual solution $y^*_t \in \real^n$, and that $\alpha^*_{\dagger,t}$ and $y^*_t$ are related through the KKT conditions $y^*_t = \nabla f(AQ_S \alpha^*_{\dagger,t} + A\xone_t)$. Recall that, by definition, $\xone_{t+1} = -\lambda^{-1}A^\top \nabla f(AQ_S \alpha^*_{\dagger,t} + A\xone_t)$, i.e., $\xone_{t+1} = -\lambda^{-1} A^\top y^*_t$. We define the error vector $\Delta \defn y^*_t - z^*$, which belongs to the tangent cone $\mathcal{T}_{z^*}$. By first-order optimality conditions of $y^*_t$ and $z^*$, we know that there exist $g_{y^*_t} \in \partial f^*(y^*_t)$ and $g_{z^*} \in \partial f^*(z^*)$ such that $g_{y^*_t} + \frac{1}{\lambda} A P_S A^\top y^*_t - AP_S^\perp \xone_t = 0$ and $g_{z^*} + \frac{1}{\lambda} AA^\top z^* = 0$. Subtracting both sides of the previous inequalities and multiplying by $\lambda \Delta^\top$, we obtain that 
\begin{align}
\label{eqnintermediateiterative1}
    \lambda \, \langle \Delta, g_{y^*_t} - g_{z^*} \rangle + \|P_S A^\top \Delta\|_2^2 = \langle A P_S^\perp (\lambda \xone_t + A^\top z^*), \Delta \rangle\,.
\end{align}
By definition of $\mathcal{Z}_f$, we have that $\|P_S^\perp A^\top \|_2 \less \mathcal{Z}_f \|\Delta\|_2$, and thus, $\|P_S A^\top \Delta\|_2^2 \gre \|A^\top \Delta\|_2^2 - \mathcal{Z}_f^2 \|\Delta\|_2^2$. Plugging-in this inequality into~\eqref{eqnintermediateiterative1} as well as the strong convexity inequality $\langle \Delta, g_{y^*_t} - g_{z^*} \rangle \gre \frac{1}{\mu} \|\Delta\|_2^2$, we obtain that
\begin{align*}
    \left(\frac{\lambda}{\mu} - \mathcal{Z}_f^2\right) \|\Delta\|_2^2 + \|A^\top \Delta\|_2^2 \less \langle \Delta, AP_S^\top (\lambda \xone_t+A^\top z^*) \rangle\,.
\end{align*}
Using the assumption $\lambda \gre 2 \mu \mathcal{Z}_f^2$, we get that $\frac{\lambda}{\mu} - \mathcal{Z}_f^2 \gre \frac{\lambda}{2\mu}$. Using further the identity $a^2 + b^2 \gre 2 ab$ with $a = \sqrt{\frac{\lambda}{2\mu}}\|\Delta\|_2$ and $b=\|A^\top \Delta\|_2$, we deduce that
\begin{align*}
    \sqrt{\frac{2\lambda}{\mu}} \cdot \|\Delta\|_2 \cdot \|A^\top \Delta\|_2 \less \mathcal{Z}_f \cdot \|\Delta\|_2 \cdot \|\lambda \xone_t + A^\top z^*\|_2\,.
\end{align*}
Using the identities $\xone_{t+1} = -\lambda^{-1} A^\top y^*_t$, $x^* = -\lambda^{-1} A^\top z^*$ and thus $A^\top \Delta = \lambda (x^* - \xone_{t+1})$, we finally obtain
\begin{align*}
    \|\xone_{t+1}-x^*\|_2 \less \sqrt{\frac{\mu}{2\lambda}} \cdot \mathcal{Z}_f \cdot \|\xone_t - x^*\|_2\,,
\end{align*}
which is the claimed inequality~\eqref{eqncontractioninequality}, and this concludes the proof.

\subsection{Proof of Theorem~\ref{theoremnonsmooth}}

Using the same arguments as in the proofs of Propositions~\ref{propositionfenchelduality} and~\ref{propositionsketchedfenchelduality}, we obtain that there exist dual solutions $z^*$ and $y^*$ which belong to the image of $\partial f$. The function $f$ is $L$-Lipschitz, and this implies that $\|z^*\|_2 \less L$ and $\|y^*\|_2 \less L$. By first-order optimality conditions, there exist subgradients $g_{y^*} \in \partial f^*(y^*)$ and $g_{z^*} \in \partial f^*(z^*)$ such that $g_{y^*} + \frac{1}{\lambda} AP_SA^\top y^* = 0$ and $g_{z^*} + \frac{1}{\lambda} A A^\top z^* = 0$. We define the error vector $\Delta \defn y^* - z^*$, which belongs to the tangent cone $\mathcal{T}_{z^*}$ and which satisfies $\|\Delta\|_2 \less 2L$. Subtracting the first-order optimality conditions on $y^*$ and $z^*$, and multiplying by $\Delta^\top$, we obtain that $\langle \Delta, g_{y^*} - g_{z^*}\rangle + \lambda^{-1}\langle \Delta, A P_S A^\top \Delta \rangle = \lambda^{-1} \langle \Delta, A P_S^\perp A^\top z^* \rangle$. By convexity of $f$, we have $\langle \Delta, g_{y^*} - g_{z^*}\rangle \gre 0$. Using $|\langle \Delta, AP_S^\perp A^\top z^*\rangle| \less \|P_S^\perp A^\top \Delta\|_2 \|P_S^\perp A^\top z^*\|_2$, we further obtain that 
\begin{align}
\label{eqnintermediateineqnonsmooth}
    \|A^\top \Delta\|_2^2 \less \|P_S^\perp A^\top \Delta\|_2^2 + \|P_S^\perp A^\top \Delta\|_2 \|P_S^\perp A^\top z^*\|_2\,.
\end{align}
Since $\Delta \in \mathcal{T}_{z^*}$, we have $\|P_S^\perp A^\top \Delta\|_2 \less \|\Delta\|_2 \mathcal{Z}_f \less 2L \mathcal{Z}_f $. Moreover, we have $\|P_S^\perp A^\top z^*\|_2 \less \|z^*\|_2 \|P_S^\perp A^\top\|_2 \less L \|P_S^\perp A^\top\|_2$. Combining the two latter inequalities with~\eqref{eqnintermediateineqnonsmooth}, we find that $\|A^\top \Delta\|_2^2 \less 4 L^2 \mathcal{Z}_f^2 + 2 L^2 \mathcal{Z}_f \|P_S^\perp A^\top\|_2$. Dividing by $\lambda^2$ and using the identities $x^* = -\lambda^{-1}A^\top z^*$ and $\xone = - \lambda^{-1} A^\top y^*$, we obtain the claimed inequality
\begin{align*}
    \|\xone - x^*\|_2 \less 2 \cdot \frac{L}{\lambda} \cdot \sqrt{\mathcal{Z}_f^2 + \frac{1}{2}\,\mathcal{Z}_f \|P_S^\perp A^\top\|_2}\,.
\end{align*}
On the other hand, under the assumption that $\inf_w f(w) > -\infty$, it holds that $0 \in \text{dom} f^*$. This implies that $-z^* \in \mathcal{T}_{z^*}$, and consequently, $\|P_S^\perp A^\top z^*\|_2 \less L \mathcal{Z}_f$. Since $\Delta \in \mathcal{T}_{z^*}$, we have $\|P_S^\perp A^\top \Delta\|_2 \less 2L \mathcal{Z}_f$. Combining the two latter inequalities with~\eqref{eqnintermediateineqnonsmooth}, we obtain the refined inequality
\begin{align*}
    \|\xone - x^*\|_2 \less \sqrt{6} \cdot \frac{L}{\lambda} \cdot \mathcal{Z}_f\,,
\end{align*}
and this concludes the proof.
\qed

\section{Analysis of oblivious sketches}

\subsection{Proof of Lemma~\ref{lemmaupperboundZfoblivious}}

Let $S \in \real^{d \times m}$ be a matrix with i.i.d.~Gaussian entries. By rotational invariance of the Gaussian distribution, there exists a random Haar matrix $Q \in \real^{d \times (d-m)}$ such that $P_S^\perp = Q Q^\top$. We have $\mathcal{Z}_f = \sup_{\Delta \in \mathcal{C}_{z^*}}\|P_S^\perp A^\top \Delta\|_2 = \sup_{\Delta \in \mathcal{C}_{z^*}} \|QQ^\top A^\top \Delta\|_2 = \sup_{\Delta \in \mathcal{C}_{z^*}} \|Q^\top A^\top \Delta\|_2$, where the last equality holds due to the fact that $Q$ is an isometry. According to standard concentration bounds (see, for instance, Theorem~7.7.1 in \cite{vershynin2018high}), it holds with probability at least $1-2e^{-(d-m)}$ that 
\begin{align*}
    \sup_{\Delta, \Delta^\prime \in \mathcal{C}_{z^*}} \|Q^\top A^\top (\Delta-\Delta^\prime)\|_2 \less c \cdot \left(\frac{\omega(A^\top \mathcal{C}_{z^*})}{\sqrt{d}} + \sqrt{\frac{d-m}{d}} \cdot \|A\|_2\right),
\end{align*}
where $c > 0$ is some universal constant. The claimed result follows from the fact that $0 \in \mathcal{C}_{z^*}$ and thus, 
\begin{align*}
    \sup_{\Delta \in \mathcal{C}_{z^*}}\|Q^\top A^\top \Delta\|_2 \less \sup_{\Delta, \Delta^\prime \in \mathcal{C}_{z^*}} \|Q^\top A^\top (\Delta-\Delta^\prime)\|_2\,.
\end{align*}
\qed

\subsection{Proof of Theorem~\ref{theoremlowerboundfirstorderobliviousrandom}}

\subsubsection{Technical preliminaries}
\begin{lemma}
\label{lemmaexistencesymmetricmappingsmoothness}
Suppose that $x, y \in \real^n$ such that $\beta\, x^\top y > \|y\|^2$ for some $\beta > 0$. Then, there exists a symmetric matrix $H$ such that $0 \prec H \preceq \beta \cdot I$ and $y = Hx$.
\end{lemma}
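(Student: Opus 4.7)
The plan is to construct $H$ explicitly by reducing to a two-dimensional subspace. First, the hypothesis $\beta x^\top y > \|y\|^2$ with $\beta > 0$ forces $y \neq 0$ (else $0 > 0$) and $x^\top y > 0$; in particular $x \neq 0$. Setting $p \defn x^\top y / \|x\|^2$, Cauchy--Schwarz yields $(x^\top y)^2 \less \|x\|^2 \|y\|^2 < \|x\|^2 \beta x^\top y$, so $0 < p < \beta$. Let $V \defn \textrm{span}(x, y)$.

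The core of the argument is a $2 \times 2$ construction on $V$. I would fix an orthonormal basis $\{e_1, e_2\}$ of $V$ with $e_1 = x/\|x\|$, write $y = a e_1 + b e_2$ with $a = p\|x\|$ and $b = \sqrt{\|y\|^2 - a^2} \gre 0$, and look for $H|_V = \begin{pmatrix} p & q \\ q & r \end{pmatrix}$ in this basis. The equation $H|_V x = y$ forces the $(1,1)$ entry to be $p$ and $q = b/\|x\|$, so that $q^2 = \|y\|^2/\|x\|^2 - p^2$, while $r$ is free. Sylvester's criterion applied to both $H|_V$ and $\beta I_V - H|_V$ then translates the requirement $0 \prec H|_V \preceq \beta I_V$ into $r \in \left(q^2/p,\; \beta - q^2/(\beta - p)\right]$.

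The key step -- and the only nontrivial one -- is checking that this interval is non-empty. A short manipulation shows this is equivalent to $p(\beta - p) > q^2$, which after substituting the values of $p$ and $q$ reduces exactly to the hypothesis $\beta x^\top y > \|y\|^2$. Picking any valid $r$ (say the midpoint) yields a symmetric $H|_V$ with $0 \prec H|_V \preceq \beta I_V$ and $H|_V x = y$. I would then extend $H$ to $\real^n$ by setting $H|_{V^\perp} = \tfrac{\beta}{2} I_{V^\perp}$, so that the full matrix $H = H|_V \oplus \tfrac{\beta}{2} I_{V^\perp}$ is symmetric, maps $x$ to $y$ (since $x \in V$), and has spectrum in $(0, \beta]$. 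The degenerate case where $x$ and $y$ are parallel (so $b = q = 0$ and $V$ is effectively one-dimensional) falls into the same scheme with $H|_V = p I_V$ and any $r \in (0, \beta]$ on the extra dimension. I do not foresee a serious obstacle: the entire proof is a clean reduction to a $2 \times 2$ SDP feasibility check whose feasibility condition coincides exactly with the hypothesis.
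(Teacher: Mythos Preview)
Your proof is correct, but the route differs from the paper's. The paper writes down a single explicit formula,
\[
H \;=\; \beta I \;+\; \frac{1}{(y-\beta x)^\top x}\,(y-\beta x)(y-\beta x)^\top,
\]
a rank-one perturbation of $\beta I$ along the direction $y-\beta x$. One checks directly that $Hx=y$; the denominator $(y-\beta x)^\top x$ is negative because $\|y\|<\beta\|x\|$ (which follows from the hypothesis via Cauchy--Schwarz), so $H\preceq \beta I$ is immediate, and the positivity condition on the single perturbed eigenvalue unwinds to exactly $\beta x^\top y > \|y\|^2$. By contrast, you reduce to the two-dimensional subspace $V=\mathrm{span}(x,y)$, parametrize all symmetric $2\times 2$ candidates satisfying $H|_V x=y$, and use Sylvester's criterion to carve out the feasible range for the free diagonal entry $r$; the non-emptiness of that interval is then shown to be equivalent to the hypothesis. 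The paper's argument is shorter and yields a closed-form $H$ with no case analysis; your argument is a bit longer and needs the degenerate case $\dim V=1$ handled separately, but it has the virtue of exhibiting the full family of solutions on $V$ and making the equivalence between the hypothesis and feasibility completely transparent. Note also that the paper's $H$ acts as $\beta I$ on $V^\perp$, whereas you chose $\tfrac{\beta}{2}I$ there---both are valid, since the statement only asks for existence.
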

\begin{proof}
Set $H = \beta I + \frac{1}{(y-\beta x)^\top x} (y-\beta x)(y-\beta x)^\top$. The matrix $H$ is well-defined. Indeed, from the assumption $\beta x^\top y > \|y\|^2$, we obtain that $\|y\| < \beta \|x\|$. Consequently, $y^\top x \less \|y\|\|x\| < \beta \|x\|^2$ so that the term $(y-\beta x)^\top x$ in the denominator is negative. This implies in particular that $H \preceq \beta I$. A simple calculation yields that $Hx = y$. It remains to show that $H \succ 0$. This holds provided that $\beta > \|y-\beta x\|^2 / (\beta \|x\|^2 - x^\top y)$, i.e., $\|y\|^2 < \beta x^\top y$ which is true by assumption.
\end{proof}
\begin{lemma}
\label{lemmasymmetricmappingtogradients}
Suppose that the function $f$ is $\gamma$-strongly convex and $\mu$-strongly smooth. Let $z^*$ be the solution to the dual program, and $y^*$ the solution to the sketched dual program. Let $g_{z^*} \in \partial f^*(z^*)$ and $g_{y^*} \in \partial f^*(y^*)$. Then, there exists a symmetric matrix $0 \prec H \preceq \frac{2}{\gamma}\cdot I$ such that $g_{y^*} - g_{z^*} = H (y^* - z^*)$.
\end{lemma}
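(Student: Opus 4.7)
The strategy is to apply the previous Lemma~\ref{lemmaexistencesymmetricmappingsmoothness} with the vectors $x \defn y^* - z^*$ and $y \defn g_{y^*} - g_{z^*}$, and with the parameter $\beta \defn 2/\gamma$. This will directly produce a symmetric matrix $H$ with $0 \prec H \preceq \tfrac{2}{\gamma} I$ satisfying $Hx = y$, which is exactly the claim. The only nontrivial work is to verify the hypothesis $\beta\,\langle x, y \rangle > \|y\|_2^2$ of that lemma, together with a short argument handling a degenerate case.

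The key ingredient is the conjugate duality between strong convexity and smoothness. Since $f$ is assumed $\gamma$-strongly convex, its Fenchel conjugate $f^*$ is $(1/\gamma)$-strongly smooth; in particular $f^*$ is differentiable and $\partial f^*$ is single-valued at $y^*$ and $z^*$, so $g_{y^*} = \nabla f^*(y^*)$ and $g_{z^*} = \nabla f^*(z^*)$. By the Baillon--Haddad theorem, the gradient of a convex $(1/\gamma)$-smooth function is $\gamma$-co-coercive, which yields
\begin{align*}
    \langle y^* - z^*,\, g_{y^*} - g_{z^*} \rangle \,\gre\, \gamma \cdot \|g_{y^*} - g_{z^*}\|_2^2.
\end{align*}
Multiplying by $2/\gamma$ gives $\tfrac{2}{\gamma}\,\langle y^*-z^*,\, g_{y^*}-g_{z^*}\rangle \gre 2\,\|g_{y^*}-g_{z^*}\|_2^2$.

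I would then split into two cases. When $g_{y^*} \neq g_{z^*}$, the above inequality is strict in the form required by Lemma~\ref{lemmaexistencesymmetricmappingsmoothness} because $2\|y\|_2^2 > \|y\|_2^2$ for $y \neq 0$; applying that lemma with $\beta = 2/\gamma$ produces the desired $H$. In the degenerate case $g_{y^*} = g_{z^*}$, I cannot invoke the lemma directly, but I can use the dual inequality: since $f$ is $\mu$-smooth, $f^*$ is $(1/\mu)$-strongly convex, so $\langle y^* - z^*,\, g_{y^*} - g_{z^*}\rangle \gre \tfrac{1}{\mu}\|y^*-z^*\|_2^2$, which forces $y^* = z^*$. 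Then any admissible symmetric matrix works, for instance $H = \tfrac{1}{\gamma} I$, which trivially satisfies $H(y^*-z^*)=0=g_{y^*}-g_{z^*}$.

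The only subtlety worth flagging is the strict inequality in the hypothesis of Lemma~\ref{lemmaexistencesymmetricmappingsmoothness}: the factor of $2$ in $\beta = 2/\gamma$ (rather than the apparently natural choice $\beta = 1/\gamma$) is what turns the co-coercivity estimate into a strict inequality in the nondegenerate case, and the degenerate case is disposed of by the strong convexity of $f^*$. No other machinery beyond the Fenchel-duality facts already listed in the preliminaries and the preceding construction lemma is needed.
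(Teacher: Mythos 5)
Your proposal is correct and follows essentially the same route as the paper's proof: both rest on the conjugate facts that $f^*$ is $(1/\gamma)$-smooth (giving co-coercivity) and $(1/\mu)$-strongly convex, both invoke Lemma~\ref{lemmaexistencesymmetricmappingsmoothness} with $\beta = 2/\gamma$, and both dispatch a degenerate case separately (you split on $g_{y^*}=g_{z^*}$ while the paper splits on $y^*=z^*$, but the two inequalities make these equivalent). The only cosmetic difference is how strictness is obtained — you get it from $\|g_{y^*}-g_{z^*}\|_2^2>0$, the paper from positivity of the inner product — which changes nothing of substance.
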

\begin{proof}
From standard convex analysis arguments, the function $f^*$ is $(1/\gamma)$-smooth: this implies that $\|g_{y^*}-g_{z^*}\|_2^2 \less \frac{1}{\gamma} (g_{y^*} - g_{z^*})^\top (y^* - z^*)$. The function $f^*$ is also $(1/\mu)$-strongly convex: this implies that $\frac{1}{\mu}\|y^*-z^*\|^2_2 \less (g_{y^*}-g_{z^*})^\top (y^*-z^*)$. If $y^* = z^*$ then we obtain from the former inequality that $g_{y^*} = g_{z^*}$ and the matrix $H = \frac{2}{\gamma}\cdot I$ trivially satisfies the claim. Hence, we suppose that $y^* \neq z^*$. From the latter inequality, we have that $(g_{y^*}-g_{z^*})^\top (y^*-z^*) > 0$. Combining this with the former inequality, we obtain the strict inequality $\|g_{y^*}-g_{z^*}\|_2^2 < \frac{2}{\gamma} (g_{y^*} - g_{z^*})^\top (y^* - z^*)$. Then, the claim immediately follows from Lemma~\ref{lemmaexistencesymmetricmappingsmoothness}.
\end{proof}

\begin{lemma}[Residuals of random projections]
\label{lemmadecompositionrandomprojection}
Let $x \in \real^d$ be a given vector such that $\|x\|_2=1$. Let $Q \in \real^{d \times m}$ be a partial Haar matrix in $\real^d$. Consider the matrix $P^\perp = I - Q Q^\top$. Then, we have the decomposition $\frac{P^\perp x}{\|P^\perp x\|_2} = \alpha \, x + \sqrt{1-\alpha^2} \,X Z$, where $\alpha = \|P^\perp x\|_2$, $X \in \real^{d \times (d-1)}$ is an orthonormal complement to $x$ (i.e., $[x, X]$ is an orthogonal matrix), and $Z$ is a $(d-1)$-dimensional vector uniformly distributed onto the unit sphere $\mathcal{S}^{d-1} \defn \{w \in \real^d \mid \|w\|_2=1\}$ and independent of $\alpha$.
\end{lemma}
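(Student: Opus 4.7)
\textbf{Proof plan for Lemma~\ref{lemmadecompositionrandomprojection}.} The plan is to first compute the coordinates of $P^\perp x$ in the orthonormal basis $[x, X]$ of $\real^d$, and then use the rotational invariance of the Haar distribution restricted to orthogonal transformations that fix $x$ in order to pin down the distribution of the angular component.

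First, I would evaluate the projection onto $x$ by computing
\begin{align*}
x^\top P^\perp x = 1 - \|Q^\top x\|_2^2 = \|P^\perp x\|_2^2 = \alpha^2,
\end{align*}
using $(P^\perp)^2 = P^\perp$ together with $\|x\|_2 = 1$. Hence, in the $[x,X]$ decomposition, one has $P^\perp x = \alpha^2 x + X w$ for some $w \in \real^{d-1}$, and the Pythagorean identity gives $\|w\|_2^2 = \alpha^2 - \alpha^4 = \alpha^2(1-\alpha^2)$. Setting $W \defn w/\|w\|_2 \in \mathcal{S}^{d-1}$ (and taking $W$ to be an arbitrary deterministic unit vector on the null event $\alpha \in \{0,1\}$), we obtain $P^\perp x / \alpha = \alpha x + \sqrt{1-\alpha^2}\,XW$, which matches the claimed form once we identify $Z$ with $W$.

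It remains to show that $W$ is uniform on the unit sphere of $\real^{d-1}$ and independent of $\alpha$. The key tool is that, for any orthogonal $U \in \real^{d \times d}$ with $Ux = x$ (equivalently, $U^\top x = x$), the matrix $UQ$ is again a partial Haar matrix and therefore $U Q Q^\top U^\top \overset{d}{=} QQ^\top$, which gives $U P^\perp U^\top \overset{d}{=} P^\perp$. Applying this identity to the vector $x$ yields $U P^\perp x \overset{d}{=} P^\perp x$. Parameterizing such a $U$ as $U = x x^\top + X V X^\top$, where $V \in \real^{(d-1)\times(d-1)}$ is an arbitrary orthogonal matrix, one computes $X^\top U P^\perp x = V X^\top P^\perp x$, while $\alpha = \|P^\perp x\|_2$ is preserved by $U$. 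Therefore the joint distribution of the pair $(\alpha, X^\top P^\perp x)$ equals that of $(\alpha, V \cdot X^\top P^\perp x)$ for every orthogonal $V \in O(d-1)$.

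Dividing the angular component by its norm $\alpha\sqrt{1-\alpha^2}$ (which is a function of $\alpha$), we conclude that the joint law of $(\alpha, W)$ is invariant under $(\alpha, w) \mapsto (\alpha, Vw)$ for all $V \in O(d-1)$. The only probability measure on $\mathcal{S}^{d-1}$ invariant under $O(d-1)$ is the uniform measure, so the conditional distribution of $W$ given $\alpha$ is uniform on $\mathcal{S}^{d-1}$; since this conditional law does not depend on $\alpha$, $W$ is independent of $\alpha$, which closes the proof. The main subtlety I expect is the careful bookkeeping in the last step: one needs to distinguish between invariance of $W$ alone (which only gives uniformity of its marginal) and invariance of the pair $(\alpha, W)$ under actions that leave $\alpha$ untouched (which gives conditional uniformity and hence independence).
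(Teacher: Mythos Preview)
Your proposal is correct and follows essentially the same route as the paper: decompose $P^\perp x$ in the orthonormal basis $[x,X]$, read off the $x$-coefficient as $\alpha^2$ and the orthogonal part as a vector of norm $\alpha\sqrt{1-\alpha^2}$, then use rotational invariance of the Haar distribution under orthogonal maps fixing $x$ to identify the conditional law of the angular component. The only cosmetic difference is in the final invariance step: the paper introduces a random Haar matrix $\Omega \in O(d-1)$ independent of $Q$ and argues that $(\alpha,\Omega Z)\overset{d}{=}(\alpha,Z)$ forces $Z\mid\alpha$ to coincide with the (manifestly uniform, $\alpha$-independent) law of $\Omega Z\mid\alpha$, whereas you fix an arbitrary deterministic $V\in O(d-1)$, obtain $(\alpha,VW)\overset{d}{=}(\alpha,W)$ for all $V$, and invoke uniqueness of the $O(d-1)$-invariant measure on the sphere---these are two equivalent standard ways to pin down the uniform law, and your explicit remark about needing joint (not just marginal) invariance to get independence is a nice touch.
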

\begin{proof}
We use the orthogonal decomposition $Q Q^\top x = \begin{bmatrix} x & X \end{bmatrix}\cdot\begin{bmatrix} x^\top \\ X^\top\end{bmatrix} Q Q^\top x = \|Q^\top x\|_2^2 \, x + X X^\top Q Q^\top x$, so that $\frac{P^\perp x}{\|P^\perp x\|_2} = \alpha \, x + X \wtilde Z$ where $\wtilde Z \defn -\frac{X^\top Q Q^\top x}{\|P^\perp x\|_2}$. The vectors $\alpha x$ and $X\wtilde Z$ are orthogonal. Taking norms and using that $\|X\wtilde Z\|_2 = \|\wtilde Z\|_2$, we obtain that $\|\wtilde Z\|_2 = \sqrt{1 - \alpha^2}$. Setting $Z = \frac{\wtilde Z}{\sqrt{1-\alpha^2}}$, we obtain that $\|Z\|_2 = 1$ and the decomposition $\frac{P^\perp x}{\|P^\perp x\|_2} = \alpha \, x + \sqrt{1-\alpha^2} \, XZ$. It remains to show that $Z$ is uniformly distributed on the sphere $\mathcal{S}^{d-1}$ and independent of $\alpha$. Let $\Omega \in \real^{(d-1)\times(d-1)}$ be a Haar matrix independent of $Q$. Rotational invariance in distribution of the partial Haar matrix $Q$ implies that $Q Q^\top \overset{\mathrm{d}}{=} \begin{bmatrix} x^\top \\ X^\top \end{bmatrix} Q Q^\top \begin{bmatrix} x & X \end{bmatrix}$, and furthermore,
\begin{align*}
    \begin{bmatrix} 1 & 0 \\ 0 & \Omega \end{bmatrix} \begin{bmatrix} x^\top \\ X^\top \end{bmatrix} Q Q^\top \begin{bmatrix} x & X \end{bmatrix} \begin{bmatrix} 1 & 0 \\ 0 & \Omega \end{bmatrix} \overset{\mathrm{d}}{=} \begin{bmatrix} x^\top \\ X^\top \end{bmatrix} Q Q^\top \begin{bmatrix} x & X \end{bmatrix}\,,
\end{align*}
i.e.,
\begin{align*} 
    \begin{bmatrix} x^\top Q Q^\top x & x^\top Q Q^\top X \Omega \\ \Omega X^\top QQ^\top x & \Omega X^\top Q Q^\top X \Omega\end{bmatrix}  \overset{\mathrm{d}}{=} \begin{bmatrix} x^\top Q Q^\top x & x^\top Q Q^\top X \\X^\top QQ^\top x & X^\top Q Q^\top X\end{bmatrix}\,.
\end{align*}
Consequently, the joint random variable $(\Omega X^\top Q Q^\top x, x^\top Q Q^\top x)$ has the same distribution as $(X^\top Q Q^\top x, x^\top Q Q^\top x)$, i.e., $(\alpha \sqrt{1-\alpha^2}\cdot\Omega Z, 1-\alpha^2) \overset{\mathrm{d}}{=} (\alpha \sqrt{1-\alpha^2} \cdot Z, 1-\alpha^2)$, which further implies that the distribution of $\Omega Z$ conditional on $\alpha$ is equal to the distribution of $Z$ conditional on $\alpha$. The vector $\Omega Z$ is uniformly distributed on the unit sphere and independent of $\alpha$, which implies the same for $Z$. 
\end{proof}

\subsubsection{Proof of Theorem~\ref{theoremlowerboundfirstorderobliviousrandom}}

According to Propositions~\ref{propositionfenchelduality} and~\ref{propositionsketchedfenchelduality}, there exist $g_{z^*} \in \partial f^*(z^*)$ and $g_{y^*} \in \partial f^*(y^*)$ such that $g_{z^*} + \frac{1}{\lambda} A A^\top z^* = 0$ and $g_{y^*} + \frac{1}{\lambda} A P_S A^\top y^* = 0$. Subtracting the previous two equalities and multiplying by $\lambda$, we obtain that $\lambda(g_{y^*} - g_{z^*}) + A P_S A^\top (y^* - z^*) = A P_S^\perp A^\top z^*$. Using the assumption that $f$ is $\gamma$-strongly convex, it follows from Lemma~\ref{lemmasymmetricmappingtogradients} that there exists a symmetric matrix $H$ such that $0 \prec H \preceq \frac{2}{\gamma} I$ and $H (y^* - z^*) = g_{y^*} - g_{z^*}$. Substituting the latter equality into the former, left-multiplying both sides of the resulting equation by $\frac{1}{\lambda} A^\top (\lambda H + A P_S A^\top)^{-1}$ and using the identities $x^* = -\lambda^{-1} A^\top z^*$ and $\xone = -\lambda^{-1} A^\top y^*$, we obtain that $\xone - x^* = A^\top (\lambda H + A P_S A^\top)^{-1} AP_S^\perp x^*$. Multiplying both sides by ${x^*}^\top P_S^\perp$ and using that $\langle P_S^\perp x^*, \xone - x^* \rangle \less \|P_S^\perp x^*\|_2 \|\xone - x^*\|$, it follows that
\begin{align*}
    \frac{\|\xone - x^*\|_2}{\|x^*\|_2} &\gre \frac{\|P_S^\perp x^*\|_2}{\|x^*\|_2} \big\|\big(\lambda H + A P_S  A^\top\big)^{-\frac{1}{2}} A \frac{P_S^\perp x^*}{\|P_S^\perp x^*\|_2}\big\|^2_2\\
    & \underset{(i)}{\gre} \frac{\|P_S^\perp x^*\|_2}{\|x^*\|_2} \big\|\underbrace{\big(2\lambda \gamma^{-1} I_n + A A^\top\big)^{-\frac{1}{2}} A}_{\defn\, M} \frac{P_S^\perp x^*}{\|P_S^\perp x^*\|_2}\big\|^2_2\,.
\end{align*}
\noindent In inequality (i), we used the fact that $A^\top (\lambda H + A P_S A^\top)^{-1} A \succeq M^\top M$. According to Lemma~\ref{lemmadecompositionrandomprojection}, we have that $\frac{P_S^\perp x^*}{\|P_S^\perp x^*\|_2} \overset{\textrm{d}}{=} \alpha \frac{x^*}{\|x^*\|_2} + \sqrt{1-\alpha^2} \cdot X Z$, where $\alpha \defn \|P_S^\perp \frac{x^*}{\|x^*\|_2}\|_2$, $X \in \real^{d \times (d-1)}$ is an orthonormal complement to $\frac{x^*}{\|x^*\|_2}$ and $Z \in \real^{d-1}$ is a random vector which is uniformly distributed onto the unit sphere in $\real^{d-1}$ and independent of $\alpha$. Consequently, we have that
\begin{align*}
    \mathbb{E}_S \frac{\|\xone - x^*\|_2}{\|x^*\|_2} &\gre \mathbb{E}_S \|\alpha^{\frac{3}{2}} M \frac{x^*}{\|x^*\|} + \sqrt{\alpha (1-\alpha^2)} MXZ \|_2^2\\
    &\underset{(i)}{=} \mathbb{E}_S\{\alpha^3\} \cdot \frac{\|Mx^*\|_2^2}{\|x^*\|_2^2} + \underbrace{\mathbb{E}_S\{\alpha (1-\alpha^2)\} \cdot \mathbb{E}\|MXZ\|_2^2}_{\gre 0}\\
    & \underset{(ii)}{\gre} \mathbb{E}_S\{\alpha^2\}^\frac{3}{2} \cdot \frac{\|Mx^*\|_2^2}{\|x^*\|_2^2}\,.
\end{align*}
In inequality (i), we used that the cross-term in the expansion of the square is equal to $0$ because of the independence of $\alpha$ and $Z$ and the fact that $\mathbb{E}Z = 0$. In inequality (ii), we used Jensen's inequality to obtain that $\mathbb{E}_S\{\alpha^2\}^{\frac{3}{2}} \less \mathbb{E}_S\{\alpha^3\}$. Using that $\mathbb{E}_S\{\alpha^2\} = 1-\frac{m}{d}$, $\|M\|_2^2 = \frac{\sigma_1^2}{\sigma_1^2 + \frac{2\lambda}{\gamma}}$ and taking the supremum over $f \in \mathcal{F}_{\gamma,\mu}$, we obtain that 
\begin{align*}
    \sup_{f \in \mathcal{F}_{\gamma,\mu}} \,\mathbb{E}_S\!\left\{\frac{\|\xone - x^*\|_2^2}{\|x^*\|_2^2}\right\} \underset{(i)}{\gre} \sup_{f \in \mathcal{F}_{\gamma,\mu}} \,\mathbb{E}_S\!\left\{\frac{\|\xone - x^*\|_2}{\|x^*\|_2}\right\}^2 \gre \left(1-\frac{m}{d}\right)^3 \cdot \frac{\sigma_1^4}{(\sigma_1^2 + \frac{2\lambda}{\gamma})^2}\,,
\end{align*}
where inequality (i) is a consequence of Jensen's inequality. This concludes the proof.
\qed

\section{Right-sketching and statistical optimality}

\subsection{Technical preliminaries}

Given a radius $\delta > 0$, we say that $\{x_1, \dots, x_M\} \subset \real^d$ is a $\delta$-packing of $\mathcal{B}_2^d$ in the metric $\|P_{AS}A \cdot\|_2$ if $x_j \in \mathcal{B}_2^d$ for any $j \in \{1,\dots,M\}$, and, $\|P_{AS}A (x_j - x_k)\|_2 > \delta$ for any $j \neq k$. We say that the $\delta$-packing $\{x_1, \dots, x_M\}$ is maximal if for any $x \in \mathcal{B}_2^d$, there exists $i \in \{1,\dots,M\}$ such that $\|P_{AS}A(x - x_i)\|_2 \less \delta$. We recall that we denote the rank of the matrix $A$ by $\rho$.

\begin{lemma}
\label{lemmapackingnumber}
Let $K \in \{1,\dots,\rho\}$ be any index such that $\sigma^2_{K} > \|P_{AS}^\perp A\|_2^2$, define $\delta_K \defn \sqrt{\sigma_{K}^2 - \|P_{AS}^\perp A\|_2^2}$ and let $\delta \in (0,\delta_K)$. Then, there exists a $\frac{\delta}{2}$-packing $\{x_j\}_{j=1}^M$ of $\mathcal{B}_2^d$ in the metric $\|P_{AS}A\cdot\|_{2}$ such that $\log M \gre K\cdot \log 2$, and such that $\|P_{AS}A (x_j-x_k)\|_2 \less 2\delta$ for all $j,k \in \{1,\dots,M\}$.
\end{lemma}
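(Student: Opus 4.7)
The plan is to exploit the top-$K$ right singular subspace of $A$, on which the map $v \mapsto P_{AS}Av$ is well-conditioned, and then construct the packing by a standard volume argument inside that subspace. Let $V_K \subset \real^d$ denote the span of the first $K$ right singular vectors of $A$. For any $v \in V_K$, the Pythagorean identity gives $\|P_{AS}Av\|_2^2 = \|Av\|_2^2 - \|P_{AS}^\perp Av\|_2^2$. Combining $\|Av\|_2 \gre \sigma_K \|v\|_2$ (since $v$ lies in the span of the top $K$ right singular vectors) with $\|P_{AS}^\perp Av\|_2 \less \|P_{AS}^\perp A\|_2 \|v\|_2$, we obtain the key two-sided estimate
\begin{align*}
    \delta_K \cdot \|v\|_2 \,\less\, \|P_{AS}Av\|_2 \,\less\, \sigma_1 \cdot \|v\|_2 \qquad \textrm{for all } v \in V_K\,.
\end{align*}
In particular, $\|P_{AS}A \cdot\|_2$ is a genuine norm on the $K$-dimensional subspace $V_K$.

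Next, I would define the set $B \defn \{v \in V_K \mid \|P_{AS}Av\|_2 \less \delta\}$, which is the $\delta$-ball in $V_K$ with respect to the norm $\|P_{AS}A \cdot\|_2$. The lower bound above forces $B \subset \{v \in V_K \mid \|v\|_2 \less \delta/\delta_K\}$, and since $\delta < \delta_K$ by assumption, this ensures $B \subset \mathcal{B}_2^d$. Now consider a \emph{maximal} $\frac{\delta}{2}$-packing $\{x_1,\dots,x_M\}$ of $B$ in the norm $\|P_{AS}A \cdot\|_2$. By maximality, the $\frac{\delta}{2}$-balls centered at the $x_i$'s cover $B$, so a volume comparison inside the $K$-dimensional space $V_K$ yields $\textrm{vol}(B) \less M \cdot \textrm{vol}(\frac{\delta}{2} B_{\|\cdot\|})$, where $B_{\|\cdot\|}$ is the unit ball of $\|P_{AS}A \cdot\|_2$ restricted to $V_K$. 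Since $B$ is a $\delta$-dilate of $B_{\|\cdot\|}$ and volumes in a $K$-dimensional space scale as the $K$-th power of the dilation factor, this simplifies to $M \gre 2^K$, i.e., $\log M \gre K \log 2$.

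It remains to verify the three conditions. The packing property $\|P_{AS}A(x_j - x_k)\|_2 > \delta/2$ for $j \neq k$ is built into the construction; the inclusion $x_j \in \mathcal{B}_2^d$ follows from $x_j \in B \subset \mathcal{B}_2^d$; and the upper bound $\|P_{AS}A(x_j-x_k)\|_2 \less 2\delta$ is immediate from the triangle inequality together with $\|P_{AS}Ax_j\|_2 \less \delta$. The only nontrivial technical point is the lower bound on the singular values of $P_{AS}A$ restricted to $V_K$, which is what makes $\|P_{AS}A\cdot\|_2$ a norm on $V_K$ and legitimizes the $K$-dimensional volume computation; everything else is a textbook maximal-packing argument.
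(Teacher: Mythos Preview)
Your argument is correct. The key two-sided estimate $\delta_K \|v\|_2 \less \|P_{AS}Av\|_2 \less \sigma_1 \|v\|_2$ on the top-$K$ right singular subspace of $A$ is exactly what is needed, and the volume argument on a maximal packing of the $\delta$-ball in the induced norm gives $M \gre 2^K$ cleanly.

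The paper takes a different route: it works in the coordinate system of the SVD of $P_{AS}A$ (not $A$), invokes the Weyl-type perturbation bound $|\sigma_j^2 - \what\sigma_j^2| \less \|P_{AS}^\perp A\|_2^2$ to compare the singular values of $A$ and $P_{AS}A$, and then constructs the packing explicitly as $x_a = \what V \what\Sigma^{-1}\wtilde\theta^a$ where $\{\theta^a\}$ is a standard $\frac{\delta}{2}$-packing of the Euclidean $\delta$-ball in $\real^K$. Your approach sidesteps the perturbation inequality entirely by working in the domain subspace $V_K$ and using the Pythagorean decomposition $\|P_{AS}Av\|_2^2 = \|Av\|_2^2 - \|P_{AS}^\perp Av\|_2^2$ directly; this is more elementary and arguably more transparent. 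The paper's construction, on the other hand, yields explicit formulas for the packing points in terms of the SVD of $P_{AS}A$, which can be convenient downstream. Both ultimately rest on the same $K$-dimensional volume comparison.
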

\begin{proof}
Let $\what U \what \Sigma \what V^\top$ be a thin SVD of $P_{AS}A$ and denote its rank by $\what \rho$. Similarly, let $U \Sigma V^\top$ be a thin SVD of $A$. Let $\what \sigma_1 \gre \dots \gre \what \sigma_{\what \rho} > 0$ (resp.~$\sigma_1 \gre \dots \gre \sigma_\rho > 0$) be the singular values of $P_{AS}A$ (resp.~$A$). It holds that $|\sigma^2_j - \what \sigma_j^2| \less \|P_{AS}^\perp A\|_2^2$. Consider the Euclidean ball $\mathcal{B}^K_2\!(\delta^2) \defn \left\{\theta \in \real^K \mid \sum_{j=1}^K \frac{\theta_j^2}{\delta^2} \less 1\right\}$. For any $\theta \in \mathbb{R}^K$, we have $\sum_{j=1}^K \frac{\theta_j^2}{\delta^2} \, \underset{(i)}{\gre} \, \sum_{j=1}^K \frac{\theta_j^2}{\sigma_j^2-\|P_{AS}^\perp A\|_2^2} \,\underset{(ii)}{\gre} \, \sum_{j=1}^K \frac{\theta_j^2}{\what \sigma_j^2}$, where inequality (i) follows from $\delta^2 \less \sigma_j^2-\|P_{AS}^\perp A\|_2^2$ for all $j=1,\dots,K$ and inequality (ii) follows from the fact that $|\sigma^2_j - \what \sigma_j^2| \less \|P_{AS}^\perp A\|_2^2$ for all $j=1,\dots,K$. Therefore, if $\theta \in \mathcal{B}^K_2\!(\delta^2)$, then $\wtilde \theta \defn [\theta_1, \dots, \theta_K, 0,\dots, 0]$ belongs to the ellipsoid $\mathcal{E}^{\what \rho}_{\what \sigma} \defn \left\{\overline \theta \in \real^{\what \rho} \mid \sum_{j=1}^{\what \rho} \frac{\overline{\theta}^2_j}{\what \sigma_j^2} \less 1\right\}$. Consequently, if $\{\theta^a\}_{a=1}^M$ is a $\frac{\delta}{2}$-packing of $\mathcal{B}^K_2\!(\delta^2)$ in the metric $\|\cdot\|_2$, then $\{\wtilde{\theta}^a\}_{a=1}^M$ is a $\frac{\delta}{2}$-packing of $\mathcal{E}^{\what \rho}_{\what \sigma}$ in the metric $\|\cdot\|_2$. It is well-known that there exists such a packing $\{\theta^a\}_{a=1}^M$ with cardinality $M \gre 2^K$. Hence, there exists a $\frac{\delta}{2}$-packing $\{\wtilde{\theta}^a\}_{a=1}^M$ of $\mathcal{E}^{\what \rho}_{\what \sigma}$ in the metric $\|\cdot\|_2$ with cardinality $M \gre 2^K$. 

We are now ready to construct the claimed packing of $\mathcal{B}_2^d$ in the metric $\|P_{AS}A\cdot\|_{2}$. For each $a=1,\dots,M$, we set $x_a \defn \what V \what \Sigma^{-1} \wtilde{\theta}^a$. Observe that $\|x_a\|_2^2 = \|\what \Sigma^{-1} \wtilde{\theta}^a \|_2^2 \less 1$, i.e., $x_a \in \mathcal{B}_2^d$. For $a \neq b$, we have $\|P_{AS}A (x_a-x_b)\|_2 = \|\what U \what \Sigma \what V^\top \what V \what \Sigma^{-1} (\wtilde{\theta}^a - \wtilde{\theta}^b)\|_2 = \|\wtilde{\theta}^a - \wtilde{\theta}^b\|_2 \gre \frac{\delta}{2}$. Furthermore, we have by construction that $\|\wtilde \theta^a\|_2 \less \delta$, which implies that $\|P_{AS}A(x_a - x_b)\|_2 \less 2 \delta$, and this concludes the proof.
\end{proof}

\subsection{Proof of Theorem~\ref{theoremlowerboundleastsquares}}

Our proof is based on the standard local Fano method. Given an arbitrary vector $\bar x \in \real^d$, we denote by $\mathbb{P}_{\bar{x}}$ the probability measure with respect to the distribution $\mathcal{N}(A\bar x, \frac{\sigma^2}{n}I_n)$ and by $\mathbb{E}_{\bar x}$ the corresponding expectation. Fix a sketch size $m \gre 1$, an embedding $S \in \real^{d\times m}$ and let us assume that $\|P_{AS}^\perp A\|_2^2 \less \frac{\sigma_{d_s+1}^2}{2}$. Using that $\|P_{AS}\|_2 \less 1$, we have $\mathfrak{M}_S \gre \inf_{\what x} \sup_{\|x_\text{pl}\|_2 \less 1} \mathbb{E}_{x_\text{pl}}\|P_{AS}A(\what x - x_\text{pl})\|_2^2$. It is thus sufficient to lower bound the latter quantity. We denote $b_S \defn S^\top A^\top b$, and we let $\what x \equiv \what x (b_S)$ be an estimator. We introduce the radius $\delta_m \defn \sqrt{\sigma_{d_s+1}^2 - \|P_{AS}^\perp\|_2^2}$. By assumption, we have $\delta_m \gre \frac{\sigma_{d_s+1}}{\sqrt{2}}$. Using Lemma~\ref{lemmapackingnumber} with $K=d_s+1$ and $\delta = \frac{\delta_m}{8}$, we obtain that there exists a maximal $\frac{\delta_m}{16}$-packing $\{x_1, \dots, x_M\}$ of $\mathcal{B}_2^d$ in the metric $\|P_{AS}A \cdot\|_2$ such that $\log M \gre d_s \cdot \log 2$, and, $\|P_{AS}A(x_j - x_k)\|_2 \less \frac{\delta_m}{4}$ for all $j,k$. Then, we have
\begin{align*}
    \sup_{\|x_\text{pl}\|_2 \less 1} \Exs_{x_\text{pl}} \|P_{AS} A(\what x - x_\text{pl})\|_2^2 \gre \frac{1}{M} \sum_{j=1}^M \Exs_{x_j} \|P_{AS} A(\what x - x_j)\|_2^2 \underset{(i)}{\gre} \frac{\delta_m^2}{32^2} \cdot \frac{1}{M} \sum_{j=1}^M \mathbb{P}_{x_j}\!\left(\|P_{AS} A (\what x - x_j)\|_2 \gre \frac{\delta_m}{32} \right)\,,
\end{align*}
where inequality (i) follows from Markov's inequality. We introduce the test function $\psi(\what x) \defn \argmin_{k=1,\dots,M} \|P_{AS} A(\what x - x_k)\|_2$. We claim that $\|P_{AS} A (\what x - x_j)\|_2 < \frac{\delta_m}{32}$ implies that $\psi(\what x) = j$. Indeed, suppose that $\|P_{AS} A (\what x - x_j)\|_2 < \frac{\delta_m}{32}$. Then, using the fact that $\{x_i\}_{i=1}^M$ is a $(\delta_m/16)$-packing, we have for any $k \neq j$ that $\|P_{AS} A (\what x - x_k)\|_2 \gre \|P_{AS} A (x_j - x_k)\|_2 - \|P_{AS} A (\what x - x_j)\|_2 > \delta_m/16 - \delta_m/32 = \delta_m/32$, i.e., $\psi(\what x) = j$. It follows that
\begin{align*}
    \sup_{\|x\|_\text{pl} \less 1} \mathbb{E}_{x_\text{pl}} \|P_{AS}A(\what x - x_\text{pl})\|_2^2 \gre \left(\frac{\delta}{32}\right)^2 \cdot \frac{1}{M} \sum_{j=1}^M \mathbb{P}_{x_j}(\psi(\what x) \neq j) \gre \left(\frac{\delta}{32}\right)^2 \cdot \inf_{\psi} \mathbb{E}_J\! \left\{\mathbb{P}_{x_j}(\psi(b_S) \neq j) \mid J = j \right\}\,,
\end{align*}
where $J$ is a uniformly distributed random variable in $\{1, \dots, M\}$. Fano's inequality states that the testing error in the latter right-hand side is lower bounded by the quantity $(1 - \frac{I(b_S; J) + \log 2}{\log M})$, where $I(b_S;J)$ denotes the mutual information between the random variables $b_S$ and $J$. Consequently, we obtain
\begin{align}
\label{eqnintermediatefano}
    \sup_{\|x\|_\text{pl} \less 1} \mathbb{E}_{x_\text{pl}} \|P_{AS}A(\what x - x_\text{pl})\|_2^2 \gre \left(\frac{\delta}{32}\right)^2 \cdot \left(1 - \frac{I(b_S; J) + \log 2}{\log M}\right)\,.
\end{align}
Introducing the mixture distribution $P_{x_J} \defn \frac{1}{M} \sum_{j=1}^M P_{x_j}$, denoting the Kullback-Leibler (KL) divergence between two distributions $P$ and $Q$ by $D_{\text{kl}}\!\left(P \,\|\,Q\right)$ and using the convexity of $Q \mapsto D_{\text{kl}}\!\left(P \,\|\,Q\right)$, we have 
\begin{align*}
    I(b_S; J) = \frac{1}{M} \sum_{j=1}^M D_{\text{kl}}\!\left(P_{x_j} \,\|\,P_{x_J}\right) \less \frac{1}{M^2} \sum_{j,k=1}^M D_{\text{kl}}\!\left(P_{x_j} \,\|\,P_{x_k}\right)\,.
\end{align*}
Using that the KL divergence between two Gaussian distributions $P_{x_j}$ and $P_{x_k}$ is equal to $\frac{n}{\sigma^2} \|P_{AS}A(x_j-x_k)\|_2^2$, it follows that
\begin{align*}
    I(b_S; J) \less \frac{1}{M^2} \sum_{j,k=1}^M \frac{n}{\sigma^2} \|P_{AS}A(x_j - x_k)\|_2^2 \less \frac{n \delta_m^2}{16 \sigma^2}\,,
\end{align*}
where in the last inequality, we used the fact that $\|P_{AS}A(x_j - x_k)\|_2 \less \frac{\delta_m}{4}$ for any $j,k$. Combining these observations, we obtain from~\eqref{eqnintermediatefano} that
\begin{align*}
    \sup_{\|x\|_\text{pl} \less 1} \mathbb{E}_{x_\text{pl}} \|P_{AS}A(\what x - x_\text{pl})\|_2^2 \gre \frac{\delta_m^2}{32^2} \left(1 - \frac{n\delta_m^2}{16 \sigma^2 d_s \log 2} - \frac{1}{d_s}\right)\,.
\end{align*}
Using that $\frac{\sigma_{d_s+1}^2}{2} \less \delta_m^2 \less \sigma_{d_s+1}^2$ and $\frac{\sigma^2 d_s}{n} \gre \sigma_{d_s+1}^2$, it follows that $\frac{n\delta_m^2}{16 \sigma^2 d_s \log 2} \less \frac{n\sigma_{d_s+1}^2}{16 \sigma^2 d_s \log 2} \less \frac{1}{16 \log 2}$, and thus, 
\begin{align*}
    \sup_{\|x\|_\text{pl} \less 1} \mathbb{E}_{x_\text{pl}} \|P_{AS}A(\what x - x_\text{pl})\|_2^2 \gre \frac{\sigma^2_{d_s+1}}{2 \cdot 32^2} \cdot \underbrace{\left(1-\frac{1}{16 \log 2} - \frac{1}{d_s}\right)}_{\gre \frac{1}{2}}\,,
\end{align*}
and this concludes the proof of the lower bound. 

\noindent \textbf{Polynomial decay.} Consider a polynomial decay $\sigma_j = j^{-\frac{1+\nu}{2}}$ for some $\nu > 0$. The scaling relation $\frac{\sigma^2 d_s}{n} \asymp \sigma_{d_s+1}^2$ yields that $d_s \asymp \left(\frac{n}{\sigma^2}\right)^{\frac{1}{2+\nu}}$. Fix a target rank $k \gre 1$ and a sketch size $m=2k$. According to Lemma~\ref{lemmaupperboundZf} and using the inequality $(a+b)^2 \less 2 a^2 + 2 b^2$ for any $a,b\in \real$, we have with probability at least $1-6e^{-k}$ that 
\begin{align*}
    \|P_{AS}^\perp A\|_2^2 \less 26^2 \cdot \left(\sigma_{k+1} + \frac{1}{\sqrt{k}} \sqrt{\sum_{j=k+1}^\rho \sigma_j^2}\right)^2 \less 2 \cdot 26^2 \cdot (\sigma_{k+1}^2 + \frac{1}{k} \sum_{j=k+1}^\rho \sigma_j^2)\,.
\end{align*}
We have $\frac{1}{k} \sum_{j=k+1}^\rho \sigma_j^2 \less \frac{1}{k} \int_{k}^{+\infty} u^{-(1+\nu)} \mathrm{d}u = \frac{1}{\nu k^{1+\nu}}$. It follows that $\|P_{AS}^\perp A\|_2^2 \less 2 \cdot 26^2 \cdot (\nu^{-1}+1) \cdot k^{-(1+\nu)}$ with probability at least $1-6e^{-k}$. Consequently, we have $\|P_{AS}^\perp A\|_2^2 \less \frac{\sigma_{d_s+1}^2}{2}$ if $(d_s+1)^{-(1+\nu)} \gre 2704 \cdot (\nu^{-1}+1) k^{-(1+\nu)}$, for which it is sufficient to have $m \gre \underbrace{(\ceil{2 \cdot (2704(\nu^{-1}+1))^{(1+\nu)}}+1)}_{\defn \, c_\nu^\text{poly}} \cdot d_s$.\\
\\
\noindent \textbf{Exponential decay.} Consider an exponential decay $\sigma_j = e^{-\frac{\nu j}{2}}$ for some $\nu > 0$. The scaling relation $\frac{\sigma^2 d_s}{n} \asymp \sigma_{d_s+1}^2$ yields that $d_s \asymp \frac{1}{\nu} \cdot \log(n/\sigma^2)$. Fix a target rank $k \gre 1$ and a sketch size $m=2k$. We have with probability at least $1-6e^{-k}$ that $\|P_{AS}^\perp A\|_2^2 \less 2 \cdot 26^2 \cdot (\sigma_{k+1}^2 + \frac{1}{k} \sum_{j=k+1}^\rho \sigma_j^2) \less \frac{2704}{\nu}\cdot e^{-(k+1)\nu}$. Consequently, $\|P_{AS}^\perp A\|_2^2 \less \frac{\sigma_{d_s+1}^2}{2}$ if $\frac{5408}{\nu} e^{-(k+1)\nu} \less e^{-(d_s+1)\nu}$, i.e., $m \gre 2 \cdot d_s + \frac{2}{\nu} \log(5408/\nu)$. Assuming that $\log(n/\sigma^2) \gre \log(5408/\nu)$ and using that $d_s \asymp \frac{1}{\nu}\cdot \log(n/\sigma^2)$, we get that $\|P_{AS}^\perp A\|_2^2 \less \frac{\sigma_{d_s+1}^2}{2}$ with probability at least $1-6e^{-d_s}$ for $m \gre 4d_s$.

\subsection{Proof of Corollary~\ref{corollaryoptimalityzeroorder}}

Through a simple calculation, we obtain that the risk of $\xzero$, as $\lambda \to 0$, satisfies the bias-variance decomposition
\begin{align}
\label{eqnbiasvariancedecomposition}
    \lim_{\lambda \to 0} \mathfrak{R}(\xzero) = \underbrace{\mathbb{E}\|P_{AS}w\|_2^2}_{=\,m \sigma^2 / n} + \sup_{\|x_\text{pl}\|_2 \less 1} \|P_{AS}^\perp A x_\text{pl}\|_2^2 = \frac{\sigma^2 m }{n} + \|P_{AS}^\perp A\|_2^2\,.
\end{align}
According to~\eqref{eqnupperboundresidualgaussian}, the residual error verifies $\|P_{AS}^\perp A\|_2^2 \lesssim R^2_{m/2}(A)$, so that
\begin{align}
    \mathfrak{M}_S = \inf_{\what x} \mathfrak{R}(\what x) \less \inf_m \lim_{\lambda \to 0} \mathfrak{R}(\xzero) \lesssim \inf_m \left\{ \frac{\sigma^2 m}{n} + R_{m/2}^2(A) \right\}\,.
\end{align}
Consequently, the sketch size $m$ controls this bias-variance trade-off: the larger the sketch size $m$, the larger the variance term $\frac{\sigma^2 m}{n}$ and the smaller the bias $R^2_{m/2}(A)$, and vice-versa. Furthermore, according to Theorem~\ref{theoremlowerboundleastsquares}, under the event $\|P_{AS}^\perp A\|_2^2 \less \frac{\sigma_{d_s+1}^2}{2}$, it holds that $\mathfrak{M}_S \gre c_0 \cdot \sigma_{d_s+1}^2$.\\
\\
\noindent \textbf{Polynomial decay.} Consider a polynomial decay $\sigma_j = j^{-\frac{1+\nu}{2}}$ for some $\nu > 0$. Picking $m= c^\text{poly}_\nu \cdot d_s$ and following the same steps as in the proof of Theorem~\ref{theoremlowerboundleastsquares}, we obtain that $\|P_{AS}^\perp A\|_2^2 \less \frac{\sigma_{d_s+1}^2}{2}$ with probability at least $1-6e^{-d_s}$. Plugging-in this value of $m$ and this bound on the residual error $\|P_{AS}^\perp A\|_2^2$ into~\eqref{eqnbiasvariancedecomposition}, it follows that
\begin{align*}
    \lim_{\lambda \to 0}\mathfrak{R}(\xzero) \less c^\text{poly}_\nu \cdot \frac{\sigma^2 d_s}{n} + \frac{\sigma_{d_s+1}^2}{2} \less (c^\text{poly}_\nu + \frac{1}{2}) \cdot \frac{\sigma^2 d_s}{n}\,.
\end{align*}
with probability at least $1-6e^{-d_s}$.\\
\\
\noindent \textbf{Exponential decay.} Consider an exponential decay $\sigma_j = e^{-\frac{\nu j}{2}}$ for some $\nu > 0$. Picking $m=4 d_s$ and following the same steps as in the proof of Theorem~\ref{theoremlowerboundleastsquares}, we obtain that $\|P_{AS}^\perp A\|_2^2 \less \frac{\sigma_{d_s+1}^2}{2}$ with probability at least $1-6e^{-d_s}$. Plugging-in this value of $m$ and this bound on the residual error $\|P_{AS}^\perp A\|_2^2$ into~\eqref{eqnbiasvariancedecomposition}, it follows that
\begin{align*}
    \lim_{\lambda \to 0}\mathfrak{R}(\xzero) \less 4 \cdot \frac{\sigma^2 d_s}{n} + \frac{\sigma_{d_s+1}^2}{2} \less 5 \cdot \frac{\sigma^2 d_s}{n}\,.
\end{align*}
with probability at least $1-6e^{-d_s}$.

\section{Proofs of Fenchel duality results}

\subsection{Proof of Proposition~\ref{propositionfenchelduality}}
The primal objective function is strongly convex, so that it admits a unique minimizer $x^*$. According to Corollary 31.2.1 in~\cite{rockafellar2015convex} whose assumptions are trivially satisfied, strong duality holds and there exists a dual solution $z^*$. According to Theorem 31.3 in~\cite{rockafellar2015convex}, we have the KKT conditions $x^* = -A^\top z^* / \lambda$ and $z^* = \nabla f(Ax^*)$. In particular, the relation $z^* = \nabla f(Ax^*)$ along with the uniqueness of $x^*$ imply that $z^*$ is unique. If the function $f$ is strictly convex, according to Theorem 26.5 in~\cite{rockafellar2015convex}, the mapping $\nabla f$ is one-to-one from $\real^n$ to the interior of the domain of $f^*$. Consequently, $\nabla f(Ax^*) \in \textrm{int dom }f^*$, i.e., $z^* \in \textrm{int dom }f^*$.
\qed 

\subsection{Proof of Proposition~\ref{propositionsketchedfenchelduality}}

According to Corollary~31.2.1 in~\cite{rockafellar2015convex} whose assumptions are trivially satisfied, there exists a primal solution $\alpha^* \in \real^*$, strong duality holds, and there exists a sketched dual solution $y^* \in \text{dom}\,f^*$. According to Theorem~31.3 in~\cite{rockafellar2015convex}, we have the KKT conditions $S^\top S\alpha^* = -\frac{S^\top A^\top y^*}{\lambda}$ and $y^* = \nabla f(AS\alpha^*)$. If the function $f$ is strictly convex, according to Theorem 26.5 in~\cite{rockafellar2015convex}, the mapping $\nabla f$ is one-to-one from $\real^n$ to the interior of the domain of $f^*$. Consequently, $\nabla f(AS\alpha^*) \in \textrm{int dom }f^*$, i.e., $y^* \in \textrm{int dom }f^*$.
\qed

\section*{Acknowledgment}

\noindent This work was partially supported by the National Science Foundation under grants IIS-1838179, ECCS-2037304, Facebook Research, Adobe Research and Stanford SystemX Alliance.

\ifCLASSOPTIONcaptionsoff
  \newpage
\fi

\bibliographystyle{IEEEtran}
\bibliography{main}

\end{document}